\PassOptionsToPackage{capitalise}{cleveref}
\documentclass[sigconf]{acmart}

\def\BibTeX{{\rm B\kern-.05em{\sc i\kern-.025em b}\kern-.08emT\kern-.1667em\lower.7ex\hbox{E}\kern-.125emX}}
\usepackage{nicefrac}
\usepackage{siunitx}
\usepackage{array,framed}
\usepackage{booktabs}
\usepackage{
  color,
  float,
  epsfig,
  wrapfig,
  graphics,
  graphicx,
  subcaption
}

\usepackage{textcomp}
\usepackage{setspace}
\floatname{algorithm}{Algorithm}
\usepackage{latexsym,fancyhdr,url}
\usepackage{enumerate}
\usepackage{graphics}
\usepackage{xparse}
\usepackage{xspace}
\usepackage{multirow}
\usepackage{csvsimple}
\usepackage{balance}

\usepackage{xcolor}
\usepackage{fix-cm}
\usepackage[table]{xcolor}

\usepackage{tabularx}
\usepackage{nicefrac}
\usepackage{siunitx}
\usepackage{array,framed}
\usepackage{booktabs}
\usepackage{
  color,
  float,
  epsfig,
  wrapfig,
  graphics,
  graphicx,
}
\usepackage{setspace}
\usepackage{latexsym,fancyhdr,url}
\usepackage{enumerate}

\usepackage{
  tikz,
  pgfplots,
  pgfplotstable
}
\usepackage{hyperref}

\usepackage{amsmath,amssymb,mathtools}

\usepackage{algorithm}
\usepackage[noend]{algpseudocode}
\algrenewcommand\algorithmiccomment[1]{\hfill{\scriptsize// #1}}
\usepackage{float}
\usepackage{dsfont}
\usepackage{hyperref}
\usepackage{hyperxmp}
\usepackage{natbib}
\usepackage{wrapfig}
\usepackage{multirow}

\copyrightyear{2025}
\acmYear{2025}
\setcopyright{cc}
\setcctype{by}
\acmConference[CCS '25]{Proceedings of the 2025 ACM SIGSAC Conference on Computer and Communications Security}{October 13--17, 2025}{Taipei, Taiwan}
\acmBooktitle{Proceedings of the 2025 ACM SIGSAC Conference on Computer and Communications Security (CCS '25), October 13--17, 2025, Taipei, Taiwan}\acmDOI{10.1145/3719027.3765151}
\acmISBN{979-8-4007-1525-9/2025/10}

\usepackage{booktabs}
\makeatletter
\makeatother
\usepackage{enumerate}
\usepackage{soul}

\usepackage{bigints}
\usepackage{hyperref}

\usepackage[toc, page]{appendix}
\newcommand{\Adj}{\text{Adj}}

\newcommand{\Proba}{\mathbb P}
\newcommand{\RN}[1]{
  \textup{\uppercase\expandafter{\romannumeral#1}}
}
\usepackage{enumitem}
\fancyfoot{}

\usepackage{adjustbox,multirow}

\newcommand{\DP}{differential privacy }

\usepackage[thinc]{esdiff}
\usepackage{courier}
\usepackage{amsthm}
\usepackage{pifont}
\usepackage{graphicx}

\usepackage{caption}
\usepackage{subcaption}

\usepackage{pifont}
\usepackage{xcolor}

\newcommand{\xmark}{\textcolor{red}{\xmark}}

\usetikzlibrary{
  shapes.geometric,
  arrows,
  external,
  pgfplots.groupplots,
  matrix
}

\pgfplotsset{compat=1.9}

\DeclareMathAlphabet{\mathcal}{OMS}{cmsy}{m}{n}

\DeclareGraphicsExtensions{ 
    .png,.PNG, 
    .pdf,.PDF, 
    .jpg,.mps,.jpeg,.jbig2,.jb2,.JPG,.JPEG,.JBIG2,.JB2}

\usepackage{xparse}
\newcommand{\bnm}{\begin{newmath}}
\newcommand{\enm}{\end{newmath}}

\newcommand{\bea}{\begin{eqnarray*}}%
\newcommand{\eea}{\end{eqnarray*}}%

\newcommand{\bne}{\begin{newequation}}
\newcommand{\ene}{\end{newequation}}

\newcommand{\bal}{\begin{newalign}}
\newcommand{\eal}{\end{newalign}}

\newenvironment{newalign}{\begin{align}%
\setlength{\abovedisplayskip}{4pt}%
\setlength{\belowdisplayskip}{4pt}%
\setlength{\abovedisplayshortskip}{6pt}%
\setlength{\belowdisplayshortskip}{6pt} }{\end{align}}

\newenvironment{newmath}{\begin{displaymath}%
\setlength{\abovedisplayskip}{4pt}%
\setlength{\belowdisplayskip}{4pt}%
\setlength{\abovedisplayshortskip}{6pt}%
\setlength{\belowdisplayshortskip}{6pt} }{\end{displaymath}}

\newenvironment{newequation}{\begin{equation}%
\setlength{\abovedisplayskip}{4pt}%
\setlength{\belowdisplayskip}{4pt}%
\setlength{\abovedisplayshortskip}{6pt}%
\setlength{\belowdisplayshortskip}{6pt} }{\end{equation}}

\newcounter{ctr}

%%%%%%%%%%%%%%%%%%%%%%%%%%%%%%%%%%%%%%%%%%%%%%%%%%%%%%%%%%%%%%%%%%%%%%%%%%%%%%
%
% Figure and table macros
%

\newcounter{mytable}
\def\mytable{\begin{centering}\refstepcounter{mytable}}
\def\endmytable{\end{centering}}

\newcounter{myfig}
\def\myfig{\begin{centering}\refstepcounter{myfig}}
\def\endmyfig{\end{centering}}

\newlength{\saveparindent}
\setlength{\saveparindent}{\parindent}
\newlength{\saveparskip}
\setlength{\saveparskip}{\parskip}

\newcommand{\E}{{\rm I\kern-.3em E}}
\newcommand{\Prob}[1]{{\Pr\left[\,{#1}\,\right]}}

% \newcommand{\eqref}[1]{\mbox{Equation~(\ref{#1})}}
% Have to use \renewcommand because exists already in amsmath
\renewcommand{\eqref}[1]{\mbox{Equation~(\ref{#1})}}

%\newcommand{\getm}{{\:{\leftarrow{\hspace*{-3pt}\raisebox{.75pt}{$\scriptscriptstyle \mdist$}}}\:}}

%\newcommand{\getk}{{\:{\leftarrow{\hspace*{-3pt}\raisebox{.75pt}{$\scriptscriptstyle \kdist$}}}\:}}

%%%%%%%%%%%%%%%%%%%%%%%%%%%%%%%%%%%%%%%%%%%%%%%%%%%%%%%%%%%%%%%%%%%%%%%%%%%%%%%%
%%%% Fonts and symbols
%%%%%%%%%%%%%%%%%%%%%%%%%%%%%%%%%%%%%%%%%%%%%%%%%%%%%%%%%%%%%%%%%%%%%%%%%%%%%%%%

%%%%%%%%%%%%%%%%%%%%%%%%%%%%%%%%%%%%%%%%%%%%%%%%%%%%%%%%%%%%%%%%%%%%%%%%%%%%%%%%
%%%%%%%%%%%%%%%%%%%%%%%%%%%%%%%% NEW COMMANDS %%%%%%%%%%%%%%%%%%%%%%%%%%%%%%%%%%
%%%%%%%%%%%%%%%%%%%%%%%%%%%%%%%%%%%%%%%%%%%%%%%%%%%%%%%%%%%%%%%%%%%%%%%%%%%%%%%%

  % TODO: fix, same as \pwdist

%\def \kdist {{\kappa}}
%\def \mdist {{\mu}}
%\def \ddist {{\delta}}

%\newcommand{\encodeis}{{\textsf{encode}_{\textrm{is}}}}
%\newcommand{\decodeis}{{\textsf{decode}_{\textrm{is}}}}

 % th

%\def \encHE {{\sf{enc}^{HE}}}
%\def \decHE {{\sf{dec}^{HE}}}
%\def \encHEt {{\sf{enc}^{HE2}}}
%\def \decHEt {{\sf{dec}^{HE2}}}

\def \part {part}

 % 
 % C
 % D

% \renewcommand{\paragraph}[1]{\vspace*{6pt}\noindent\textbf{#1}\;}
% \newcommand{\para}[1]{\vspace*{6pt}\noindent\textbf{#1}\;}

%
% New theorem types: (Already in CCS template)
%

%\newtheorem{definition}{Definition}

% \newtheorem{theorem}{Theorem}[section]
% \newtheorem{lemma}{Lemma}[section]
% \newtheorem{corollary}{Corollary}[section]
% \newtheorem{proposition}{Proposition}
% \newtheorem{example}{Example}

%
% Definitions:
%
\def \blackslug{\hbox{\hskip 1pt \vrule width 4pt height 8pt
    depth 1.5pt \hskip 1pt}}
\def \qed{\quad\blackslug\lower 8.5pt\null\par}
% In-line QED, for ending a proof with a $$ formula
% In-line QED, for ending a proof with a $$ formula

\newcounter{mynote}[section]

\newcommand\ignore[1]{}

% \newcommand{\new}[1]{\textcolor{red}{\sf #1}}

%% ------------------------- Rahul -----------------------
\newcounter{rcnote}[section]

\newcounter{mrnote}[section]

\newcounter{fknote}[section]

\newcounter{anote}[section]

\DeclareMathSymbol{\mlq}{\mathord}{operators}{``}
\DeclareMathSymbol{\mrq}{\mathord}{operators}{`'}

\newcommand{\rhf}[2]{R_{f, \gamma}}

% \newcommand{\inputm}{{\tilde{m}}}

 % Cache Size
 %waitlist size

%^{(e)}}}

\DeclareDocumentCommand{\edist}{o o}{
  \ensuremath{
    \IfNoValueTF{#1}{{d}}{{\sf d}(#1,#2)}
  }
}

%\newcommand{\maxlambda}{\lambda^*}
%\newcommand{\maxfuzzlambda}{\tilde{\lambda}^*}

 % to use a command both in math mode and non-math mode

%\newcommand{\errorprob}[2]{\mathcmd{\tau_{#1}({#2})}}

\newcommand{\olrk}[1]{\ifx\nursymbol#1\else\!\!\mskip4.5mu plus 0.5mu\left(\mskip0.5mu plus0.5mu #1\mskip1.5mu plus0.5mu \right)\fi}

\NewDocumentCommand{\indseq}{ O{1} O{r} }{{#1}\ldots {#2}}

%%% Local Variables:
%%% mode: latex
%%% TeX-master: "main"
%%% End:

\setlength{\belowcaptionskip}{-10pt} 
\setlength{\footskip}{30pt}
\setlength{\abovecaptionskip}{5pt plus 3pt minus 2pt} 

\usepackage{comment}

\newtheorem*{remark}{Remark}

\begin{document}
\fancyhead{}
\def\thetitle{\textsf{PLRV-O}: Advancing Differentially Private Deep Learning via Privacy Loss Random Variable Optimization}
\title{\thetitle}

\date{}

\author{Qin Yang}
\authornote{Both authors contributed equally to this work.}
\affiliation{
  \institution{University of Connecticut}
  \city{Storrs}
  \country{USA}
}

\author{Nicholas Stout}
\authornotemark[1]
\affiliation{
  \institution{Iowa State University}
  \city{Ames}
  \country{USA}
}

\author{Meisam Mohammady}
\affiliation{
  \institution{Iowa State University}
  \city{Ames}
  \country{USA}
}

\author{Han Wang}
\affiliation{
  \institution{The University of Kansas}
  \city{Lawrence}
  \country{USA}
}

\author{Ayesha Samreen}
\affiliation{
  \institution{Iowa State University}
  \city{Ames}
  \country{USA}
}

\author{Christopher J. Quinn}
\affiliation{
  \institution{Iowa State University}
  \city{Ames}
  \country{USA}
}

\author{Yan Yan}
\affiliation{
  \institution{University of Illinois at Chicago}
  \city{Chicago}
  \country{USA}
}

\author{Ashish Kundu}
\affiliation{
  \institution{Cisco Research}
  \city{San Jose}
  \country{USA}
}

\author{Yuan Hong}
\affiliation{
  \institution{University of Connecticut}
  \city{Storrs}
  \country{USA}
}

\begin{abstract}
Differentially Private Stochastic Gradient Descent (DP-SGD) is a standard method for enforcing privacy in deep learning, typically using the Gaussian mechanism to perturb gradient updates. However, conventional mechanisms such as Gaussian and Laplacian noise are parameterized only by variance or scale. This single degree of freedom ties the magnitude of noise directly to both privacy loss and utility degradation, preventing independent control of these two factors. The problem becomes more pronounced when the number of composition rounds $T$ and batch size $B$ vary across tasks, as these variations induce task-dependent shifts in the privacy–utility trade-off, where small changes in noise parameters can disproportionately affect model accuracy. To address this limitation, we introduce \textsf{PLRV-O}, a framework that defines a broad search space of parameterized DP-SGD noise distributions, where privacy loss \textit{moments} are tightly characterized yet can be optimized more independently with respect to utility loss. This formulation enables systematic adaptation of noise to task-specific requirements, including (i) model size, (ii) training duration, (iii) batch sampling strategies, and (iv) clipping thresholds under both training and fine-tuning settings. Empirical results demonstrate that \textsf{PLRV-O} substantially improves utility under strict privacy constraints. On CIFAR-10, a fine-tuned ViT achieves 94.03\% accuracy at $\epsilon \approx 0.5$, compared to 83.93\% with Gaussian noise. On SST-2, RoBERTa-large reaches 92.20\% accuracy at $\epsilon \approx 0.2$, versus 50.25\% with Gaussian.\footnote{Source code is available at \url{https://github.com/datasec-lab/plrvo}. This is the full version of the paper to appear in CCS'25.}
\end{abstract}

\ccsdesc[500]{Security and privacy}
\ccsdesc[500]{Computing methodologies~Machine learning}

\keywords{Differential Privacy, Deep Learning, Mechanism Design, Randomization, Optimization,  Privately Fine-tuning Vision Transformer and Language Models}

\maketitle

\section{Introduction}
\label{sec:intro}
Training and fine-tuning deep learning models pose significant privacy risks. During training, adversaries can exploit gradient updates, model outputs, and pre-trained parameters to reconstruct sensitive data, making privacy a critical concern~\cite{10.5555/3454287.3455610}. Moreover, fine-tuning large pre-trained language models, such as BERT~\cite{liu2019roberta} and GPT families~\cite{yu2021differentially}, is essential for achieving state-of-the-art performance in various tasks, including sentence classification~\cite{liu2019roberta}, text generation~\cite{novikova2017e2e}, and code generation~\cite{wang2018glue}. Data reconstruction attacks, such as Updates-Leak~\cite{247690} and Inverting Gradients~\cite{10.5555/3495724.3497145}, achieve success rates of up to 80\%, while the Dynamic Memory Model Inversion Attack (DMMIA)~\cite{10.1145/3581783.3612072} enhances realism, reaching 93.54\% on FaceScrub; in the same line of work, Carlini et al.~\cite{carlini2019} demonstrate that Large Language Models (LLMs) can memorize and regurgitate up to 1.6\% of training tokens verbatim. Additionally, prompt-based techniques further expose LLMs to data extraction attacks, increasing the risk of recovering individual samples.

\begin{table*}[h!]
\centering
\small
\caption{Comparison of Gaussian, Laplace, and \textsf{PLRV-O}. \textbf{Gaussian} cannot reach high accuracy in high-privacy regimes (strong DP). \textbf{Laplace} cannot support subsampling accounting for privacy amplification, large models, or $\ell_2$ clipping. Both mechanisms also lack the ability to optimize their randomization for the best privacy-utility tradeoff. \textbf{Large Moments} indicates support for higher-order R\'enyi DP moments, which improves utility in tight privacy accounting.}
\resizebox{\textwidth}{!}{
\begin{tabular}{|l|c|c|c|c|c|c|c|}
\hline
\textbf{Mechanisms} & \textbf{Optimizing Noise} & \textbf{High Acc@Strong DP} & \textbf{Large Moments} & \textbf{Subsampling} & \textbf{Small Model} & \textbf{Large Model} & $\boldsymbol{\ell_2}$ \textbf{Clip.} \\ \hline
Gaussian     & {\textcolor{red}{\ding{55}}} & {\textcolor{red}{\ding{55}}} & {\textcolor{red}{\ding{55}}} & {\textcolor{green}{\ding{51}}} & {\textcolor{green}{\ding{51}}} & {\textcolor{green}{\ding{51}}} & {\textcolor{green}{\ding{51}}} \\ \hline
Laplace      & {\textcolor{red}{\ding{55}}} & {\textcolor{red}{\ding{55}}} & {\textcolor{red}{\ding{55}}} & {\textcolor{red}{\ding{55}}}    & {\textcolor{green}{\ding{51}}} & {\textcolor{red}{\ding{55}}}   & {\textcolor{red}{\ding{55}}}   \\ \hline
\textsf{PLRV-O} (ours) & {\textcolor{green}{\ding{51}}} & {\textcolor{green}{\ding{51}}} & {\textcolor{green}{\ding{51}}} & {\textcolor{green}{\ding{51}}}  & {\textcolor{green}{\ding{51}}} & {\textcolor{green}{\ding{51}}} & {\textcolor{green}{\ding{51}}} \\ \hline
\end{tabular}
}
\label{Table:dp-dl}
\vspace{-0.1in}
\end{table*}
\normalsize

To mitigate privacy risks during model training and fine-tuning, Differentially Private Stochastic Gradient Descent (DP-SGD)~\cite{abadi2016deep, cherubin2024closed, du2023dp, 10.5555/3495724.3497145} is widely used. DP-SGD applies the Gaussian mechanism~\cite{dwork2014algorithmic}, which preserves privacy by clipping gradients and adding Gaussian noise, protecting the membership status of individual samples in the model parameters. Additionally, DP-SGD bounds cumulative privacy leakage over thousands of training or fine-tuning steps by tightly analyzing the Privacy Loss Random Variable (PLRV) of the Gaussian mechanism. However, DP-SGD often incurs significant accuracy loss, primarily due to the magnitude of the Gaussian noise, or ``noise multiplier'' ($\sigma$).\footnote{In DP-SGD, gradient clipping primarily impacts training stability, as overly aggressive thresholds can render models untrainable~\cite{abadi2016deep, balle2018improving, gopi2021numerical}.}

\vspace{0.05in}

\noindent\textbf{Motivation}. Existing DP-SGD solutions face significant limitations, and it is desirable to design a more sophisticated class of techniques.

First, DP-SGD using Gaussian noise offers limited flexibility in the achievable noise multiplier due to fundamental constraints. Gaussian noise faces intrinsic challenges in high-privacy regimes due to its $\sigma = \Theta(1/\epsilon)$ behavior~\cite{balle2018improving}.
Also, Mironov et al.~\cite{mironov2019r} and Sander et al.~\cite{mironov2019r, 10.5555/3618408.3619650} identify the ``privacy wall'' phenomenon, where Gaussian noise fails to efficiently leverage higher DP moment orders ($\lambda$), leading to overestimation in privacy accounting~\cite{balle2018improving}. This occurs because the PLRV of the Gaussian mechanism causes the Moments Accounting Function (MAF) to grow rapidly with $\lambda$ due to its light-tailed behavior. Interestingly, Balle et al.~\cite{balle2018improving} show that although $\sigma = \Theta(1/\epsilon)$ may help in low-privacy regimes (e.g., $\epsilon > 1$), the effect does not persist: the noise scale instead decreases proportionally to $1/\sqrt{\epsilon}$, leading to continued overcharging in privacy accounting even when looser privacy guarantees are acceptable.

Second, the Laplace mechanism~\cite{holohan2018bounded} offers a PLRV with an exponentially heavier tail, allowing it to more efficiently leverage higher DP moment orders ($\lambda$) to tighten the accounted privacy budget.  
    \footnote{Earlier work on query answering under DP shows that the Laplace mechanism can preserve accuracy better than the Gaussian mechanism in stricter privacy regimes, particularly in low-dimensional settings~\cite{geng2015optimal,mohammady2020r2dp}.} 
However, this advantage has not been fully realized in practice due to the challenges introduced by $\ell_1$-norm clipping in high-dimensional spaces (\emph{DP-SGD with the Laplace mechanism requires $\ell_1$-norm clipping by default}). Specifically, for an $n$-dimensional gradient vector, the $\ell_1$ norm can be up to $\sqrt{n}$ times larger than the $\ell_2$ norm. As a result, setting a clipping threshold $C$ based on $\ell_2$-norm bounds would translate to an effective threshold of approximately $C \times \sqrt{n}$ in $\ell_1$-norm, rendering the resulting DP guarantees unbearably loose. On the other hand, using a smaller threshold to clip directly in $\ell_1$-norm quickly collapses the gradient space, severely harming model trainability and often making training impossible.

Third, the privacy-utility trade-off is heavily influenced by task-specific factors such as the number of epochs ($E$) and batch size ($B$), which directly affect privacy accounting through their role in the moments accounting function and subsampling amplification. Whether a task involves training or fine-tuning also shapes the trade-off space. For instance, DP-SGD with a smaller clipping threshold ($C \approx 0.1$) is often sufficient for fine-tuning large models, whereas training typically requires a larger clipping threshold. Each case leads to a radically different privacy-utility trade-off. Effectively leveraging these variations within the noise Probability Density Function (PDF) demands a more versatile class of DP-SGD noise mechanisms, as opposed to traditional single-parameter approaches like Gaussian and Laplacian noise.

\vspace{0.05in}

\noindent\textbf{The \textsf{PLRV-O} Approach}. To our knowledge, this work is the first to introduce the Privacy Loss Random Variable Optimization (\textsf{PLRV-O}) framework for noise design in DP-SGD. Unlike conventional single-parameter mechanisms, \textsf{PLRV-O} defines a structured search space of \emph{randomized-scale Laplace} distributions, where privacy loss moments are tightly characterized and directly optimizable through the parameters of the scale distribution. This construction enables systematic exploration of diverse privacy–utility trade-offs under rigorous DP guarantees. Table~\ref{Table:dp-dl} provides a comparison of \textsf{PLRV-O} with existing DP noise mechanisms used in DP-SGD.

\vspace{0.05in}

\noindent\textbf{\textsf{PLRV-O} for Vision and Language Models}. \textsf{PLRV-O} is a generalized DP framework applicable to both model training and fine-tuning. In our evaluations, we instantiate it on vision tasks (training from scratch) and on large language models (fine-tuning). By replacing Gaussian noise in state-of-the-art methods with optimized noise distributions from \textsf{PLRV-O}, we enable more efficient use of the privacy budget and consistently improve model utility across diverse architectures, including ResNet, ViT, BERT, and RoBERTa~\cite{liu2019roberta, devlin2018bert, targ2016resnet}. Orthogonal techniques such as Ghost Clipping~\cite{li2021large} can also be combined with \textsf{PLRV-O}, though their integration lies beyond the scope of this work. Therefore, our primary contributions are summarized below. 
\begin{enumerate}[label=(\arabic*)]
\item We propose \textsf{PLRV-O}, the first DP-SGD framework with an optimizable non-Gaussian noise mechanism. It defines a search space of PDFs with randomized scale parameters governed by a Gamma distribution, where varying the shape and scale enables systematic exploration of privacy–utility trade-offs. Privacy guarantees are tightly accounted for, ensuring rigorous bounds while adapting to task-specific objectives.

\item We introduce a theoretical advancement by applying Schur-convexity and majorization theory~\cite{schur1923uber} to extend the Laplace family to the $\ell_2$ metric, avoiding $\ell_1$ clipping. This leads to a tight upper bound for the moments accountant, enabling more accurate privacy accounting and faster DP training.

\item We conduct comprehensive experiments across a diverse range of tasks in both computer vision (CV) and natural language processing (NLP), covering both training and fine-tuning, to validate our \textsf{PLRV-O}  mechanism. 
The results demonstrate \textsf{PLRV-O}'s effectiveness across various tasks, showing improved privacy-utility trade-offs over SOTA methods, especially in the stronger privacy regime ($0 < \epsilon < 1$). 
\end{enumerate}

\section{Preliminaries}
\label{sec:preliminaries}

We review some background on \DP and DP-SGD for the theoretical foundations of the \textsf{PLRV-O} framework. 

\subsection{Differential Privacy}\label{section: DP def}
\label{def: differential privacy original}

Differential privacy protects data privacy by introducing randomness into the output, 
either by injecting explicit noise or by leveraging inherent stochasticity in the mechanism or inputs
\footnote{Developing a unified framework to capture all randomness sources and translate them into differential privacy guarantees remains an open challenge~\cite{mironov2019r}. Here, we assume a fixed probability space \((\Omega, \mathcal{F}, P)\), where \(\Omega\) is the sample space, \(\mathcal{F}\) is a \(\sigma\)-algebra of events, and \(P\) is the probability measure.}. Let $\mathcal{D}$ be the space of datasets. For a query $q:\mathcal{D}\to\mathbb{R}^n$, 
we denote by $M_q: \mathcal{D} \times \Omega \to \mathbb{R}^n$ the randomized mechanism answering $q$. We say that two datasets $d,d'\in\mathcal{D}$ are \emph{adjacent}, written $\Adj(d,d')$, 
if they differ in the data of exactly one participant.

\begin{definition}[Differential Privacy]
A mechanism $M_q$ satisfies $\epsilon$-DP if, for all adjacent $d, d' \in \mathcal{D}$ and all measurable $S \subseteq \mathbb{R}^n$:
\[
\Pr[M_q(d) \in S] \leq e^{\epsilon} \Pr[M_q(d') \in S].
\]
\end{definition}

If the inequality fails, an \(\epsilon\)-breach occurs, indicating a non-negligible difference between the prior and posterior distributions. We next recall the Laplace Mechanism~\cite{Dwork06}, a fundamental tool for achieving $\epsilon$-differential privacy. Recall that the (univariate) Laplace distribution with mean zero and scale \(b\), denoted \(\mathrm{Lap}(b)\),
has density \(p(x; b) = \tfrac{1}{2b}\,\exp\bigl(-|x|/b\bigr)\) and variance \(2b^2\).
For \(w \in \mathbb{R}^n\) with i.i.d.\ components \(w_i \sim \mathrm{Lap}(b)\), 
the joint density is \(\bigl(\tfrac{1}{2b}\bigr)^n \exp(-\|w\|_1/b)\) and $\|\cdot\|_1$ denotes the  $\ell_1$ norm~\cite{Walker1965ProbabilityTA}.

\begin{definition}[Laplace Mechanism]
\label{def:LapMech}
Given a numerical query \(q(d)\) and a scale \(b\), the Laplace mechanism \(M_q(d,b)\) modifies \(q(d)\) by adding noise 
\(z \sim \mathrm{Lap}(b)\), i.e., \(M_q(d,b) = q(d) + z\).
\end{definition}

The scale parameter in the Laplace mechanism controls the level of privacy. Specifically:

\begin{theorem}
\label{thm:LapMech}
Let \(q : \mathcal{D} \to \mathbb{R}\) be a query with (global) $\ell_1$-sensitivity 
$ \Delta_1 q
  \;=\;
  \max_{d,d' : \mathrm{Adj}(d,d')} 
  \,\bigl\|\,q(d) - q(d')\bigr\|_{1}$.
If the Laplace mechanism \(M_q(d, b)\) uses a scale parameter 
\(b \ge \tfrac{\Delta q}{\epsilon}\), 
then \(M_q(d, b)\) is \(\epsilon\)-differentially private.
\end{theorem}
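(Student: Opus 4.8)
The plan is to follow the classical pointwise density-ratio argument for the Laplace mechanism. First I would observe that, since $q$ is real-valued, the output $M_q(d,b) = q(d) + z$ with $z \sim \mathrm{Lap}(b)$ is a continuous random variable whose density at a point $t \in \mathbb{R}$ is $p_d(t) = \tfrac{1}{2b}\exp(-|t - q(d)|/b)$, and similarly $p_{d'}(t) = \tfrac{1}{2b}\exp(-|t - q(d')|/b)$ for an adjacent dataset $d'$. Because both densities are strictly positive everywhere, it suffices to bound the ratio $p_d(t)/p_{d'}(t)$ uniformly in $t$ and then integrate over an arbitrary measurable set.

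Second, I would compute
\[
\frac{p_d(t)}{p_{d'}(t)} \;=\; \exp\!\left(\frac{|t - q(d')| - |t - q(d)|}{b}\right),
\]
and apply the reverse triangle inequality $|t - q(d')| - |t - q(d)| \le |q(d) - q(d')|$. Since $d$ and $d'$ are adjacent, $|q(d) - q(d')| \le \Delta_1 q$ by the definition of $\ell_1$-sensitivity (which for a scalar query is just the maximum absolute difference over adjacent pairs). Hence $p_d(t)/p_{d'}(t) \le \exp(\Delta_1 q / b)$ for every $t$, and using the hypothesis $b \ge \Delta_1 q / \epsilon$ we obtain $p_d(t)/p_{d'}(t) \le e^{\epsilon}$ pointwise.

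Third, I would lift this pointwise bound to sets: for any measurable $S \subseteq \mathbb{R}$,
\[
\Pr[M_q(d,b) \in S] = \int_S p_d(t)\,dt \;\le\; e^{\epsilon}\!\int_S p_{d'}(t)\,dt = e^{\epsilon}\,\Pr[M_q(d',b) \in S],
\]
which is exactly the $\epsilon$-DP condition of Definition~1. Since this holds for every ordered pair of adjacent datasets, no separate symmetrization step is needed.

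There is essentially no hard step here; the only points requiring care are (i) orienting the reverse triangle inequality correctly so that the exponent actually being upper-bounded is $(|t-q(d')| - |t-q(d)|)/b$, and (ii) noting the minor notational discrepancy between the $\Delta q$ written in the hypothesis and the $\Delta_1 q$ in the sensitivity definition, which denote the same quantity. If one instead wants the vector-valued statement with i.i.d.\ Laplace coordinates, the identical argument carries over verbatim with $|\cdot|$ replaced by $\|\cdot\|_1$ and the scalar reverse triangle inequality replaced by its $\ell_1$ counterpart.
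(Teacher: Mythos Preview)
Your argument is correct and is exactly the standard density-ratio proof of the Laplace mechanism's $\epsilon$-DP guarantee. The paper itself does not supply a proof of this particular theorem---it is stated in the preliminaries as a classical result (cf.\ \cite{Dwork06})---but your approach matches the triangle-inequality computation the paper later reuses in Appendix~\ref{prooflaploss} when bounding the privacy loss random variable of the Laplace mechanism.
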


After the original definition of pure DP~\cite{Dwork06}, 
approximate DP was introduced to account for Gaussian noise, 
adding a $\delta$ typically on the order of $1/|\mathcal{D}|$, that represents the probability of violating 
the $\epsilon$-DP guarantee~\cite{DKM+06}.

\begin{definition}[Approximate Differential Privacy] %
\label{def:epsDP}
 A randomization mechanism \(M_q: \mathcal{D} \to \mathcal{R}\) is said to provide \((\epsilon, \delta)\)-differential privacy for releasing the query results \(q(\mathcal{D})\) if it randomizes its output such that, 
 for any two adjacent datasets $d, d' \subseteq \mathcal{D}$  and all subset \(S \subseteq \mathbb{R}^n\), the following holds:
\begin{align}
\label{def:DP}
\Prob{M_q(d) \in S} \leq e^{\epsilon} \Prob{M_q( d') \in S} + \delta.
\end{align}

\end{definition}
The Gaussian mechanism, described below, achieves approximate $(\epsilon,\delta)$-differential privacy~\cite{DKM+06}.
\begin{definition}[Gaussian Mechanism]
    The Gaussian mechanism $M_q(d, \sigma)$ modifies the answers to query $q$ in the dataset $d$ by adding noise $z \sim \mathcal{N}(0, \sigma^2 I)$, that is, $M_q(d, \sigma) = q(d) + z$. 
\end{definition}
 In particular, for 
\(\epsilon < 1\), its standard deviation is given by $\sigma \;=\; \frac{\Delta_2 q}{\epsilon} \,\sqrt{2\,\ln\bigl(\tfrac{1.25}{\delta}\bigr)}$, where \(\Delta_2 q\) is the \(\ell_2\)-sensitivity of the query \(q\) across all adjacent datasets.

Sequentially applying DP mechanisms increases the overall privacy cost, as shown by various composition theorems~\cite{dwork2010boosting,7883827,FengMWLQH24}. 
Na\"ive composition states that chaining \(k\) mechanisms, each \((\epsilon,\delta)\)-DP, results in an overall \((k\epsilon, k\delta)\)-DP guarantee, which can be overly conservative. Stronger composition theorems, such as the \emph{advanced composition} theorem~\cite{kairouz2015composition}, provide tighter bounds. In particular, the overall DP guarantee can become $\Bigl(\epsilon \sqrt{2k \,\ln\! \ \bigl(\tfrac{1}{\delta}\bigr)} 
  \;+\; k\,\epsilon\,\bigl(e^\epsilon - 1\bigr)\Bigr)$.

\subsection{Deep Learning with Differential Privacy}
\label{sec:DP-DPSGD}

DP-SGD~\cite{abadi2016deep} was the first method to incorporate DP into deep neural network (DNN) training. At each iteration \(t\), the gradient \(\mathbf{g}_t(\mathbf{x}_i)\) 
for a data point \(\mathbf{x}_i\) is clipped using a threshold \(C\). Formally, the clipped variant of gradients is given by $ \bar{\mathbf{g}}_t(\mathbf{x}_i) 
   \;=\; \mathbf{g}_t(\mathbf{x}_i)/\max\Bigl(1, \tfrac{\|\mathbf{g}_t(\mathbf{x}_i)\|_2}{C}\Bigr)$. 
This limits the sensitivity of each gradient, preparing it for the Gaussian mechanism, yielding ``sanitized'' gradients for all $1\leq i \leq n$
\begin{equation}
\label{eqn:gradientperturb}
    %_{\forall i \leq n}
    \mathbf{g}^s_t(\mathbf{x}_i)= \bar{\mathbf{g}}_t(\mathbf{x}_i) + z,
\end{equation}
where $z \sim \mathcal{N}(0, C^2 \sigma^2 \mathbf{I}_n)$.
The final update \(\mathbf{\bar{g}}^s_t\) is computed by averaging 
\(\mathbf{g}^s_t(\mathbf{x}_i)\) over the batch of size \(B\).
Even with tighter composition bounds, training over many iterations (e.g., thousands of rounds) can lead to high cumulative privacy loss (\(\epsilon\)). DP-SGD mitigates this by formulating an accounting function over the privacy loss terms across rounds, namely the \emph{Moments Accountant Function}. Consequently, DP-SGD with moments accounting achieves a significantly improved bound of $O(\epsilon \sqrt{T} B/|d|, \delta)$-DP.

\subsubsection{Moments Accounting Function}
\label{sec:MomentAccounting}
Consider two neighboring datasets \(d,d'\in \mathcal{D}\) and an outcome \(o\in\mathbb{R}^n\) 
from the minibatch average  \(\mathbf{\bar{g}}^s_t\) of sanitized gradients. 
The \emph{privacy loss} associated with \(o\) is defined by
\begin{align}
\label{eq:PrivLoss:c}
  c(o; \text{aux}, d, d') 
  \;=\; 
  \log\! \ 
    \frac{\Prob{{\mathbf{\bar{g}}^s_t}(\text{aux}, d)=o}}{\Prob{{\mathbf{\bar{g}}^s_t}(\text{aux}, d')=o}}
  .
\end{align}

\noindent where ``\(\text{aux}\)'' represents any state accumulated over sequential DP mechanisms. 
Since \(\mathbf{\bar{g}}^s_t(\cdot)\) is randomized, \( c(o; \text{aux}, d, d') \) is itself a random variable. 
To bound privacy loss more tightly, the moments of the random variable associated with \( c(o; \text{aux}, d, d') \) are tightly calculated and bounded. 
Precisely, for \(\lambda>0\), define
\begin{align}
\label{eq:lambda-moment-gen}
  \alpha_{\mathbf{\bar{g}}^s_t}(\lambda;\mathrm{aux},d,d') 
  \;=\;
  \log\,\mathbb{E}_{\,o \sim \mathbf{\bar{g}}^s_t(\mathrm{aux},d)}\!\bigl[\exp\!\bigl(\lambda\,c(o;\mathrm{aux},d,d')\bigr)\bigr].
\end{align}
We then take the worst-case value over \(\mathrm{aux}, d, d'\):
\begin{align}
\label{alpha func}
  \alpha_{\mathbf{\bar{g}}^s_t}(\lambda)
  \;=\;
  \max_{\mathrm{aux},\,d,\,d'} 
 \Bigl( \alpha_{\mathbf{\bar{g}}^s_t}(\lambda;\mathrm{aux},d,d') \Bigr).
\end{align}

\noindent
Key properties of this \textit{Moments Accounting Function (MAF)} include:
\begin{enumerate}
    \item \textbf{Composability.} 
    For a sequence of adaptive mechanisms \(\mathbf{\bar{g}}^s_1,\dots,\mathbf{\bar{g}}^s_t\),
     \begin{align}
    \label{composmaf}
      \alpha_{\mathbf{\bar{g}}^s_1,\dots,\mathbf{\bar{g}}^s_t}(\lambda) 
      \;\le\; 
      \sum_{i=1}^t \alpha_{\mathbf{\bar{g}}^s_i}(\lambda).
    \end{align}
    \item
    \textbf{Tail Bound.}
    For any \(\epsilon>0\), releasing \(\mathbf{\bar{g}}^s_1,\dots,\mathbf{\bar{g}}^s_t\) is \((\epsilon,\delta)\)-DP if
    \begin{align}
    \label{tail}
      \delta 
      \;=\;
      \min_{\lambda>0} 
     \Bigl( \exp\! \ \bigl( \alpha_{\mathbf{\bar{g}}^s_1,\dots,\mathbf{\bar{g}}^s_t}(\lambda) \;-\;\lambda\,\epsilon\bigr)\Bigr).
    \end{align}
\end{enumerate}

\subsubsection{Privacy Amplification with Subsampling}
While the Moments Accountant (MAF) significantly tightens the privacy budget, 
DP-SGD's main strength arises from \emph{privacy amplification via random sampling}, 
where the privacy cost per evaluation decreases \emph{quadratically}—rather than linearly—with 
the sampling rate \(\zeta = \tfrac{B}{|\mathcal{D}|}\). 
For instance, Bun et al.~\cite{BDRS18} show that 
\(\alpha(\lambda) \propto \zeta^2 \,\tfrac{6\,\lambda}{\sigma^2}\). 
Mironov et al.~\cite{mironov2019r} (Section 3.3) presented a tighter privacy amplification bound, which remains the state-of-the-art for the subsampled Gaussian mechanism. The bound is expressed as:
\begin{equation} \label{A_alpha}
\hspace{-.2cm}
\alpha_{\mathbf{\bar{g}}^s_1,\dots,\mathbf{\bar{g}}^s_t}(\lambda) \leq t %\cdot 
\log \Biggl[ \sum_{\eta=0}^{\lambda+1} \binom{\lambda+1}{\eta} (1 - \zeta)^{\lambda+1 - \eta} \zeta^{\eta} %\cdot 
\exp\left(\frac{\eta^2 - \eta}{2\sigma^2}\right) \Biggr].
\end{equation}
 The final \((\epsilon, \delta)\)-DP guarantee is derived using the tight conversion formula provided by Balle et al.~\cite{pmlr-v108-balle20a}, which holds for any mechanism, by substituting in the moment bounds $\alpha_{\mathbf{\bar{g}}^s_1,\dots,\mathbf{\bar{g}}^s_t}(\lambda)$ for $\alpha(\lambda)$:
\begin{equation} \label{eq:MAF:Gauss:Balle}
    \epsilon(\delta) = \min_{\lambda > 0} \Bigl(\frac{
    \alpha(\lambda)
    % \alpha_{\mathbf{\bar{g}}^s_1,\dots,\mathbf{\bar{g}}^s_t}(\lambda)
    }{\lambda} + \log\left(\frac{\lambda}{\lambda + 1}\right) - \frac{\log(\delta) + \log(\lambda + 1)}{\lambda}\Bigr).
\end{equation}
\noindent
This formulation significantly enhances the analysis of privacy loss in scenarios involving repeated subsampling and Gaussian noise.

\section{Problem Statement}
\label{sec:problem_statement}

Let \( \Theta \) denote the noise in DP-SGD, and let \(X\) denote a set of fixed or tunable learning parameters under consideration.
Specifically, we may take \( X = (\mathrm{E}, \mathrm{B}, \mathrm{C}, n) \), where \(\mathrm{E}\) is the number of training epochs, \(\mathrm{B}\) is the batch size, \(\mathrm{C}\) is the clipping threshold, and \(n\) is the number of model parameters. Note that the pair \((\mathrm{E}, \mathrm{B})\) can equivalently be represented by the \textbf{total number of rounds} \(T\) and the \textbf{sample rate} \(\zeta = \tfrac{\mathrm{B}}{N}\), where \(N\) is the dataset size, since \(T = \lceil\tfrac{\mathrm{E} \cdot N}{\mathrm{B}}\rceil\). The learning rate \(lr\) and the initial state (random initialization or pre-trained weights for fine-tuning) are additional training parameters, but they are not part of our optimization problem.

Define \( \epsilon(X, \Theta, \delta) \) as the privacy budget achieved by \( \Theta \) under parameters \( X \) and target DP failure probability \( \delta \). We denote by \( \mathbf{U}(X, \Theta) \) the user's chosen utility function, which measures task-specific performance at the end of training/fine-tuning with DP-SGD noise parameters $\Theta$ and learning parameters $X$ for a given initial neural network, dataset, and loss function. %
Given fixed learning parameters \( X = (\mathrm{E, B, C, n}) \), our objective is to design the noise distribution (i.e., select PDF for optimal noise \( \Theta^{*} \)) that maximizes the utility subject to a constraint on privacy budget:
\begin{equation} \label{eq:noise-opt:1}
    \max_{\Theta}\  \mathbf{U}(X, \Theta^{*}) \quad \text{subject to} \quad \epsilon(X, \Theta, \delta^{*}) \leq \epsilon^{*}.
\end{equation}
\subsection{Optimizing Gaussian DP-SGD}
\label{sec:problem_statement}
In practice, the utility \( \mathbf{U} \) lacks an analytic form%
    \footnote{Even if the utility is a loss function that is differentiable with respect to the model parameters, the loss at the end of training/fine-tuning's dependence on DP-SGD noise parameters would not have a (known) analytic form.} 
and does not admit provable numerical optimization, offering only zeroth-order feedback (i.e., observable outputs without derivative information) after $E$ epochs of training/fine-tuning and then model evaluation with respect to $\mathbf{U}$. Likewise, the privacy budget $\epsilon(X, \Theta, \delta^{*})$ in general does not have a known expression, though analytic upper-bounds on privacy loss may be known. Consequently, because solving \eqref{eq:noise-opt:1} is intractable, a common alternative is to optimize a surrogate utility function, such as the noise multiplier, after carefully setting the clipping threshold~$\mathrm{C}$. 
If $\mathrm{C}$ is too small, nearly all gradients are clipped, leading to an insufficient signal-to-noise ratio (SNR) and preventing convergence. 
Conversely, if $\mathrm{C}$ is too large, the effective noise scale in Gaussian DP-SGD ($\mathrm{C}\sigma$) becomes excessive, which degrades model utility.

\vspace{4pt}

\noindent\textbf{Signal to Noise Ratio:} The interaction between clipping and noise variance directly impacts learning performance, and prior work has explored optimizing the clipping value $\mathrm{C}$. Importantly, $\mathrm{C}$ cannot be set arbitrarily small, as this prevents effective learning. While very small clipping values (e.g. 0.1) have been shown to work for fine-tuning large pre-trained models~\cite{li2021large}, pre-training tasks typically require much larger thresholds, in the range of 3–20~\cite{abadi2016deep}.
Thus, the surrogate optimization problem becomes \[\label{eq:noise-opt:2}
    \min_{\sigma > 0} \; \sigma 
    \quad \text{subject to} \quad 
    \epsilon(X, \Theta, \delta^{*}) \leq \epsilon^{*},
\]
where $X$ implicitly assumes the task-specific minimal clipping value $\mathrm{C}^*$.
Using the privacy loss upper bound \eqref{eq:MAF:Gauss:Balle} in place of $\epsilon(X, \Theta, \delta^{*})$, the optimization in \eqref{eq:noise-opt:2} reduces to a simple trade-off: larger $\sigma$ yields a smaller privacy budget $\epsilon$, and vice versa. Exploiting this monotonic relationship, we can then perform a simple numerical search to identify the smallest feasible $\sigma^*$ such that $\epsilon(\delta) < \epsilon^*$.

Going beyond the Gaussian case, we define a generic distortion measure to capture the effect of noise.  
Let $\mathbb{P}_{\Theta}(x)$ denote the PDF of a noise distribution parameterized by $\Theta$, where $x$ represents the noise sample value.  
The $\ell_1$ error, or distortion, of a noise distribution is defined as
\[
\mathsf{Distortion}(\Theta) = \int |x| \, \mathbb{P}_{\Theta}(x) \, dx.
\] For instance, this expression for Gaussian noise is $\sigma \sqrt{\frac{2}{\pi}}$. The corresponding optimization problem is
\begin{equation}
\label{eq:noise-opt:3}
\min_{\Theta} \; \mathsf{Distortion}(\Theta) 
\quad \text{subject to} \quad 
\epsilon(X, \Theta, \delta^{*}) \leq \epsilon^{*}.
\end{equation}

\vspace{-0.1in}
 
\subsection{Drawbacks of Gaussian and Laplace Noise}

Optimizing over \( \sigma \) alone has critical limitations with a single decision variable, which simultaneously controls both utility and privacy. This coupling is inherently limiting: with only one degree of freedom, optimizing for one objective (utility or privacy) directly impacts the other. 

First, the standard delta test, $\delta(\epsilon \rightarrow 0)$, provided by Balle et al. in~\cite{pmlr-v108-balle20a}, is an asymptotic test that demonstrates the failure probability $\delta$ of the Gaussian mechanism performs nearly 100 times worse than that of the Laplace mechanism in high-privacy regimes.  \begin{figure}[!h]
    \centering
\includegraphics[width=0.6\linewidth, trim=20pt 170pt 60pt 203pt, clip]{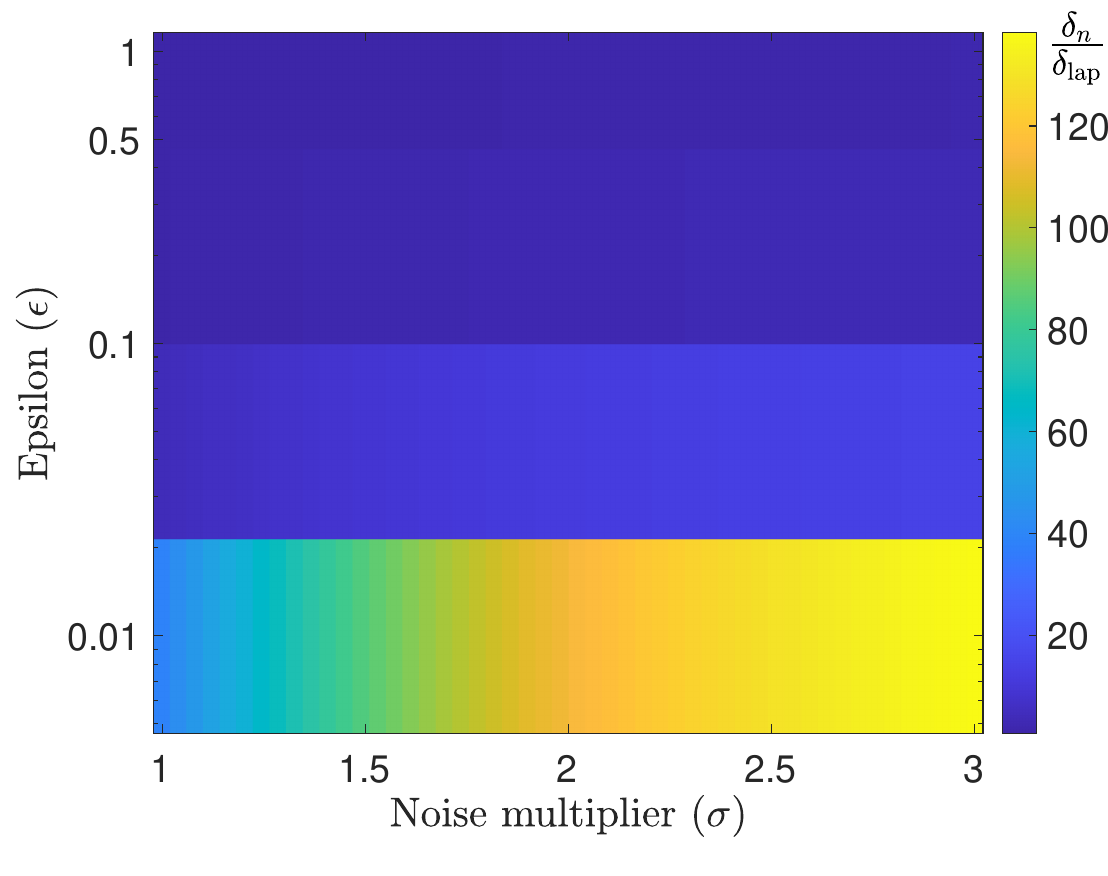}\vspace{-0.05in}
    \caption{Delta test ratio $\delta(\epsilon \rightarrow 0)$ of the Gaussian to Laplace mechanisms under varying noise levels. Laplace mechanism significantly outperforms Gaussian in $\epsilon \rightarrow 0$.}\vspace{0.1in}
    \label{fig:deltagauss}
\end{figure}
Figure~\ref{fig:deltagauss} shows the ratio between the delta tests for Gaussian and Laplace mechanisms, highlighting the increased cost of using Gaussian noise to achieve stronger protection. This inflated $\delta$ in high-privacy regimes directly increases the accounted budget $\epsilon(\delta)$ through the $-\log(\delta)$ term in the tight conversion formula~\eqref{eq:MAF:Gauss:Balle}. The limitation stems from the light tail of the Gaussian PDF, with its quadratic exponential decay. By contrast, the Laplace distribution, with heavier tails, is known to be optimal as $\epsilon \to 0$ under $\ell_1$ and $\ell_2$ mean error for scalar queries~\cite{geng2015optimal}.

However, Laplace introduces its own challenges for DP-SGD, particularly requiring \(\ell_1\)-norm clipping, which severely degrades performance in high-dimensional spaces. Recall the relationship $\|x\|_1 \leq \sqrt{n} \|x\|_2$, a direct consequence of the Cauchy–Schwarz inequality~\cite{steele2004cauchy}. 
The volumes of clipped spaces—the $\ell_1$ cross-polytope and $\ell_2$ ball—scale as $V_{\ell_1}(n, \mathrm{C}) = \frac{(2\mathrm{C})^n}{n!}$ and $V_{\ell_2}(n, \mathrm{C}) = \frac{\pi^{n/2} \mathrm{C}^n}{\Gamma(n/2 + 1)}$, respectively, with their ratio $\frac{V_{\ell_1}(n, \mathrm{C})}{V_{\ell_2}(n, \mathrm{C})} = \left(\frac{2}{\sqrt{\pi}}\right)^n \frac{(\frac{n}{2})!}{n!}$ shrinking exponentially in $n$. 
As shown in Figure~\ref{fig:l1vsl2vol}, $\ell_1$ clipping rapidly becomes restrictive, discarding far more gradients than $\ell_2$ clipping. This likely explains why Laplace mechanisms are rarely used in DP-SGD, as retaining sufficient learning capacity under $\ell_1$ clipping requires inflating the noise by a factor proportional to $\sqrt{n}$, i.e., the noise must grow with the model size.

\begin{figure}[!h] 
    \centering
    \includegraphics[width=0.6\linewidth, trim=45 190 70 215, clip]{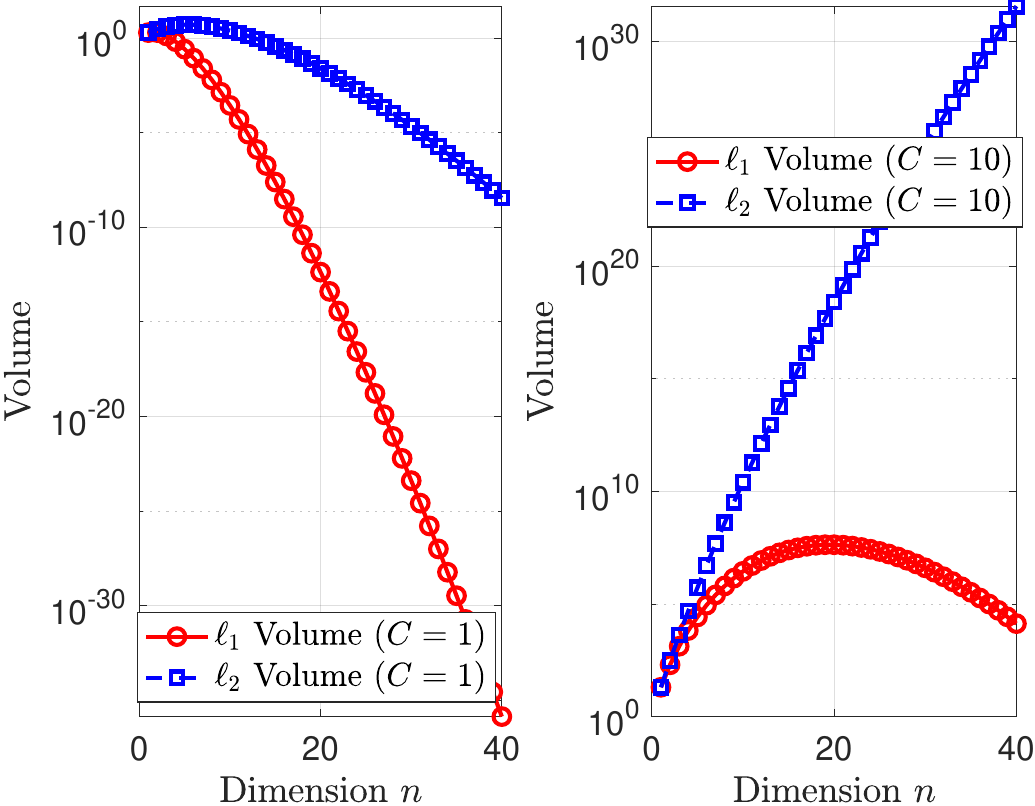}
    \caption{The volume of $n$-dimensional vector space clipped by $C$-$\ell_1$ and $C$-$\ell_2$ norms.}
     \label{fig:l1vsl2vol}
\end{figure}

Second, task-specific DP-SGD optimization requires a much larger decision space than Gaussian or Laplace mechanisms, which each rely on a single parameter (variance $\sigma^2$ or $2b^2$). Factors such as epochs ($E$), batch size ($B$), and whether the task is fine-tuning or pretraining directly shape privacy accounting and clipping needs, producing very different privacy–utility trade-offs. This nonlinearity calls for noise mechanisms more versatile than Gaussian or Laplace. An ideal mechanism would (i) combine their strengths and (ii) introduce free parameters in its standard deviation to flexibly balance 
\((\epsilon, \delta)\)-DP guarantees with task-specific accuracy.
% As discussed, any deep learning task—whether fine-tuning large models or training from scratch—requires preserving a minimum amount of gradient space. Consequently, \(\ell_1\) clipping becomes impractically large, causing either the noise standard deviation or the resulting privacy budget \(\epsilon\) to become unacceptably large.

\section{Methodology}
\subsection{$\mathbf{\ell_2}$ Clipping over Laplace's Privacy Loss}
\label{sec:PLRV-First}
Recall that the Laplace mechanism faces limitations with $\ell_1$ clipping, while Gaussian noise exhibits the ``privacy wall'' phenomenon. To address these issues and move toward a mechanism that combines the strengths of both, we first characterize the privacy loss of the Laplace mechanism (proof in Appendix~\ref{prooflaploss}) and then define the resulting challenges and an effective solution.

\begin{theorem}[Privacy Loss of a Laplace Mechanism]
\label{thm:lap2privacyloss}
The privacy loss of a Laplace mechanism with scale parameter $b$, applied to gradients clipped by $C$ under the $\ell_1$ norm, is bounded as:
\begin{equation}
\label{Laplacebound}
   c(o; \text{aux}, d, d', b) \leq \frac{C}{b}.
\end{equation}
\end{theorem}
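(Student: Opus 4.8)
The plan is to compute the privacy-loss random variable directly from the multivariate Laplace density and then bound it with the reverse triangle inequality, essentially re-running the classical analysis behind \thref{thm:LapMech} but tracking the quantity pointwise in the outcome $o$ rather than only in expectation. First I would fix the auxiliary state $\text{aux}$ and a pair of adjacent datasets $d, d'$, and denote by $\bar{g} = \bar{g}(\text{aux}, d)$ and $\bar{g}' = \bar{g}(\text{aux}, d')$ the pre-noise (batch-averaged) clipped gradients produced on the two inputs. Because each per-example gradient is clipped so that its $\ell_1$ norm is at most $C$, the $\ell_1$-sensitivity of the query being sanitized satisfies $\|\bar{g} - \bar{g}'\|_1 \le C$; this is the single place where the $\ell_1$-clipping convention enters, and it is exactly what plays the role of $\Delta_1 q$ in \thref{thm:LapMech}.

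Next I would substitute the density of i.i.d.\ Laplace noise $z \sim \mathrm{Lap}(b)$, namely $p(w;b) = (2b)^{-n}\exp(-\|w\|_1/b)$, into the privacy-loss definition \eqref{eq:PrivLoss:c}. Since the outcome $o$ has density $p(o-\bar{g};b)$ under input $d$ and $p(o-\bar{g}';b)$ under input $d'$, the normalization constant $(2b)^{-n}$ cancels and
\[
c(o;\text{aux},d,d',b) \;=\; \frac{\|o-\bar{g}'\|_1 - \|o-\bar{g}\|_1}{b}.
\]
This cancellation is also the reason the final bound carries no dependence on the dimension $n$.

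The key step is then the reverse triangle inequality $\|o-\bar{g}'\|_1 - \|o-\bar{g}\|_1 \le \|\bar{g} - \bar{g}'\|_1$, which together with the sensitivity bound from the first step yields $c(o;\text{aux},d,d',b) \le C/b$ for every $o$ in the support, and hence also after taking the worst case over $\text{aux}, d, d'$ as in \eqref{alpha func}. The only genuinely delicate point is pinning down the constant in the $\ell_1$-sensitivity: under the replace-one adjacency it would naively be $2C$, whereas under the add/remove notion (or with the natural normalization of the averaged-gradient query) it is $C$; I would state the add/remove convention explicitly so that the bound reads $C/b$ as claimed, and note that every subsequent step is insensitive to this choice. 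Beyond that there is no analytic obstacle — the argument is a one-line density computation followed by the triangle inequality.
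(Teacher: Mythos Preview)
Your proposal is correct and follows essentially the same approach as the paper: substitute the multivariate Laplace density into the privacy-loss definition so the normalization cancels, then apply the (reverse) triangle inequality together with the $\ell_1$-sensitivity bound from clipping. The paper simply specializes immediately to the add/remove case by taking $\bar g' = 0$, whereas you keep both centers general and explicitly flag the adjacency convention needed to land on $C/b$ rather than $2C/b$; otherwise the arguments are identical.
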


Since this bound is tight for the Laplace mechanism~\cite{dwork2014algorithmic}, DP-SGD with Laplace noise inherently requires scaling the noise with the $\ell_1$ norm of the $n$-dimensional clipped gradient. Given any required $\ell_2$-clipped gradient space—for example, $C \geq 0.1$ in fine-tuning tasks—the ratio $\frac{\ell_1}{\ell_2} \leq \sqrt{n}$ causes the effective $\ell_1$ clipping threshold to grow rapidly with $n$, forcing $C$ to become very large. 

We will later propose an alternative approach that applies $\ell_2$ clipping to gradients and computes the moments accountant function (MAF) separately for each model parameter, composing the results across all parameters. 
This strategy will be motivated by the following behavior of the MAF for univariate Laplace mechanisms (e.g., using $\ell_1$ clipping), which allows us to leverage a much larger number of privacy loss moments $\lambda$ than with the Gaussian mechanism (on the order of thousands) when analyzing a single query. See Appendix~\ref{subsampledLap2} for the proof.

\begin{theorem}[Subsampled univariate Laplace Mechanisms]
\label{MAflaplace}
Let $M$ be a univariate Laplace mechanism with scale parameter $b$ and sampling probability 
$\zeta = \frac{L}{N}$,
where $L$ is the mini-batch size and 
$N$ %$|\mathcal{D}|$ 
is the dataset size. Suppose $M$ is applied to a single 
% gradient query 
partial derivative query (i.e., with respect to a single model parameter)
$q(d)=\mathbf{g}(d)$ clipped by threshold $C$, i.e., $|\mathbf{g}| \leq C$.

Then, the moments accountant function of $M_{q}$ satisfies
\begin{equation}
\label{lap_2account1}
\alpha_{M_{q}}(\lambda) = \log \left[
\sum_{\eta = 0}^{\lambda + 1} \binom{\lambda + 1}{\eta} (1-\zeta)^{\lambda + 1 - \eta} \zeta^\eta F(C, \eta)
\right],
\end{equation}

where
\begin{equation}
\label{eqn:G}
% Original
F(C, \eta) = \frac{\eta e^{\frac{(\eta-1)C}{b}} + (\eta-1) e^{-\frac{\eta C}{b}}}{2\eta-1}.
\end{equation}
\end{theorem}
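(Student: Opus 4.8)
The plan is to follow the standard subsampled‑mechanism analysis underlying the subsampled‑Gaussian bound \eqref{A_alpha} (Mironov et al.) and isolate the one genuinely new ingredient: a scalar integral over Laplace densities that plays exactly the role $\exp\!\big((\eta^2-\eta)/2\sigma^2\big)$ plays for the Gaussian.

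\smallskip
\noindent\emph{Step 1 (reduction to a two‑distribution problem).} I would reduce the worst‑case MAF in \eqref{alpha func} to comparing two scalar distributions. Fix neighbouring $d,d'$ with $d$ containing the one extra record; conditioning on the subsample of the shared records and using translation invariance of $\mathrm{Lap}(b)$, one may take that conditional partial sum to be $0$. Writing $p_0(x)=\tfrac{1}{2b}e^{-|x|/b}$ for the density of $\mathrm{Lap}(b)$, the output of $M_q(d')$ then has density $p_0$ while the output of $M_q(d)$ has the mixture density $q(x)=(1-\zeta)\,p_0(x)+\zeta\,p_0(x-v)$, where $v$ is the clipped partial derivative of the extra record, so $|v|\le C$. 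Since $p_0$ is symmetric and the $\lambda$‑moment is monotone in $|v|$, the worst case is $v=C$; set $p_1(x)=\tfrac{1}{2b}e^{-|x-C|/b}$ (the reverse neighbour direction is dominated, exactly as in the Gaussian case).

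\smallskip
\noindent\emph{Step 2 (binomial expansion).} By \eqref{eq:lambda-moment-gen} at integer order $\lambda$,
\[
\exp\!\big(\alpha_{M_q}(\lambda)\big)=\mathbb{E}_{o\sim q}\!\left[\big(q(o)/p_0(o)\big)^{\lambda}\right]=\int_{\mathbb{R}}\frac{\big[(1-\zeta)p_0(x)+\zeta p_1(x)\big]^{\lambda+1}}{p_0(x)^{\lambda}}\,dx .
\]
Expanding the numerator by the binomial theorem and interchanging the finite sum with the integral (all summands nonnegative, Tonelli),
\[
\exp\!\big(\alpha_{M_q}(\lambda)\big)=\sum_{\eta=0}^{\lambda+1}\binom{\lambda+1}{\eta}(1-\zeta)^{\lambda+1-\eta}\zeta^{\eta}\,F(C,\eta),
\qquad F(C,\eta):=\int_{\mathbb{R}}p_0(x)^{1-\eta}p_1(x)^{\eta}\,dx ,
\]
and direct substitution gives $F(C,\eta)=\tfrac{1}{2b}\int_{\mathbb{R}}\exp\!\big(\tfrac{(\eta-1)|x|-\eta|x-C|}{b}\big)\,dx$.

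\smallskip
\noindent\emph{Step 3 (the scalar integral).} Split $\mathbb{R}$ at $0$ and $C$: the exponent equals $\tfrac{x-\eta C}{b}$ on $(-\infty,0)$, equals $\tfrac{(2\eta-1)x-\eta C}{b}$ on $[0,C]$, and equals $\tfrac{-x+\eta C}{b}$ on $(C,\infty)$, so each of the three pieces is an elementary integral, convergent for every $b>0$ and integer $\eta\ge0$ (when $\eta\ge2$ the $p_1^{\eta}$ factor dominates the growth of $p_0^{1-\eta}$ on both tails). Summing the three values and simplifying --- using $2\eta-1\neq0$ for integer $\eta\ge0$ --- yields $F(C,\eta)=\tfrac{\eta e^{(\eta-1)C/b}+(\eta-1)e^{-\eta C/b}}{2\eta-1}$, which I would sanity‑check via $F(C,0)=F(C,1)=F(0,\eta)=1$. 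Taking logarithms gives \eqref{lap_2account1}--\eqref{eqn:G}. The main obstacle is Step 1: correctly matching the subsampling mixture and, in particular, justifying that the extremal gradient $v=\pm C$ (rather than some intermediate value) maximizes the $\lambda$‑moment; once that monotonicity is in hand, everything else is routine calculus that parallels the subsampled‑Gaussian derivation of \eqref{A_alpha}, with $F(C,\eta)$ in place of $\exp\!\big((\eta^2-\eta)/2\sigma^2\big)$.
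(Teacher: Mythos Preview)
Your proposal is correct and follows essentially the same approach as the paper: both reduce to the mixture $(1-\zeta)p_0+\zeta p_1$ versus $p_0$, apply the binomial expansion to obtain the $\eta$-th Rényi-type moment $\int p_0^{1-\eta}p_1^{\eta}\,dx$, split the real line at $0$ and $C$, and simplify the three pieces to the stated $F(C,\eta)$. The paper organizes the three-piece computation as conditional expectations weighted by region probabilities rather than as one integral, and handles the worst-case $|v|=C$ by ``inspection of the formula'' (deferring the $A\geq B$ direction to Mironov et~al.), but these are presentational differences rather than a different route.
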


While the univariate analysis in~\eqref{lap_2account1} provides an exact expression for each parameter, directly summing over millions of parameters can still lead to significant overestimation of the total privacy loss unless a tight holistic (multivariate) bound is applied.%
\footnote{Here, the summation is over model parameters rather than training/fine-tuning iterations, similar to how composability traditionally applies over iterations.  
For the overall training/fine-tuning process, we would also need to compose over iterations.}

To address this overestimation, we apply \emph{Majorization Theory}~\cite{marshall1979inequalities,b81119c4-48ac-33d6-b3b8-e1de6925b554}, a powerful tool for comparing vectors based on their spread or concentration. After $\ell_2$ clipping with threshold $C$, the marginal clipped gradients $|\mathbf{g}_i|$ are mostly small but unevenly distributed. To capture this structure, we construct a \textit{majorization set}—a specially ordered vector that dominates the original gradient vector.

This majorization set enables us to bound any Schur-convex function of the original gradients by its value on the majorization set. Since the MAF depends on the gradients through a Schur-convex structure, applying majorization allows us to obtain a tight holistic (multivariate) privacy bound for $\ell_2$-clipped gradients.

In what follows, we first introduce background on majorization theory, then show that the moments accountant function given in ~\eqref{lap_2account1} is Schur-convex. Finally, we derive a tight majorization set and use it to bound the total moment accountant function for Laplace mechanisms.

\subsubsection{Majorization Theory Background}\label{Majorization} For two vectors \( \mathbf{x}, \mathbf{y} \in \mathbb{R}^n \), we say that \( \mathbf{x} \) \emph{weakly majorizes} \( \mathbf{y} \) from below, denoted \( \mathbf{x} \succ_w \mathbf{y} \), if the following condition holds:
\[
\sum_{i=1}^{k} x_i^{\downarrow} \geq \sum_{i=1}^{k} y_i^{\downarrow} \quad \text{for all } k = 1, \dots, n,
\]
where \( x_i^{\downarrow} \) and \( y_i^{\downarrow} \) denote the \( i \)-th largest components of \( \mathbf{x} \) and \( \mathbf{y} \), respectively~\cite{marshall1979inequalities,b81119c4-48ac-33d6-b3b8-e1de6925b554}. That is, both vectors are assumed to be sorted in non-increasing order before comparison.

A function \( F: \mathbb{R}^n \to \mathbb{R} \) is called \emph{Schur-convex} if \( x \prec y \Rightarrow F(x) \leq F(y) \). Intuitively, Schur-convex functions favor vectors that are more ``spread out,'' making them useful for bounding symmetric functionals.

\paragraph{Schur--Ostrowski Criterion~\cite{peajcariaac1992convex}.}
Let \( f: \mathbb{R}^d \to \mathbb{R} \) be a symmetric function with continuously differentiable partial derivatives. Then \( f \) is Schur-convex if and only if for all \( \mathbf{x} \in \mathbb{R}^d \) and for all \( 1 \leq i, j \leq d \), the following inequality holds: %~\cite{peajcariaac1992convex}:
\[
(x_i - x_j)\left( \frac{\partial f}{\partial x_i} - \frac{\partial f}{\partial x_j} \right) \geq 0 %\cite{peajcariaac1992convex}
.  
\]
The following result is proven in Appendix~\ref{mafschur}.
\begin{theorem}[Schur-convexity of MAF]
\label{schurmaf}
Let $\bar{\mathbf{G}} = \{ \mathbf{\bar{g}}_1, \dots, \mathbf{\bar{g}}_n \}$ be the set of differentially private gradients obtained by applying $\ell_2$ clipping followed by a multivariate Laplace mechanism. Let $\alpha_{\mathbf{\bar{g}}_i}(\lambda)$ denote the moment accountant function for coordinate $i$.

Then, the total moment accountant function
\begin{equation}
\label{mafpara}
\alpha_{\bar{\mathbf{G}}}(\lambda) = \sum_{i=1}^n \alpha_{\mathbf{\bar{g}}_i}(\lambda)
\end{equation}
is \textbf{Schur-convex} with respect to the marginal clipped gradients $|\mathbf{g}_i|$.
\end{theorem}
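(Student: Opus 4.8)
The plan is to reduce this multivariate Schur-convexity statement to a one-dimensional convexity fact. First I would observe that $\alpha_{\bar{\mathbf{G}}}(\lambda)$ is a \emph{separable and symmetric} function of the marginal clipped magnitudes $c_i := |\mathbf{g}_i|$: by \thref{MAflaplace}, each coordinate contributes the same univariate function $\phi_\lambda(c) = \log g_\lambda(c)$, where $g_\lambda(c) = \sum_{\eta=0}^{\lambda+1} a_\eta\,F(c,\eta)$ with nonnegative weights $a_\eta = \binom{\lambda+1}{\eta}(1-\zeta)^{\lambda+1-\eta}\zeta^{\eta}$ and $F(c,\eta) = \tfrac{\eta e^{(\eta-1)c/b} + (\eta-1)e^{-\eta c/b}}{2\eta-1}$, so that $\alpha_{\bar{\mathbf{G}}}(\lambda) = \sum_{i=1}^{n}\phi_\lambda(c_i)$. (Replacing the per-coordinate sensitivity, a fixed multiple of $|\mathbf{g}_i|$, in place of $C$ in \thref{MAflaplace} is an affine reparametrization of the argument and hence harmless; and since $\sum_\eta a_\eta = 1$ and $F(\cdot,\eta) > 0$, we have $g_\lambda > 0$, so $\phi_\lambda$ is well defined and smooth.) For such a separable sum the Schur--Ostrowski criterion collapses to one variable: $\partial_i \alpha_{\bar{\mathbf{G}}} = \phi_\lambda'(c_i)$, so $(c_i - c_j)(\phi_\lambda'(c_i) - \phi_\lambda'(c_j)) \ge 0$ for all $i,j$ is equivalent to $\phi_\lambda'$ being nondecreasing, i.e. to $\phi_\lambda$ being convex. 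Everything therefore reduces to showing that $\phi_\lambda$ is convex on $[0,\infty)$.

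To prove convexity of $\phi_\lambda$ I would show that $g_\lambda$ is \emph{log-convex} in $c$, using that the logarithm of a log-convex function is convex. The key observations are that log-convexity is preserved under nonnegative linear combinations, and that each summand $F(\cdot,\eta)$ is already log-convex: for $\eta \in \{0,1\}$ one checks directly that $F(c,\eta) \equiv 1$ (log-affine), while for $\eta \ge 2$, $F(\cdot,\eta)$ is a strictly positive combination, with weights $\tfrac{\eta}{2\eta-1},\tfrac{\eta-1}{2\eta-1} > 0$, of the log-affine functions $c \mapsto e^{(\eta-1)c/b}$ and $c \mapsto e^{-\eta c/b}$, hence log-convex. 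Thus $g_\lambda = \sum_\eta a_\eta F(\cdot,\eta)$ is log-convex, and $\phi_\lambda = \log g_\lambda$ is convex. (If an explicit check is preferred, for $h(c) = \log\bigl(A e^{\beta_1 c} + B e^{\beta_2 c}\bigr)$ with $A,B > 0$ one has $h''(c) = \tfrac{AB(\beta_1-\beta_2)^2 e^{(\beta_1+\beta_2)c}}{(A e^{\beta_1 c} + B e^{\beta_2 c})^2} \ge 0$, and nonnegativity of the second derivative propagates through the finite sum over $\eta$ by the standard additivity-of-log-convexity computation.)

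Assembling the pieces: $\alpha_{\bar{\mathbf{G}}}(\lambda) = \sum_i \phi_\lambda(c_i)$ is symmetric with continuously differentiable partials, and convexity of $\phi_\lambda$ supplies the Schur--Ostrowski inequality coordinatewise, so $\alpha_{\bar{\mathbf{G}}}(\lambda)$ is Schur-convex in $(|\mathbf{g}_1|,\dots,|\mathbf{g}_n|)$; equivalently, by Karamata's inequality, $\mathbf{c} \prec \mathbf{c}'$ implies $\sum_i \phi_\lambda(c_i) \le \sum_i \phi_\lambda(c_i')$. I expect the only genuinely delicate step to be the convexity of $\phi_\lambda$: a brute-force second-derivative computation on $\log\bigl(\sum_\eta a_\eta F(c,\eta)\bigr)$ is unpleasant because cross terms between distinct values of $\eta$ have no obvious sign, whereas the log-convexity route avoids this entirely by treating each $F(\cdot,\eta)$ as a positive sum of exponentials and invoking closure of log-convexity under addition. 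A secondary item to spell out carefully is the bookkeeping that substitutes the coordinatewise magnitude $|\mathbf{g}_i|$ for $C$, so that the univariate formula of \thref{MAflaplace} genuinely applies per model parameter.
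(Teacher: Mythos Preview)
Your argument is correct. Both you and the paper reduce the claim to the one-variable convexity of $\phi_\lambda(c)=\log\sum_\eta a_\eta F(c,\eta)$, but the executions differ. The paper proceeds by brute force: it expands $F$, $F'$, $F''$, verifies pointwise that $F(c,\eta)\,F''(c,\eta)\ge (F'(c,\eta))^2$ for every $\eta$, and then applies a Cauchy--Schwarz step to the sequences $\{\sqrt{a_\eta F}\}$ and $\{\sqrt{a_\eta F''}\}$ to conclude that the numerator of $\phi_\lambda''$ is nonnegative. Your route is more structural: you observe that each $F(\cdot,\eta)$ is a nonnegative combination of exponentials (log-affine, hence log-convex), and then invoke closure of log-convexity under nonnegative addition to get log-convexity of $g_\lambda$ at once. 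This is cleaner and sidesteps exactly the cross-term difficulty you anticipate; in fact the paper's Cauchy--Schwarz step is precisely the computational content of the closure-under-addition fact you cite. The paper also separately checks $\phi_\lambda'\ge 0$ (monotonicity), which is not needed for Schur-convexity as stated in \thref{schurmaf} but is required later when the paper combines Schur-convexity with \emph{weak} majorization in \lemref{lem:majorset}; you may want to note this one-line addendum, since the downstream application in \thref{MAflaplace-multi} relies on it. Finally, the paper mentions, as an alternative, exactly the Hardy--Littlewood--P\'olya route you take (\propref{summajor}), so your approach is in fact the alternative the authors allude to but do not spell out.
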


With $\alpha(\lambda)$ being Schur-convex, we now introduce a majorization set over $\mathbf{G}$, the $\ell_2$ clipped (but not noisy) gradient vector. The following result is proven in Appendix~\ref{majorseet}.

\begin{lemma}[Majorization Set Construction]
\label{lem:majorset}
Let ${\mathbf{G}} = \{ \mathbf{g}_1, \dots, \mathbf{g}_n \}$ be the marginal clipped gradients sorted in descending order. Then, $|\mathbf{G}|$ is weakly majorized by the vector $x = \{x_1, \dots, x_n\}$ defined by
\[
x_i = C \left( \sqrt{i} - \sqrt{i-1} \right), \quad i = 1, \dots, n,
\]
where $C$ is the $\ell_2$ clipping threshold. That is,
\[
|\mathbf{G}| \prec_w x,
\]
where $\prec_w$ denotes the weak majorization order.
\end{lemma}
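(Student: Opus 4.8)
The plan is to reduce the weak-majorization claim to a finite family of scalar inequalities on prefix sums and then settle each one with the Cauchy--Schwarz inequality. First I would note that the candidate majorizing vector $x$ is already sorted in non-increasing order: writing $x_i = C(\sqrt{i}-\sqrt{i-1}) = C/(\sqrt{i}+\sqrt{i-1})$ shows that $x_i$ strictly decreases in $i$, so $x_i^{\downarrow}=x_i$. Since the entries of $|\mathbf{G}|$ are, by hypothesis, arranged in non-increasing order as well, $|\mathbf{g}_i|^{\downarrow}=|\mathbf{g}_i|$. Unwinding the definition of $\prec_w$ from \secref{Majorization}, the statement $|\mathbf{G}| \prec_w x$ is therefore equivalent to the prefix-sum inequalities
\[
\sum_{i=1}^{k} |\mathbf{g}_i| \;\le\; \sum_{i=1}^{k} x_i \qquad \text{for all } k = 1,\dots,n.
\]

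Next I would compute both sides. The right-hand side telescopes: $\sum_{i=1}^{k} C(\sqrt{i}-\sqrt{i-1}) = C\sqrt{k}$. For the left-hand side, the hypothesis that the gradients were $\ell_2$-clipped by $C$ gives $\sum_{i=1}^{n} |\mathbf{g}_i|^2 = \|\mathbf{G}\|_2^2 \le C^2$, hence $\sum_{i=1}^{k} |\mathbf{g}_i|^2 \le C^2$ for every $k$. Applying Cauchy--Schwarz to $(|\mathbf{g}_1|,\dots,|\mathbf{g}_k|)$ against the all-ones vector of length $k$ yields
\[
\sum_{i=1}^{k} |\mathbf{g}_i| \;\le\; \sqrt{k}\,\Bigl(\sum_{i=1}^{k} |\mathbf{g}_i|^2\Bigr)^{1/2} \;\le\; \sqrt{k}\cdot C,
\]
which is exactly the telescoped right-hand side. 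Since this holds for all $k \in \{1,\dots,n\}$, the weak majorization $|\mathbf{G}| \prec_w x$ follows.

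I do not expect a substantive obstacle here; the two points that merely require care are (i) checking that $x$ is genuinely in sorted order, so that the prefix comparison above is the correct instantiation of $\prec_w$, and (ii) recalling that \emph{weak} majorization (unlike strong majorization) does not require the total sums $\sum_i |\mathbf{g}_i|$ and $\sum_i x_i$ to coincide, so the Cauchy--Schwarz bound---which is generally loose at $k=n$---is nonetheless sufficient. Conceptually, the lemma amounts to the remark that the sequence $x_i = C(\sqrt{i}-\sqrt{i-1})$ is precisely the discrete increment of the envelope $k \mapsto C\sqrt{k}$, which Cauchy--Schwarz identifies as the tightest upper bound on the $\ell_1$ mass carried by the first $k$ coordinates of any vector in the $\ell_2$-ball of radius $C$.
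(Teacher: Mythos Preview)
Your proof is correct and follows essentially the same approach as the paper: both telescope $\sum_{i=1}^k x_i$ to $C\sqrt{k}$ and bound the gradient prefix sum by $C\sqrt{k}$ via the same inequality (the paper phrases it as AM--QM, you as Cauchy--Schwarz against the all-ones vector, which are identical here). Your explicit verification that $x$ is already sorted in non-increasing order is a detail the paper omits but is indeed needed for the prefix-sum characterization of $\prec_w$ to apply directly.
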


We now extend the univariate moments accountant to the multivariate setting by summing across coordinates, where each coordinate $i$ uses its corresponding majorization bound $x_i$. The following theorem naturally follows from Theorems~\ref{MAflaplace}, \ref{schurmaf}, and Lemma~\ref{lem:majorset}.

\begin{theorem}[Subsampled multivariate Laplace Mechanism]
\label{MAflaplace-multi}
Let $\bar{\mathbf{G}} = \{ \mathbf{\bar{g}}_1, \dots, \mathbf{\bar{g}}_n \}$ be the set of differentially private gradients obtained by applying $C$-$\ell_2$ clipping followed by a multivariate Laplace mechanism with scale $b$ and sampling probability $\zeta$. Define the majorization set $x = \{x_1, \dots, x_n\}$ by
\[
x_i = C \left( \sqrt{i} - \sqrt{i-1} \right), \quad i = 1, \dots, n.
\]
Then, the total moment accountant satisfies
\begin{equation}
\label{mafpara}
\alpha_{\bar{\mathbf{G}}}(\lambda) \leq \sum_{i=1}^n \log \left[
 \sum_{\eta=0}^{\lambda+1} \binom{\lambda+1}{\eta} (1-\zeta)^{\lambda+1-\eta} \zeta^\eta F(x_i, \eta)
\right],
\end{equation}
where
\begin{equation}
\label{eqn:G-multi}
F(x_i, \eta) = \frac{\eta e^{\frac{(\eta-1)x_i}{b}} + (\eta-1) e^{-\frac{\eta x_i}{b}}}{2\eta-1}.
\end{equation}
\end{theorem}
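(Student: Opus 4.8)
The plan is to obtain the bound by chaining the three preceding results, viewing $\alpha_{\bar{\mathbf{G}}}(\lambda)$ as an explicit symmetric function of the vector of marginal clipped-gradient magnitudes $(|\mathbf{g}_1|,\dots,|\mathbf{g}_n|)$ and then ``pushing it up'' to the extremal majorizing vector $x$. \emph{First}, since the multivariate Laplace mechanism adds independent $\mathrm{Lap}(b)$ noise to each of the $n$ coordinates of the $\ell_2$-clipped gradient, the privacy loss decomposes coordinatewise, so the composability property \eqref{composmaf} (applied across coordinates rather than iterations) gives $\alpha_{\bar{\mathbf{G}}}(\lambda)\le\sum_{i=1}^n\alpha_{\bar{\mathbf{g}}_i}(\lambda)$. \emph{Second}, each coordinate is a subsampled univariate Laplace mechanism whose effective per-coordinate sensitivity is controlled by $|\mathbf{g}_i|$, so \thref{MAflaplace} yields $\alpha_{\bar{\mathbf{g}}_i}(\lambda)=\log\!\big[\sum_{\eta=0}^{\lambda+1}\binom{\lambda+1}{\eta}(1-\zeta)^{\lambda+1-\eta}\zeta^{\eta}F(|\mathbf{g}_i|,\eta)\big]$, so that $\alpha_{\bar{\mathbf{G}}}(\lambda)$ is a symmetric function $\Phi(|\mathbf{g}_1|,\dots,|\mathbf{g}_n|)$ in the marginal magnitudes.

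\emph{Third}, I invoke the majorization machinery. By \thref{schurmaf}, $\Phi$ is Schur-convex; by \lemref{lem:majorset}, $(|\mathbf{g}_1|,\dots,|\mathbf{g}_n|)\prec_w x$ with $x_i=C(\sqrt{i}-\sqrt{i-1})$. To pass from \emph{weak} majorization (rather than exact majorization) to the functional inequality $\Phi(|\mathbf{G}|)\le\Phi(x)$, I additionally need $\Phi$ to be nondecreasing in each argument; this follows because $F(\cdot,0)\equiv F(\cdot,1)\equiv 1$ and, for $\eta\ge 2$, $\partial_c F(c,\eta)=\tfrac{\eta(\eta-1)}{b(2\eta-1)}\big(e^{(\eta-1)c/b}-e^{-\eta c/b}\big)\ge 0$, so each $\log[\cdots]$ term is nondecreasing. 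The standard fact that an increasing Schur-convex function respects the weak majorization order \cite{marshall1979inequalities} then gives $\alpha_{\bar{\mathbf{G}}}(\lambda)\le\Phi(x)=\sum_{i=1}^n\log\!\big[\sum_{\eta=0}^{\lambda+1}\binom{\lambda+1}{\eta}(1-\zeta)^{\lambda+1-\eta}\zeta^{\eta}F(x_i,\eta)\big]$, which is exactly the claimed bound with $F(x_i,\eta)$ as in \eqref{eqn:G-multi}.

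I expect the main obstacle to be the second step: making precise that the quantity entering $F$ for coordinate $i$ is indeed $|\mathbf{g}_i|$ — i.e., the magnitude of the change in coordinate $i$ between adjacent datasets, which after $\ell_2$ clipping is itself constrained so that the vector of these magnitudes is $\prec_w x$ — rather than a cruder uniform per-coordinate bound such as $C$ or $2C$. Getting this alignment right (which gradient-difference vector the $\ell_2$ norm bound of \lemref{lem:majorset} is applied to, and why it legitimately governs the per-coordinate MAF via \thref{MAflaplace}) is precisely what makes the majorization argument tight; by contrast, the monotonicity sub-step and the invocation of Schur-convexity are routine, and the composability and univariate-formula steps are direct citations of the results above.
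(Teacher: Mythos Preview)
Your proposal is correct and matches the paper's approach essentially exactly: the paper states that \thref{MAflaplace-multi} ``naturally follows from Theorems~\ref{MAflaplace}, \ref{schurmaf}, and Lemma~\ref{lem:majorset},'' which is precisely your three-step chain. Your additional care in distinguishing weak majorization from majorization and explicitly verifying coordinatewise monotonicity of $\Phi$ (so that an \emph{increasing} Schur-convex function respects $\prec_w$) is a genuine improvement in rigor over the paper's one-line justification; the paper does establish this monotonicity, but only inside the proof of \thref{schurmaf} (Appendix~\ref{mafschur}) as a step toward the Schur--Ostrowski criterion, without flagging its separate role in passing from $\prec_w$ to the functional inequality.
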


\subsection{DP-SGD Beyond Gaussian and Laplace}
To address the second objective, we propose a new class of noise, namely \emph{PLRV noise}, a randomized-scale extension of the standard Laplace mechanism. The scale parameter \( b \) is drawn from a non-negative probability density function \( f: \mathbb{R}_{\geq 0} \to \mathbb{R}^+ \), defined over the probability space \( (\Omega, \mathcal{F}, P) \). a PLRV noise is particularly notable for its ability to effectively \textbf{separate} accounting from distortion—using its scale PDF parameters. In the next section, we construct a search space of candidate PLRV distributions that can be efficiently optimized to guarantee very small privacy budgets.

\begin{definition}[PLRV Noise]
PLRV noise \( z \in \mathbb{R}^n \) is drawn from a mixture of zero-mean multivariate Laplace distributions, where the scale \( b \) is randomized according to a non-negative density \( f(b) \). Formally, its density is given by
\begin{equation}
\label{eqn:plrvpdf}
    p(z; f) = \int_{\mathbb{R}_{\geq 0}} f(b) \left(\frac{1}{2b}\right)^n \exp\left(-\frac{\|z\|_1}{b}\right) \, db,
\end{equation}

with \( \|z\|_1 = \sum_{i=1}^n |z_i| \) and \( f \) referred to as the PLRV seed.
\end{definition}

\begin{definition}[PLRV Mechanism] \label{def:PLRVmechanism}
Let \( q(d) \in \mathbb{R}^n \) be a numerical query and \( z \sim p(z; f) \) be PLRV noise as defined in ~\eqref{eqn:plrvpdf}. The PLRV mechanism is given by \( M_q(d, f) = q(d) + z \).
\end{definition}

Technically, DP-SGD with a PLRV mechanism involves a two-step sampling process: first, sampling a scale parameter, and second, using the sampled scale to generate noise from a zero-mean multivariate Laplace distribution. 

Derived from the Laplace mechanism, the following exact moments accounting result is obtained by applying the law of total expectation to Theorem~\ref{MAflaplace}. The proof is in Appendix~\ref{proofplrvuni}.

\begin{theorem}[Subsampled univariate PLRV Mechanism]
\label{the:subsampledMAF2232}
Let \( M_{\mathbf{g}} \) be a univariate PLRV mechanism with scale \( b \sim f(b) \), where \( f \) is a non-negative density. It is applied to a single partial derivative \( \mathbf{g} \in [-C, C] \), clipped at threshold \( C \). Assume \( \mathbf{g} \) is computed from a mini-batch sampled with probability \( \zeta = \frac{L}{N} \), where \( L \) is the batch size and \( N \) is the dataset size. Define \( u = 1/b \) and let \( h(u) = \frac{1}{u^2} f\left(\frac{1}{u}\right) \) denote the density of the reciprocal scale. Then, the moments accounting function satisfies:
\[
\alpha_{M_{q}}(\lambda) \leq \log \left\{
\sum_{\eta = 0}^{\lambda + 1} \binom{\lambda + 1}{\eta} (1 - \zeta)^{\lambda + 1 - \eta} \zeta^\eta \mathcal{G}(C, \eta)
\right\},
\]
where
\[
\mathcal{G}(C, \eta) =
\frac{\eta}{2\eta - 1} \mathcal{M}_u\big((\eta - 1)C\big) + \frac{\eta - 1}{2\eta - 1} \mathcal{M}_u\big(-\eta C\big).
\]
\end{theorem}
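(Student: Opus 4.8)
The plan is to transfer the exact single-scale Laplace accountant of \thref{MAflaplace} through the two-stage sampling structure of the PLRV mechanism via a post-processing (equivalently, joint-convexity) argument, and then to rewrite the resulting expectation over the random scale as a moment generating function using the substitution $u=1/b$.

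First I would fix an auxiliary state $\mathrm{aux}$ and a pair of adjacent datasets $d,d'$, and write the moments functional of the PLRV mechanism as $e^{\alpha_{M_q}(\lambda;\mathrm{aux},d,d')}=\int P(o)^{\lambda+1}Q(o)^{-\lambda}\,do$, where $P$ and $Q$ are the PLRV output densities on $d$ and $d'$, each a mixture $\int_{\mathbb{R}_{\ge 0}}f(b)\,p^{(b)}(o)\,db$ of the corresponding scale-$b$ Laplace output densities $p^{(b)}$. The crucial step is to observe that discarding the sampled scale $b$ is a post-processing of the joint release $(b,o)$; since $b$ is drawn independently of the data, the privacy loss of the joint release conditioned on $b$ is exactly the scale-$b$ Laplace privacy loss, whose $\lambda$-moment $e^{\alpha_b(\lambda;\mathrm{aux},d,d')}$ is given by \thref{MAflaplace}. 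By the data-processing inequality for the moments functional --- concretely, by the joint convexity of $(p,q)\mapsto p^{\lambda+1}q^{-\lambda}$ on the positive orthant (valid for $\lambda\ge 0$) applied with the mixing measure $f(b)\,db$, followed by Tonelli's theorem to interchange the integrals over $o$ and $b$ --- I obtain
\[
e^{\alpha_{M_q}(\lambda;\mathrm{aux},d,d')}\;\le\;\mathbb{E}_{b\sim f}\!\left[\,e^{\alpha_b(\lambda;\mathrm{aux},d,d')}\,\right].
\]
Maximizing over $\mathrm{aux},d,d'$ on both sides then yields $e^{\alpha_{M_q}(\lambda)}\le\mathbb{E}_{b\sim f}[e^{\alpha_b(\lambda)}]$.

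Next I would substitute the closed form of \thref{MAflaplace}, namely $e^{\alpha_b(\lambda)}=\sum_{\eta=0}^{\lambda+1}\binom{\lambda+1}{\eta}(1-\zeta)^{\lambda+1-\eta}\zeta^\eta F(C,\eta)$ with $F(C,\eta)=\frac{\eta e^{(\eta-1)C/b}+(\eta-1)e^{-\eta C/b}}{2\eta-1}$, and move $\mathbb{E}_{b\sim f}[\cdot]$ inside the finite sum by linearity, since only $F(C,\eta)$ depends on $b$. This reduces everything to evaluating $\mathbb{E}_{b\sim f}[e^{(\eta-1)C/b}]$ and $\mathbb{E}_{b\sim f}[e^{-\eta C/b}]$. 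With $u=1/b$, the change-of-variables formula gives the reciprocal-scale density $h(u)=u^{-2}f(1/u)$, so $\mathbb{E}_{b\sim f}[e^{tC/b}]=\mathbb{E}_{u\sim h}[e^{tCu}]=\mathcal{M}_u(tC)$. Taking $t=\eta-1$ and $t=-\eta$ gives $\mathbb{E}_{b\sim f}[F(C,\eta)]=\frac{\eta}{2\eta-1}\mathcal{M}_u((\eta-1)C)+\frac{\eta-1}{2\eta-1}\mathcal{M}_u(-\eta C)=\mathcal{G}(C,\eta)$; reassembling the sum and taking the logarithm produces the stated bound.

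The \textbf{main obstacle} is the domination step, i.e.\ showing that the mixture's moments functional is at most the expected conditional moments functional. The cleanest justification is the joint convexity of $(p,q)\mapsto p^{\lambda+1}q^{-\lambda}$ for $\lambda\ge 0$ (via its Hessian, or its perspective-function representation), but one must also (i) ensure the moment generating function $\mathcal{M}_u$ is finite at each argument $(\eta-1)C$ and $-\eta C$ for $0\le\eta\le\lambda+1$ --- an implicit regularity requirement on the seed $f$, and precisely the constraint that later shapes the admissible search space of scale densities --- and (ii) justify the $o$--$b$ interchange, which is immediate here since all integrands are nonnegative, so Tonelli applies. The substitution, the linearity step, and the reciprocal-scale change of variables are all routine once the domination inequality is in place.
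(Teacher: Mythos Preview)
Your proposal is correct. Conceptually it lands on the same identity the paper uses --- condition on the sampled scale $b$, invoke the single-scale Laplace accountant, then average over $b$ and recognize $\mathbb{E}_{b\sim f}[e^{\pm tC/b}]=\mathcal{M}_u(\pm tC)$ --- but the route you take is genuinely different in one respect. The paper's proof reopens the three-region integration from \thref{MAflaplace} inside the mixture setting, carrying the extra layer $\mathbb{E}_{b\sim f}[\cdot]$ through Cases A-I/A-II/A-III by hand via the law of total expectation over the joint $(b,z)$; it never explicitly isolates \emph{why} the mixture's moment is dominated by the averaged conditional moment. You instead treat \thref{MAflaplace} as a black box and supply the missing structural step: the domination follows from post-processing (equivalently, joint convexity of $(p,q)\mapsto p^{\lambda+1}q^{-\lambda}$ applied to the $f(b)\,db$-mixture), which makes the inequality in the theorem statement transparent. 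Your approach is shorter and modular; the paper's is more computational but self-contained. The caveats you flag --- finiteness of $\mathcal{M}_u$ at the required arguments, and Tonelli for the $o$--$b$ interchange --- are the right ones, and indeed the MGF-domain condition is exactly what later constrains the admissible seed densities.
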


\noindent 
$\mathcal{M}_u$ denotes the moment generating function (MGF) for a random variable $u$. (see Appendix~\ref{apndx:Background:prob} for details). %
Using the majorization trick, the following bound is obtained.\footnote{The Schur-convexity of the MAF for a PLRV mechanism (Theorem~\ref{the:subsampledMAF2232}) follows directly from our earlier Laplace proof, as \(\mathcal{G}(C,\eta)\) can be viewed as the expectation of \(F(C,\eta)\).}

\begin{theorem}[Multivariate PLRV Mechanism]
\label{the:plrvmulti}
Let $\bar{\mathbf{G}} = \{ \mathbf{\bar{g}}_1, \dots, \mathbf{\bar{g}}_n \}$ be the set of differentially private gradients obtained by applying $C$-$\ell_2$ clipping followed by a multivariate PLRV mechanism with scale \( b \sim f(b) \), where \( f \) is a non-negative density. Define \( u = 1/b \) and let \( h(u) = \frac{1}{u^2} f\left(\frac{1}{u}\right) \) be the density of the reciprocal scale. Then, the moments accounting function under subsampling probability \( \zeta \) satisfies:

\begin{equation}
\label{mafpara}
\alpha_{\bar{\mathbf{G}}}(\lambda) \leq \sum_{i=1}^n \log \left[
 \sum_{\eta=0}^{\lambda+1} \binom{\lambda+1}{\eta} (1-\zeta)^{\lambda+1-\eta} \zeta^\eta \mathcal{G}(x_i, \eta)
\right],
\end{equation}
where
\begin{equation}
\label{eqn:G-multiplrv}
\mathcal{G}(x_i, \eta) =
\frac{\eta}{2\eta - 1} \mathcal{M}_u\big((\eta - 1)x_i\big) + \frac{\eta - 1}{2\eta - 1} \mathcal{M}_u\big(-\eta x_i\big).
\end{equation}
\end{theorem}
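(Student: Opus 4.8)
The plan is to assemble the bound from three ingredients already available: the single‑coordinate PLRV accountant (Theorem~\ref{the:subsampledMAF2232}), composability of the MAF across the $n$ model parameters, and the Schur‑convexity/majorization machinery developed for the Laplace mechanism (Theorem~\ref{schurmaf} and Lemma~\ref{lem:majorset}). First I would fix a coordinate $i$ and note that, conditioned on a realization of the scale $b$, the multivariate Laplace noise in~\eqref{eqn:plrvpdf} has i.i.d.\ coordinates, so the marginal of coordinate $i$ of the PLRV mechanism is exactly a univariate PLRV mechanism applied to the clipped partial derivative $\mathbf{g}_i$, with the marginal magnitude $|\mathbf{g}_i|$ playing the role of the threshold in Theorem~\ref{the:subsampledMAF2232}. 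Applying that theorem coordinate‑wise gives
\[
\alpha_{\mathbf{\bar{g}}_i}(\lambda)\ \le\ \log\!\Bigl[\textstyle\sum_{\eta=0}^{\lambda+1}\binom{\lambda+1}{\eta}(1-\zeta)^{\lambda+1-\eta}\zeta^{\eta}\,\mathcal{G}(|\mathbf{g}_i|,\eta)\Bigr],
\]
and invoking composability of the MAF over the $n$ parameters (the coordinate‑wise analogue of~\eqref{composmaf}, used in exactly the same way as in passing from Theorem~\ref{MAflaplace} to Theorem~\ref{MAflaplace-multi}) yields $\alpha_{\bar{\mathbf{G}}}(\lambda)\le\sum_{i=1}^n\alpha_{\mathbf{\bar{g}}_i}(\lambda)$, i.e.\ a bound of the claimed shape but evaluated at the \emph{true} marginal magnitudes $|\mathbf{g}_i|$ rather than at the majorization set $x$.

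The second step is to upgrade the $|\mathbf{g}_i|$‑bound to the $x_i$‑bound using majorization. For this I would show that the functional
\[
\Phi(t_1,\dots,t_n)\ =\ \sum_{i=1}^n\log\!\Bigl[\textstyle\sum_{\eta=0}^{\lambda+1}\binom{\lambda+1}{\eta}(1-\zeta)^{\lambda+1-\eta}\zeta^{\eta}\,\mathcal{G}(t_i,\eta)\Bigr]
\]
is both Schur‑convex and coordinate‑wise non‑decreasing on $\mathbb{R}_{\ge 0}^n$. Schur‑convexity is inherited from Theorem~\ref{schurmaf}: as noted in the footnote, $\mathcal{G}(\cdot,\eta)=\mathbb{E}_{b\sim f}\bigl[F(\cdot,\eta)\bigr]$ is a non‑negative mixture of the Laplace kernels $F(\cdot,\eta)$, a non‑negative mixture of Schur‑convex maps is Schur‑convex, and the $\log(\sum_\eta\cdot)$ wrapper preserves the Schur--Ostrowski sign condition exactly as in the Laplace proof, uniformly in $\lambda$. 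Monotonicity is checked at the level of $F$: for each fixed $b$ and $\eta\ge 1$, $\partial_C F(C,\eta;b)=\tfrac{\eta(\eta-1)}{b(2\eta-1)}\bigl(e^{(\eta-1)C/b}-e^{-\eta C/b}\bigr)\ge 0$ (and $F\equiv 1$ for $\eta\in\{0,1\}$), so $\mathcal{G}(\cdot,\eta)$ is non‑decreasing and hence each summand of $\Phi$ is non‑decreasing in its argument.

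Finally, Lemma~\ref{lem:majorset} provides the weak majorization $|\mathbf{G}|\prec_w x$ with $x_i=C(\sqrt{i}-\sqrt{i-1})$; since $\Phi$ is simultaneously Schur‑convex and isotone, weak (sub)majorization suffices to conclude $\Phi(|\mathbf{G}|)\le\Phi(x)$, and chaining with the first step gives $\alpha_{\bar{\mathbf{G}}}(\lambda)\le\Phi(|\mathbf{G}|)\le\Phi(x)$, which is precisely the asserted bound with $\mathcal{G}$ as in~\eqref{eqn:G-multiplrv}. I expect the genuinely delicate point to be the composition in step one: under the PLRV mechanism the scale $b$ is shared across all $n$ coordinates, so the noise is only \emph{conditionally} (not unconditionally) independent, and the naive per‑coordinate composition of the MAF is not immediate. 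The clean route around this is to run the entire composition‑plus‑majorization argument \emph{conditionally on $b$}, where the Laplace results of Section~\ref{sec:PLRV-First} apply verbatim, and only then average over $b\sim f$ via the law of total expectation on the moment generating function of the privacy loss; one has to verify that this averaging commutes with the $\log(\sum_\eta\cdot)$ structure, which is exactly where the identity $\mathcal{G}=\mathbb{E}_b[F]$ and the convexity of the relevant maps are used.
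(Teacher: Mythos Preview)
Your proposal is correct and follows essentially the same route as the paper: the paper derives Theorem~\ref{the:plrvmulti} by combining the univariate PLRV accountant (Theorem~\ref{the:subsampledMAF2232}) with the majorization trick, noting in a footnote that Schur-convexity of the PLRV MAF is inherited from the Laplace proof because $\mathcal{G}(\cdot,\eta)=\mathbb{E}_b[F(\cdot,\eta)]$. Your write-up is in fact more careful than the paper's, as you make explicit the coordinate-wise monotonicity needed to pass from weak majorization to the inequality $\Phi(|\mathbf{G}|)\le\Phi(x)$, and you correctly flag the shared-scale dependence issue and the condition-on-$b$-then-average workaround, neither of which the paper spells out.
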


The PLRV mechanism models seed distributions as linear combinations of non-negative random variables, enabling composable MGFs and a clean MAF bound via weighted log-MGF sums.

%\vspace{-0.1in}

\begin{figure}[ht]
    \centering
    \includegraphics[width=1\linewidth, trim=0pt 280pt 20pt 300pt, clip]{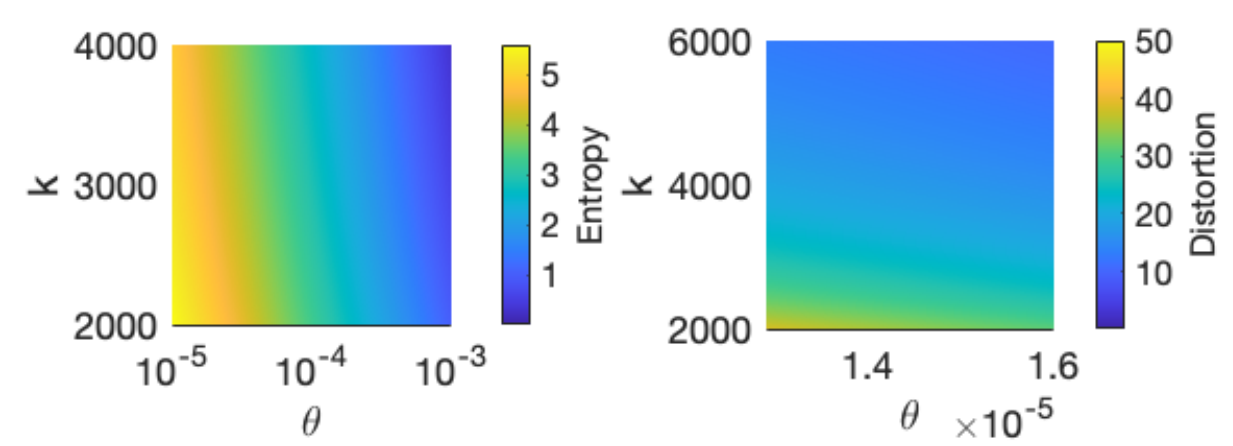}
    \caption{Visualization of $\theta$ controlling entropy (privacy, learnability) and $k$ controlling distortion (accuracy).}
    \label{fig:segreagation}
\end{figure}

\begin{figure*}[!t]
    \centering
    \vspace{-0.6in}
\includegraphics[width=0.9\linewidth, trim=20pt 50pt 50pt 51pt, clip]{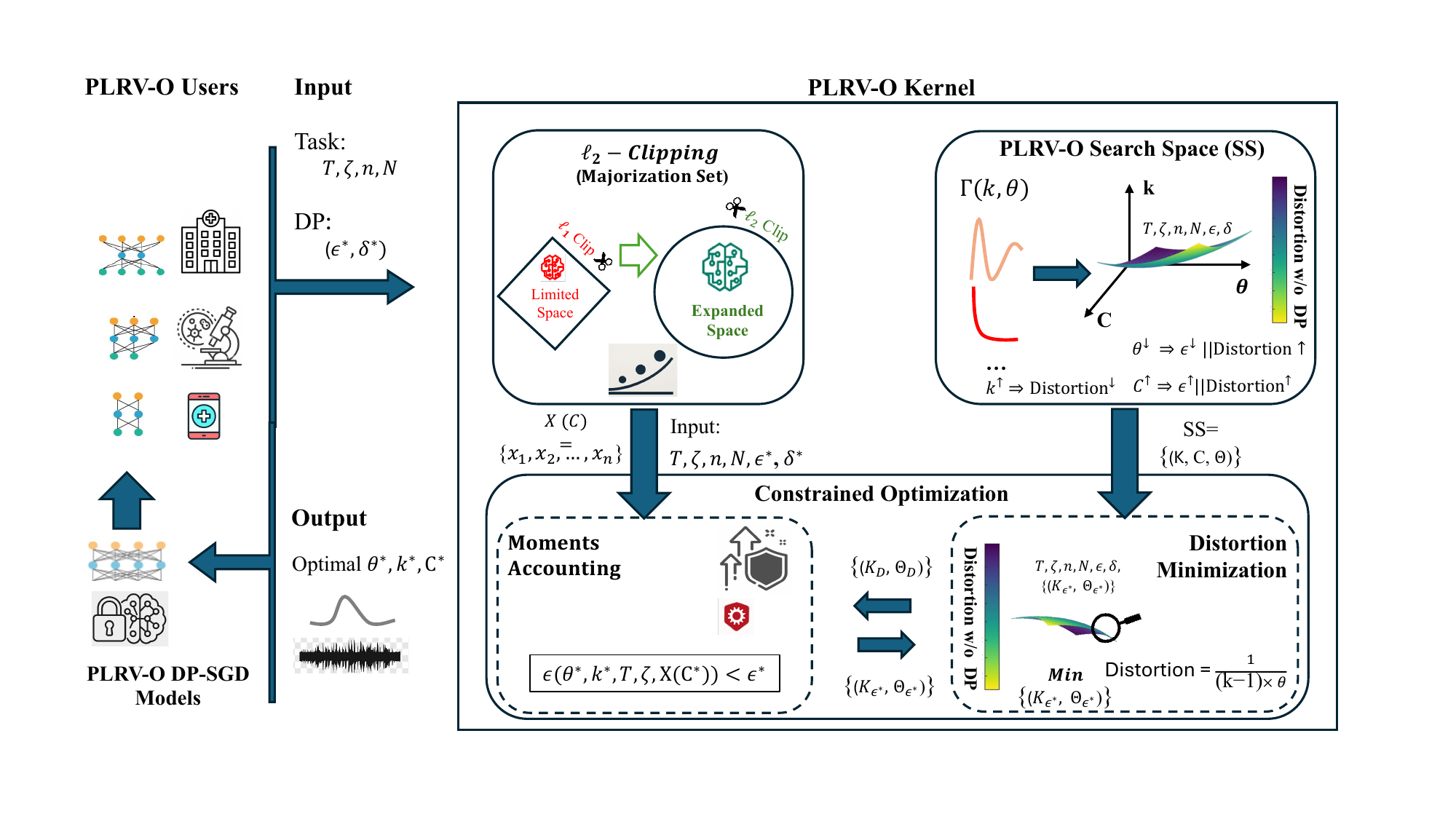}
    \caption{An overview of the \textsf{PLRV-O} framework.}
    \vspace{0.1in}
    \label{fig:overview}
\end{figure*}

\noindent\textbf{Gamma-Distributed Seeds in PLRV.} This composability enables flexible MAF dynamics optimized for learning parameters \( X = (\mathrm{E}, B, C) \) and privacy targets \( (\epsilon, \delta) \). Positive-domain distributions with known MGFs—e.g., exponential, chi-squared, Weibull—can be combined to tailor noise for specific trade-offs. Among them, gamma distributions are especially effective due to their two-parameter flexibility and maximum entropy property (see Figure~\ref{fig:segreagation}).\footnote{The gamma distribution maximizes entropy among distributions with fixed positive mean \( E[X] = \alpha / \lambda \) and fixed \( E[\ln X] = \psi(\alpha) - \ln \lambda \), where \( \psi \) is the digamma function~\cite{Walker1965ProbabilityTA}.}  We adopt this family, denoted \(\Gamma\)-PLRV, in the remainder of the paper.

\begin{definition}[\(\Gamma\)-PLRV Noise]
The \(\Gamma\)-PLRV noise is a specific instantiation of the PLRV noise, where the inverse scale parameter \( u \) follows a gamma distribution with shape parameter \( k > 0 \) and scale parameter \( \theta > 0 \). The resulting noise density is expressed as:
\[
p(z; k, \theta) = \int_{0}^{\infty} \frac{b^{k-1} e^{-b/\theta}}{\theta^n \Gamma(k)} \left(\frac{1}{2b}\right)^n \exp\left(-\frac{\|z\|_1}{b}\right) \, db,
\]

where \(\Gamma(k)\) is the gamma function. 
\end{definition}
\noindent
The \(\Gamma\)-PLRV class allows flexibility in adjusting the noise properties through \( k \) and \( \theta \).

\section{The \textsf{PLRV-O} Framework}
\label{sec:analysis}

The \textsf{PLRV-O} framework (see Figure~\ref{fig:overview}) 
enables task-specific optimization of DP training by adapting noise and clip parameters 
to the learning setup. The \textsf{PLRV-O} Kernel takes user-provided input \(\big[T, \zeta, n, N, \epsilon^*, \delta^*\big]\) and searches for the optimal configuration \((\theta^*, k^*, C^*)\) using Algorithms~\ref{alg:optim} and \ref{alg:sumplrvo}. The selected noise and clip parameters are then applied in our \textsf{PLRV-O} DP-SGD (see Algorithm~\ref{alg:dp-adam} in the Appendix). \textsf{PLRV-O} comprises three key modules:

\vspace{0.05in}

\noindent\textbf{(1) \textsf{PLRV-O} Search Space (SS).} This module constructs a candidate set of \textsf{PLRV-O} configurations—triplets of shape $k$, scale $\theta$, and clipping threshold $C$—from the user-specified training and privacy parameters $(T, \zeta, n, N, \epsilon^*, \delta^*)$. The result is a search space $\{(k,\theta,C)\}$, described in detail in Section~\ref{sec:opt-plrvo}, with alternative construction strategies discussed in Appendix~\ref{parameter}.

\vspace{0.05in}
\noindent\textbf{(2) $\ell_2$-Clipping (Majorization Set).} This module generates a majorization set based on the model size and clipping parameter (in search space), thereby extending the $\ell_2$-clipping to the \textsf{PLRV-O} mechanism, as discussed in Section~\ref{Majorization}.

\vspace{0.05in}

\noindent\textbf{(3) Constrained Optimization.} 
This module solves a constrained optimization problem to obtain the optimal 
$(k^*,\theta^*, C^*)$. The \textit{Moments Accounting }block (left-hand component) computes 
a tight upper bound on $\epsilon$ given $\delta$ for the \textsf{PLRV-O} mechanism. 
The \textit{Distortion Minimization }block (right-hand component) filters the candidate 
search space to retain only configurations under a specified distortion threshold 
(e.g., distortion $<10$). Both blocks jointly define the constraints for the optimization. 

\begin{algorithm}
\caption{Constrained Optimization of \textsf{PLRV-O} Parameters}
\label{alg:optim}
\begin{algorithmic}[1]
\State \textbf{Input:} Privacy budget $\epsilon(\delta$), sampling rate $\zeta$, steps $T$, max moment order $\lambda_{\max}$, convergence tol.\ $\tau$, clip-span ratio $\rho\!\approx\!2$
\State \textbf{Output:} $(k^*,\theta^*,C^*)$

\Statex \textbf{// Phase I: determine a feasible clip range}
\State $C_{\min}\gets $\textsc{smallest $C$ with above random accuracy after a few epochs}
\State $C_{\max}\gets \rho\,C_{\min}$  \Comment{experimentally, $\rho\approx2$}
\State $\mathbf{c_0}:~ C_{\min}\le C\le C_{\max}$ \Comment{clip bounded search}
\Statex \textbf{// Phase II: joint optimization over $(k,\theta,C)$}

\State \textbf{Objective:} $J(k,\theta,C)= C\,(k-1)\,\theta$ 

\Statex \textbf{// Constraints}
\State $\mathbf{c_1}:~ \mathrm{GammaCDF}(0.1; k,\theta)\approx 0$ \Comment{avoids unstable large-$b$ regimes}
\State $\mathbf{c_2}:~ \epsilon(\delta)  \gets  \textsc{MAF\_\textsf{PLRV-O}}(k,\theta,\zeta,T,C, \delta)$ \Comment{privacy budget match}
\State $\mathbf{c_3}:~ k>1$ \Comment{ensures finite expected $\ell_1$-error}
\State $\mathbf{c_4}:~ \theta>\dfrac{0.1}{k-1}$ \Comment{caps distortion}

\State \textbf{Solve:} $\displaystyle \max_{k,\theta,C}\ J(k,\theta,C)\ \ \text{s.t.}\ \ c_0\text{--}c_4$ \Comment{use a constrained nonlinear solver, e.g., \texttt{fmincon}}
\State \Return $(k^*,\theta^*,C^*)$
\end{algorithmic}
\end{algorithm}

We now analyze the \textsf{PLRV-O} framework, showing that Algorithm~\ref{alg:dp-adam} (in Appendix) solves Problem~\eqref{eq:noise-opt:1} from Section~\ref{sec:problem_statement}.

\begin{algorithm}[t]
\caption{\textsf{PLRV-O} Moments Accountant (\textsc{MAF\_\textsf{PLRV-O}})}
\label{alg:sumplrvo}
\begin{algorithmic}[1]
\Require Moment $\lambda\in\mathbb{N}_+$, Sampling rate $\zeta$, \textsf{PLRV-O} parameters $(k,\theta)$, Population $N$, Clipping threshold $C$
\Ensure Order-$\lambda$ log moment $\alpha_\lambda$ 
\State $n\gets \lambda+1$
\State sum$\gets 0$
\State $\log w_\eta\gets\emptyset$
\Statex \vspace{0.5ex} \textbf{    // Define \textsf{PLRV-O} coefficients}
    \State $b_1(\eta)\gets \dfrac{\eta}{2\eta-1}$
    \State $b_2(\eta)\gets \dfrac{\eta-1}{2\eta-1}$
    \State $\alpha_1(\eta)\gets C(\eta-1)\theta$
    \State $\alpha_2(\eta)\gets C\eta\theta$
\For{$i=1$ \textbf{to} $N$} \Comment{can be parallelized}
  \State $g\gets \sqrt{i}-\sqrt{i-1}$
  \State $\ell^F\gets\emptyset$
  \For{$\eta=0$ \textbf{to} $n$} \Comment{logsum: can be parallelized}
    \Statex \vspace{0.5ex} \textbf{\quad \,\quad \, \,    // Log binomial weights}
    \State $\log w_\eta \gets \log\binom{n}{\eta} + (n-\eta)\log(1-\zeta) + \eta\log \zeta$
    \Statex \vspace{0.5ex} \textbf{\quad \,\quad \, \,   // Log \textsf{PLRV-O} terms (for all $\eta$)}
    \State $\ell^{-} \gets \log b_1(\eta)\;-\;k\,\log\!\big(1-\alpha_1(\eta)\,g\big)$
    \State $\ell^{+} \gets \log b_2(\eta)\;-\;k\,\log\!\big(1+\alpha_2(\eta)\,g\big)$
    \State max$_\ell\gets\max(\ell^{-},\ell^{+})$
    \Statex \vspace{0.5ex} \textbf{\quad \,\quad \, \, // Two-branch merge (log-sum-exp)}
    \State $\ell^{F}_\eta \gets max_\ell + \log\!\big(e^{\ell^{-}-max_\ell}+e^{\ell^{+}-max_\ell}\big)$
      \EndFor
    \Statex \vspace{0.5ex} \textbf{\quad \,  // Combine with weights (log-sum-exp over $\eta$)}
    \State \small $\ell_i \gets \max_{\eta}\{\log w_\eta + \ell^{F}_\eta\} + \log\!\sum_{\eta} \exp\!\big((\log w_\eta + \ell^{F}_\eta) - \max_{\eta}\big)$ \normalsize
    \State sum$\gets$ sum + $\ell_i$
\EndFor
\State $\alpha_\lambda \gets$ sum
\State \Return $\alpha_\lambda$ 
\end{algorithmic}
\end{algorithm}

\subsection{Analysis: Formulating \textsf{PLRV-O} Constraints}
\label{sec:opt-plrvo}
Recent work on scaling laws for baseline Gaussian DP-SGD, e.g., Sander et al.~\cite{10.5555/3618408.3619650}, shows in a one-dimensional analysis that the signal-to-noise ratio (SNR) governs the privacy budget. In this view, the effective noise multiplier $\sigma/\zeta$ captures learning degradation under a fixed budget, independent of $C$, since clipping cancels out in the SNR and does not appear in privacy accounting (only entering through the noise scale $\sigma(\epsilon)\!\times\!C$).
 Consequently, once $C$ is large enough for convergence, further increases merely add distortion without improving accuracy.
\begin{figure}[ht]
    \centering
\includegraphics[width=0.9\linewidth, trim=0pt 180pt 20pt 213pt, clip]{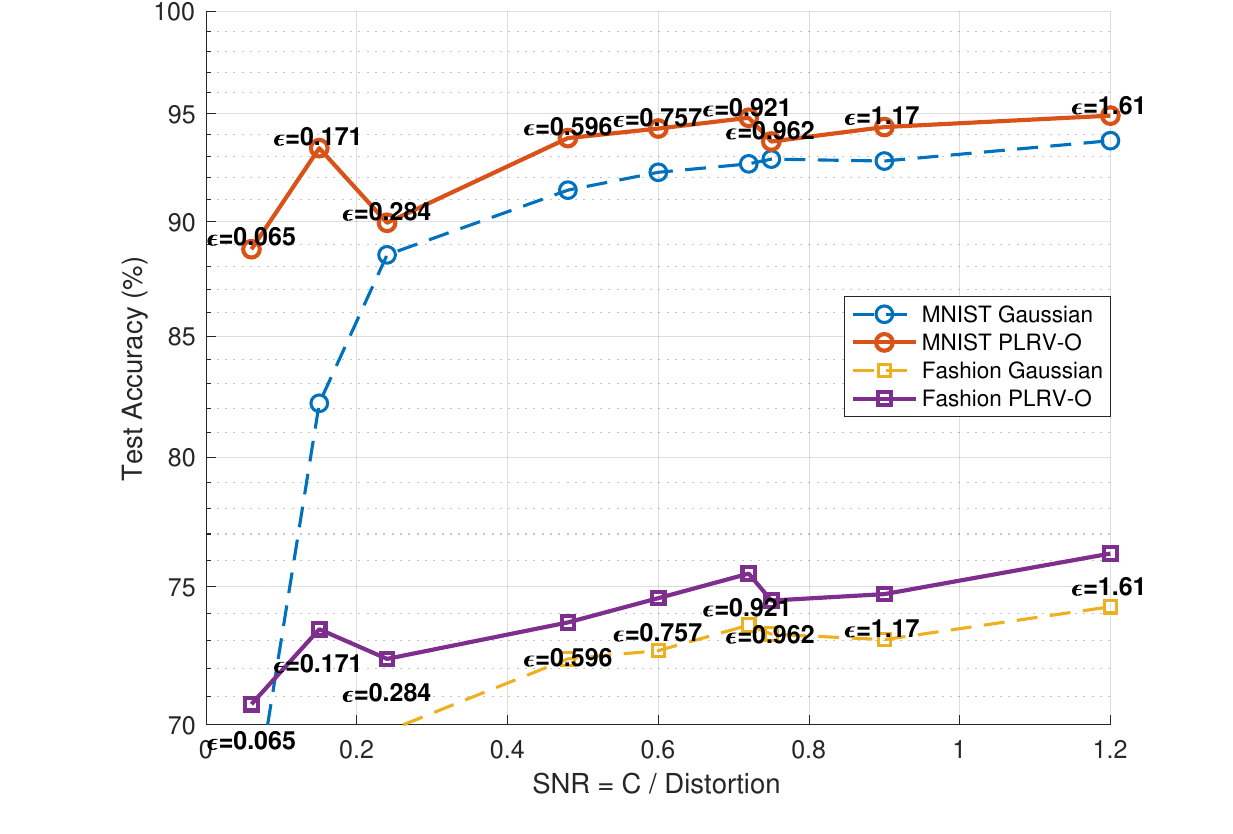}
    \caption{Signal-to-noise ratio (SNR) as a surrogate utility function for optimizing \textsf{PLRV-O} parameters.}
    \label{fig:SNR}
\end{figure}

For \textsf{PLRV-O}, however, clipping appears directly in the MAF equation. This coupling means the privacy–utility trade-off depends jointly on $(k,\theta,C)$, with possible inflection points where changes in $C$ alter the balance between distortion and privacy. Thus, unlike Gaussian, PLRV-O requires joint optimization over all three parameters rather than fixing the clip a priori. Figure~\ref{fig:SNR} confirms that this SNR definition closely tracks accuracy across privacy budgets, with each point annotated by its $\epsilon$ value. 

Motivated by this connection, we adopt an analogous SNR view for \textsf{PLRV-O}. In particular, we use the expected $\ell_1$-error (proved in Appendix~\ref{apdx:proof:error_PLRV}) as a tractable surrogate for distortion, and combine it with clipping to obtain
\[
\mathrm{SNR}(k,\theta,C) \;=\; \frac{C}{\ell_1(z)} \;=\; C\,(k-1)\theta.
\]
The distortion of \textsf{PLRV-O} noise, characterized in Theorem~\ref{error_PLRV} (proved in Appendix~\ref{apdx:proof:error_PLRV}). 

\begin{theorem}[Expected $\ell_1$-Error of $\Gamma$-\textsf{PLRV-O} Noise] \label{error_PLRV} Let $z$ be \textsf{PLRV-O} noise with inverse scale $u$. Then the expected $\ell_1$-error is \[ \ell_1(z) = \mathbb{E}[\|z\|_1] = \int_{0}^{\infty} \mathcal{M}_u(-z)\,dz. \] For $\Gamma(k,\theta)$-\textsf{PLRV-O} noise with $k>1$, this simplifies to $\ell_1(z)=\tfrac{1}{(k-1)\theta}$, and diverges otherwise. \end{theorem}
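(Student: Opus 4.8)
The plan is to strip away the two layers of randomness one at a time: first average over the Laplace noise for a fixed realized scale, then average over the random scale, and finally recognize that the expectation of a reciprocal can be written as a one-dimensional integral of a moment generating function.

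First I would condition on the scale $b$, equivalently on the inverse scale $u = 1/b$. By the definition of PLRV noise, conditioned on $b$ the noise is Laplace with scale $b$, so the expected magnitude of a coordinate equals $b = 1/u$; the tower property then gives $\ell_1(z) = \mathbb{E}_u[1/u]$ (for the full $n$-dimensional vector one picks up a factor $n$, and, exactly as the Gaussian distortion $\sigma\sqrt{2/\pi}$ is stated, the measure is taken per coordinate). The key identity is the elementary $1/u = \int_0^\infty e^{-zu}\,dz$, valid because $u > 0$ almost surely. Since the integrand is nonnegative, Tonelli's theorem licenses interchanging expectation and integration, so
\[
\ell_1(z) = \int_0^\infty \mathbb{E}_u[e^{-zu}]\,dz = \int_0^\infty \mathcal{M}_u(-z)\,dz,
\]
which is the claimed general formula; note nothing specific to the gamma law has been used up to this point.

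Next I would substitute $u \sim \Gamma(k,\theta)$, whose MGF is $\mathcal{M}_u(t) = (1-\theta t)^{-k}$ for $t < 1/\theta$, so $\mathcal{M}_u(-z) = (1+\theta z)^{-k}$ for every $z \ge 0$. The change of variables $v = 1 + \theta z$ reduces the integral to $\tfrac1\theta\int_1^\infty v^{-k}\,dv$; its antiderivative is a multiple of $v^{1-k}$, which vanishes at infinity precisely when $k > 1$, yielding $\ell_1(z) = \tfrac{1}{(k-1)\theta}$, and diverges when $k \le 1$. As a cross-check, integrating $1/u$ directly against the gamma density and using $\Gamma(k) = (k-1)\Gamma(k-1)$ gives the same $\tfrac{\Gamma(k-1)}{\theta\,\Gamma(k)} = \tfrac{1}{(k-1)\theta}$, again requiring $k > 1$ for finiteness.

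I do not anticipate a genuine obstacle. The only steps needing care are justifying the Tonelli interchange in Step 2 — covered by nonnegativity together with $u > 0$ almost surely, which holds for the gamma seed — and pinning down the $k \le 1$ boundary case, which is immediate once the exponent $1 - k$ is tracked in the final integral. Everything else is a standard MGF evaluation, requiring no auxiliary lemmas beyond the closed form of the gamma MGF.
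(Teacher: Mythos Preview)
Your proof is correct but takes a genuinely different route from the paper. The paper starts from the full integral expression for $\mathbb{E}[\|z\|_1]$ with the mixed Laplace density, uses the identity $\|z\|_1 e^{-u\|z\|_1} = -\tfrac{d}{du}e^{-u\|z\|_1}$, invokes Leibniz's rule to pull the derivative outside the spatial integral, and then integrates by parts in $u$ to arrive at $\int_0^\infty g(u)/u\,du = \mathbb{E}[1/u]$. You instead condition on the scale and use the elementary fact that a Laplace$(b)$ variable has mean absolute value $b$, so the tower property gives $\mathbb{E}[1/u]$ in one line; you then rewrite $1/u$ via $\int_0^\infty e^{-zu}\,dz$ and apply Tonelli to reach the MGF form. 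Your path is shorter and avoids the Leibniz-rule verification the paper devotes a separate subsection to; it also makes the bridge to the stated $\int_0^\infty \mathcal{M}_u(-z)\,dz$ form explicit, which the paper's proof leaves implicit (stopping at $\int g(u)/u\,du$). The paper's approach has the minor advantage of working directly from the density without quoting the Laplace first moment, but your conditioning argument is both cleaner and more transparent about where finiteness can fail. Your gamma specialization and the $k\le 1$ divergence analysis are also more complete than what appears in the paper's proof.
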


This observation allows us to formulate PLRV-O parameter optimization directly as maximizing
\(
J(k,\theta,C) = C (k-1)\theta
\),
subject to feasibility, stability, and privacy constraints. 

The essential constraints arise from both distributional validity and the privacy budget:

\vspace{3pt}
 \noindent \textbf{$\mathbf{c_0}$-Clipping Bounds:} $C_{\min} \le C \le C_{\max}$, where $C_{\min}$ is the smallest value that ensures convergence and $C_{\max}\!\approx\!2C_{\min}$ prevents unnecessary inflation.

    \vspace{3pt}
\noindent \textbf{Moment Generating Function:} since $u \sim \Gamma(k,\theta)$ with $b=1/u$, feasibility requires $\theta < 1/t$, equivalently $\lambda_{\max} C \theta < 1$, because
    \[
    \mathcal{M}_u(t) = (1 - t\theta)^{-k}, \quad t < \tfrac{1}{\theta}.
    \]
    With $C < 1$ in practice and typically $\theta < 10^{-3}$, this condition is loose, allowing $\lambda_{\max} \approx 10^3$.

  \vspace{3pt}  
\noindent\textbf{$\mathbf{c_1}$- Laplace Scale-parameter Stability:} $\text{GammaCDF}(0.1; k,\theta) \approx 0$,
    which suppresses configurations with $b > 10$ and avoids unstable noise regimes.

    \vspace{3pt}
\noindent \textbf{$\mathbf{c_2}$- Privacy Constraint:}
 $\epsilon(\delta)  \gets  \textsc{MAF\_\textsf{PLRV-O}}(k,\theta,\zeta,T,C, \delta)$.

    \vspace{3pt}
\noindent \textbf{$\mathbf{c_3}$ \& $\mathbf{c_4}$- Distortion Constraints:} $\mathbf{c_3:}~~k>1,~~~ 
\mathbf{c_4:}~\theta \geq \tfrac{0.1}{k-1}$.

Here $c_3$ ensures finite expected $\ell_1$-error, while $c_4$ caps distortion at 10, a threshold that in practice corresponds to unstable regimes with weak accuracy. Together, constraints $\mathbf{c_0}$–$\mathbf{c_4}$ define the feasible search space, yielding the constrained optimization problem:
\[
\max_{k,\theta,C}\; J(k,\theta,C) = C\,(k-1)\theta
\quad \text{s.t.} \quad c_0\text{--}c_4.
\]

We solve this nonlinear program using a standard nonlinear constrained optimization routine, which accommodates bound, linear, and nonlinear constraints, to obtain the optimal $(k^*,\theta^*,C^*)$.

\section{Evaluation}
\label{sec:eval}

We evaluate DP-SGD with Gaussian and \textsf{PLRV-O} on computer vision (CV) and natural language processing (NLP) tasks, and aim to: (i) assess task utility,  
(ii) compare empirical privacy guarantees vs. theoretical counterparts, and  
(iii) analyze the role of the parameters $k$, $\theta$, and $C$ for interpretability of \textsf{PLRV-O}. Results on runtime, convergence, and boosting orthogonal methods are also provided.

\subsection{Performance for CV Tasks}

We evaluated our method with Gaussian-based DP-SGD on CIFAR-10, CIFAR-100~\cite{krizhevsky2009learning}, and MNIST~\cite{lecun1998gradient}. To demonstrate that the framework can optimize training noise from scratch, it was tested on three image classification models: a 4-layer CNN, a ResNet18, and a Vision Transformer (ViT) model with stride 16. The model sizes are about 26 thousand, 11 million, and 85 million, respectively. 
In these experiments, the clipping threshold has a wide range, from $0.3$ to $15$, and $\theta$ is very small (e.g., $5\times10^{-4}$), which is consistent with the theory that the parameter $k$ controls the accuracy and $\theta$ controls the stability. 
The ideal result, then, will have $larger$ $k$ for less destructive noise and thus better accuracy, while also having smaller $\theta$ for noise generation stability. 

\vspace{-0.05in}

\begin{figure}[!h]
    \centering
    \begin{subfigure}[b]{0.225\textwidth}       \includegraphics[width=\linewidth]{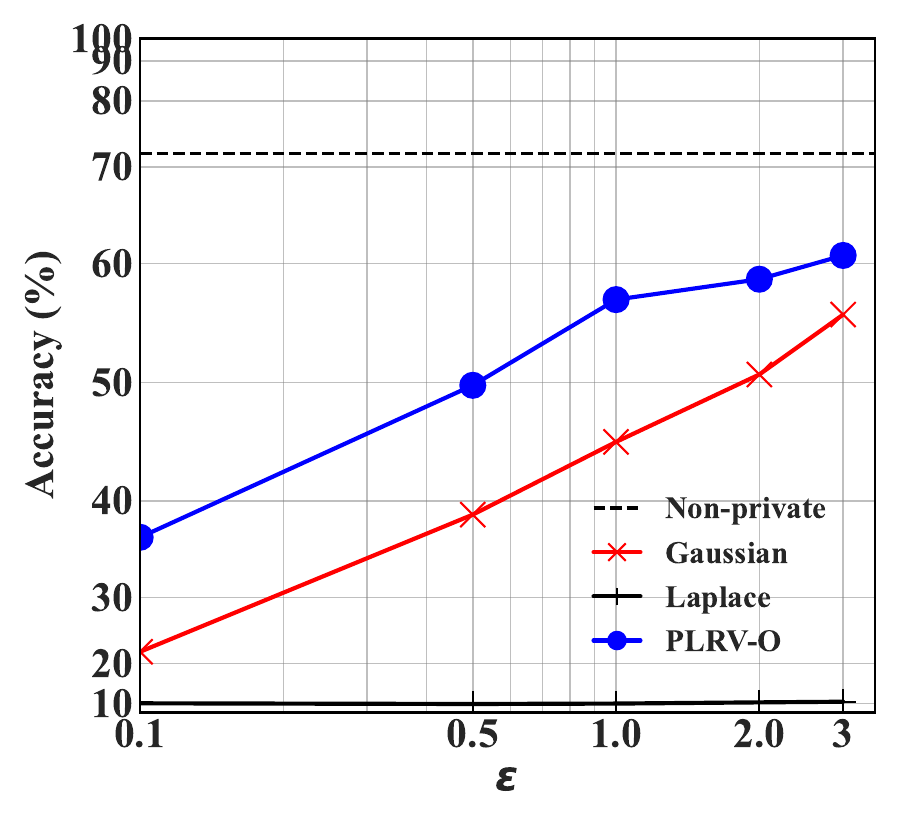}
        \caption{ResNet18 with CIFAR-10}
    \end{subfigure}
    \begin{subfigure}[b]{0.22\textwidth}
\includegraphics[width=\linewidth]{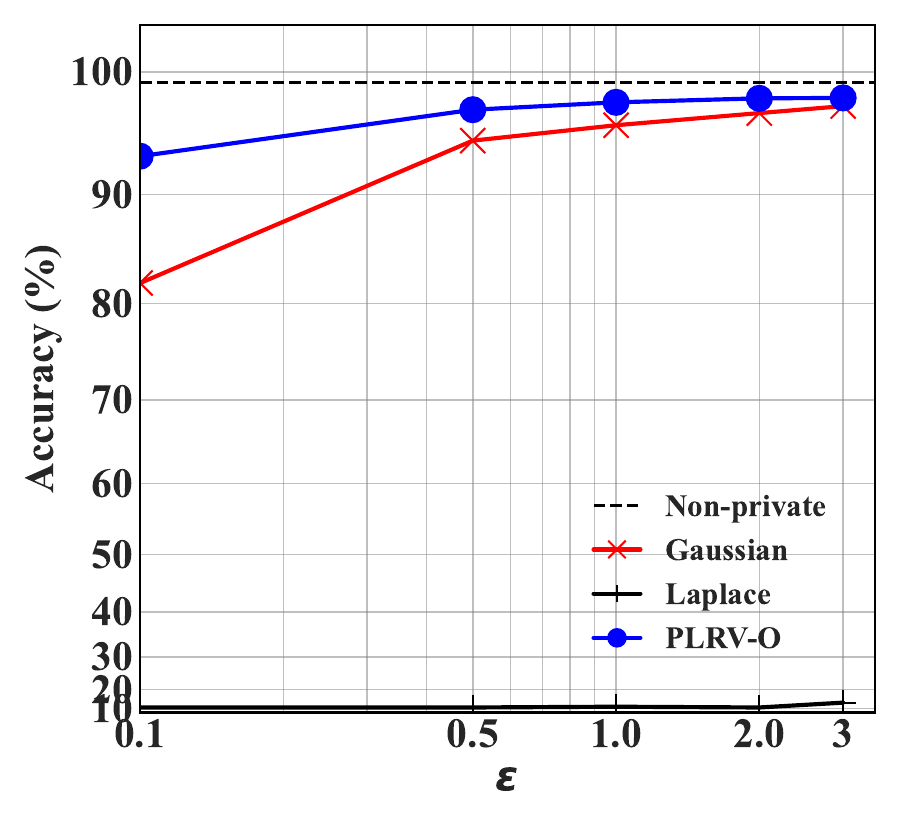}
\caption{Small CNN with MNIST}
    \end{subfigure}
    \vspace{0.15in}
   \caption{Model evaluation results for CV tasks.}
   %\vspace{0.2in}
    \label{fig:utility_cv}
\end{figure}

Figure~\ref{fig:utility_cv} presents the model utility across a range of privacy budgets 
$\epsilon$, comparing Gaussian, Laplace, and \textsf{PLRV-O} mechanisms, on the convolutional network including ResNet18 and 4-layer CNN, trained on CIFAR-10 and CIFAR-100, respectively. Across both subplots, \textsf{PLRV-O} consistently outperforms Gaussian and Laplace. Specifically, for ResNet18 on CIFAR-10, \textsf{PLRV-O} achieves up to 30\% higher accuracy than Gaussian under low privacy budgets. For ViT with CIFAR-10, \textsf{PLRV-O} surpasses Gaussian by approximately 10\% 
(see Table~\ref{tab:dp_mechanism_comparison}). 
These results highlight the effectiveness of \textsf{PLRV-O}, especially in low privacy regimes, and demonstrate its robustness across both CNNs and transformer-based models.

\vspace{0.1in}

\begin{table}[!h]
\small
\setlength{\tabcolsep}{2pt}
\centering
\caption{\textsf{PLRV-O} and Gaussian mechanisms for fine-tuning a ViT model (stride = 16). The base model was trained on ImageNet and fine-tuned on CIFAR-10. Noise parameters are ($k$, $\theta$) for \textsf{PLRV-O} and $\sigma$ for Gaussian.}
\begin{adjustbox}{max width=\columnwidth}
\begin{tabularx}{\columnwidth}{l c c c c c c c p{2.6cm}}
\toprule
 & Acc. (\%) & $\mathbf{\epsilon}$ & Steps & S. rate & $\delta$ & Dist. & Clip & Noise Params \\
\midrule
\textsf{PLRV-O}     & \bf 93.63 & 1.7  & 250 & 0.01024 & $2{\times}10^{-5}$ & 8.58 & 10.0 & $k{=}141.06$, $\theta{=}8.32\times10^{-4}$ \\
\textsf{PLRV-O}     & \bf 92.90 & 0.46  & 250 & 0.01024 & $2{\times}10^{-5}$ & 9.17 & 5.0  & $k{=}5242.4$, $\theta{=}2.08\times10^{-5}$ \\
\midrule
Gaussian     & 89.01 & 1.7  & 250 & 0.01024 & $2{\times}10^{-5}$ & 3.77 & 5.0  & $\sigma{=}0.9456$ \\
Gaussian     & 83.93 & 0.46 & 250 & 0.01024 & $2{\times}10^{-5}$ & 22.51  & 15.0 & $\sigma{=}1.8812$ \\
\bottomrule
\end{tabularx}
\end{adjustbox}
\vspace{-0.05in}
\label{tab:dp_mechanism_comparison}
\end{table}

\noindent \textbf{Interpretability of \textsf{PLRV-O} Parameters.} See Table~\ref{tab:cnn_mnist_results}--\ref{tab:cnn_fmnist_results} in Appendix~\ref{exp_appendix:CV} for results on multiple $\epsilon$ values in the high-privacy regime ($\epsilon \leq 1.6$)  
% and the moderate-privacy regime ($1.6 < \epsilon < 5$), respectively, 
for the CNN model on the MNIST and Fashion-MNIST datasets.
The optimized clipping threshold $C$, $\theta$, and $k$ for the given steps, $q$, $\delta$ were found.
After training with these parameters, the \textsf{PLRV-O} performed far better than the Gaussian mechanism, with accuracy approaching training without noise. 
More results related to the optimized parameters are shown in Appendix~\ref{exp_appendix:parameter}.

\subsection{Performance for NLP Tasks}
We also evaluate \textsf{PLRV-O} on two NLP tasks using language models: sentence classification and text generation.
\subsubsection{Performance on Sentence Classification}
Our evaluation is on GLUE benchmarks (SST-2: distinguish between positive and negative emotions; QNLI: determine whether a context sentence contains the answer to a given question)~\cite{wang2018glue} with experiments conducted on RoBERTa-base, RoBERTa-large~\cite{liu2019roberta}, BERT-base, BERT-large~\cite{devlin2018bert}, comparing performance with the non-private training and baseline DP-SGD method~\cite{li2021large}. 
In detail, SST-2 has more than 60k+ samples in the training set, and QNLI has more than 100k+ samples; SST-2 and QNLI include two classes each. In this experiment, we use the full training and test sets. 

\begin{figure*}[!h]
    \centering
    \begin{subfigure}[b]{0.24\textwidth}       \includegraphics[width=\linewidth]{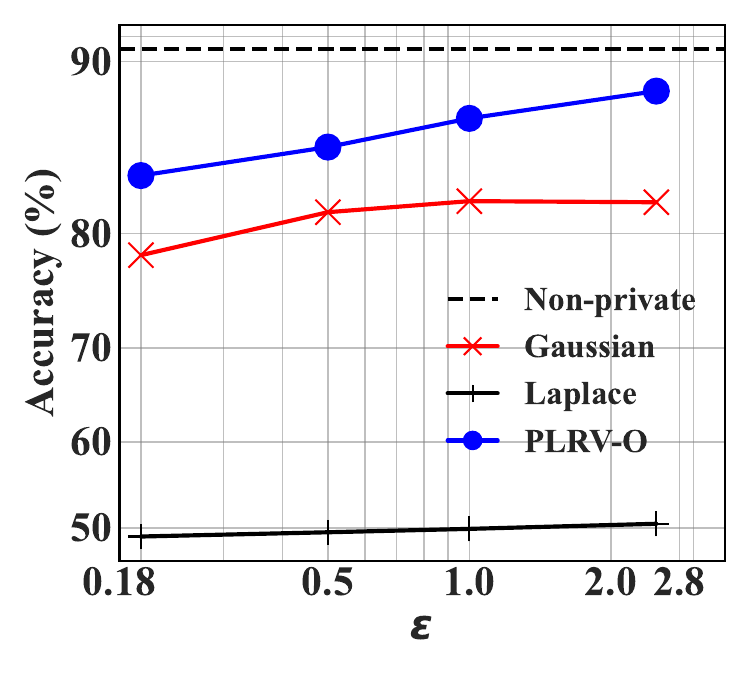}
        \caption{QNLI with BERT-base}
    \end{subfigure} 
    \begin{subfigure}[b]{0.24\textwidth}
\includegraphics[width=\linewidth]{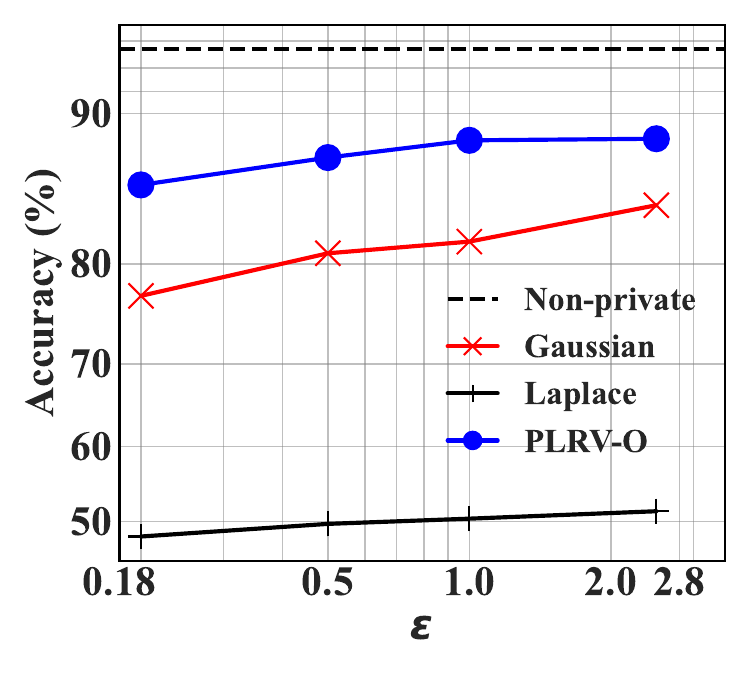}
        \caption{QNLI with BERT-large}
    \end{subfigure}    
    \begin{subfigure}[b]{0.24\textwidth}        \includegraphics[width=\linewidth]{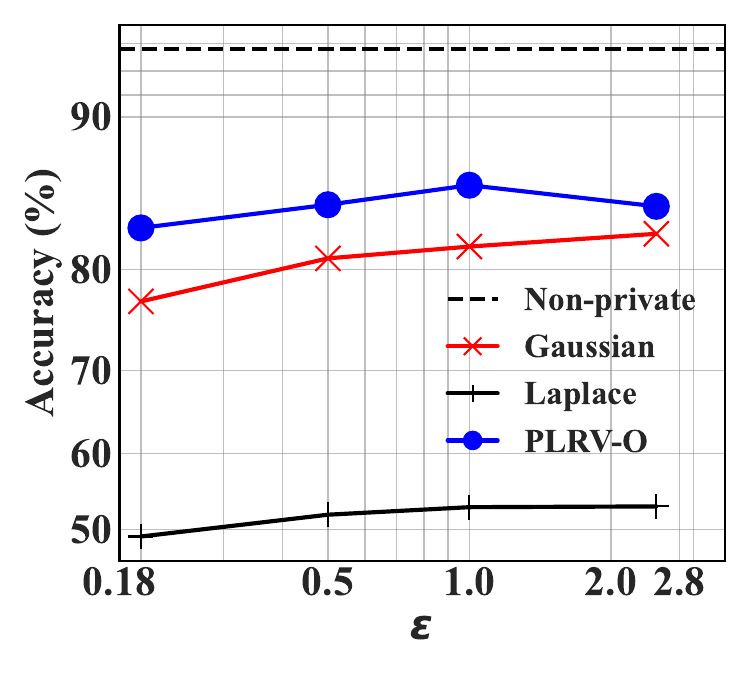}
        \caption{QNLI with RoBERTa-base}
    \end{subfigure}   
    \begin{subfigure}[b]{0.24\textwidth}      \includegraphics[width=\linewidth]{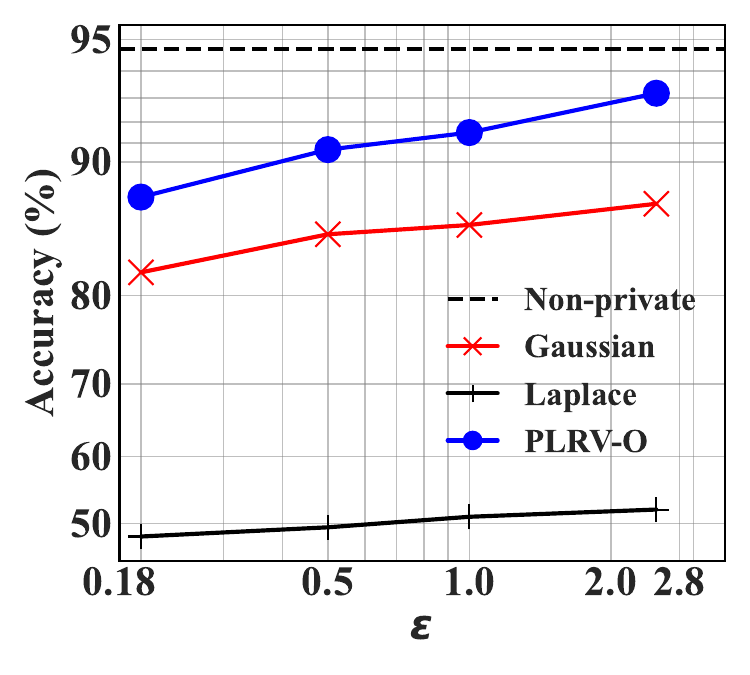}
    \caption{QNLI with RoBERTa-large}
    \end{subfigure}
    \vspace{0.15in}
    \caption{Model evaluation results for QNLI tasks using BERT and RoBERTa (Base and Large).}
    \vspace{0.1in}
    \label{fig:utility_qnli}
\end{figure*}

\begin{figure*}[!h]
    \centering
    \begin{subfigure}[b]{0.24\textwidth}      \includegraphics[width=\linewidth]{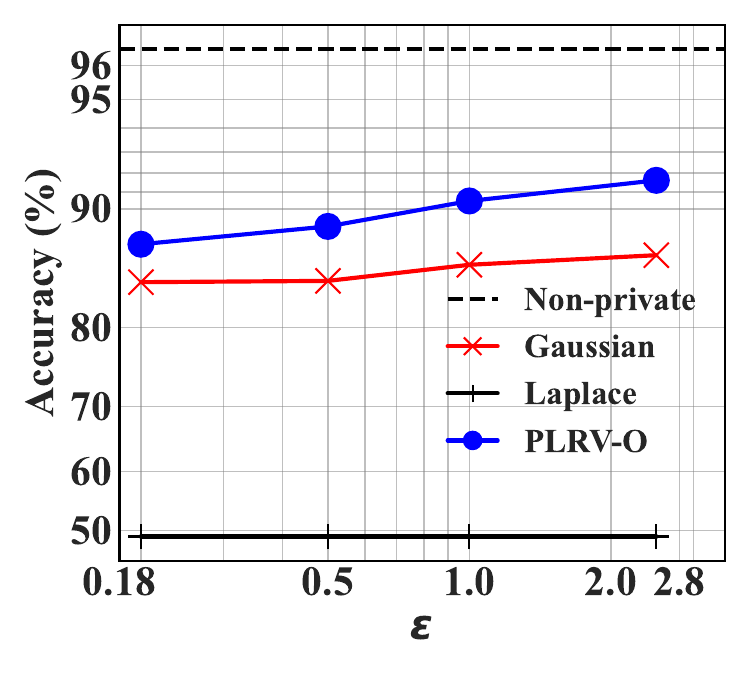}
        \caption{SST-2 with BERT-base}
    \end{subfigure}
    \begin{subfigure}[b]{0.24\textwidth}
    \includegraphics[width=\linewidth]{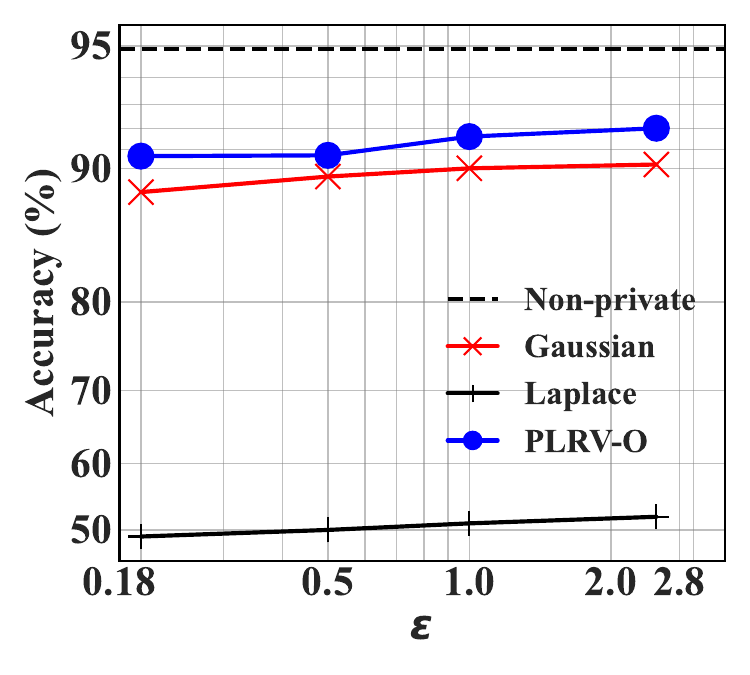}
        \caption{SST-2 with BERT-large}
    \end{subfigure}
    \begin{subfigure}[b]{0.24\textwidth}    \includegraphics[width=\linewidth]{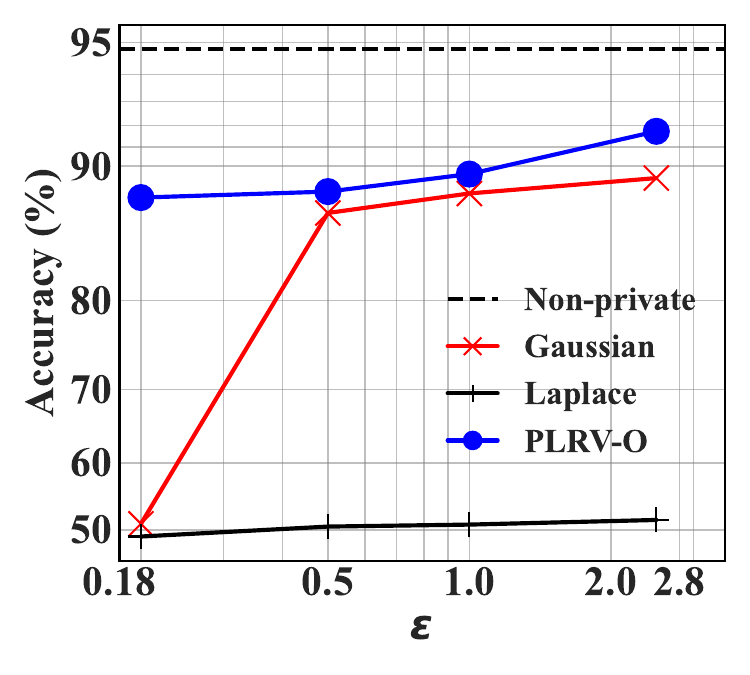}
        \caption{SST-2 with RoBERTa-base}
    \end{subfigure}
    \begin{subfigure}[b]{0.24\textwidth}
  \includegraphics[width=\linewidth]{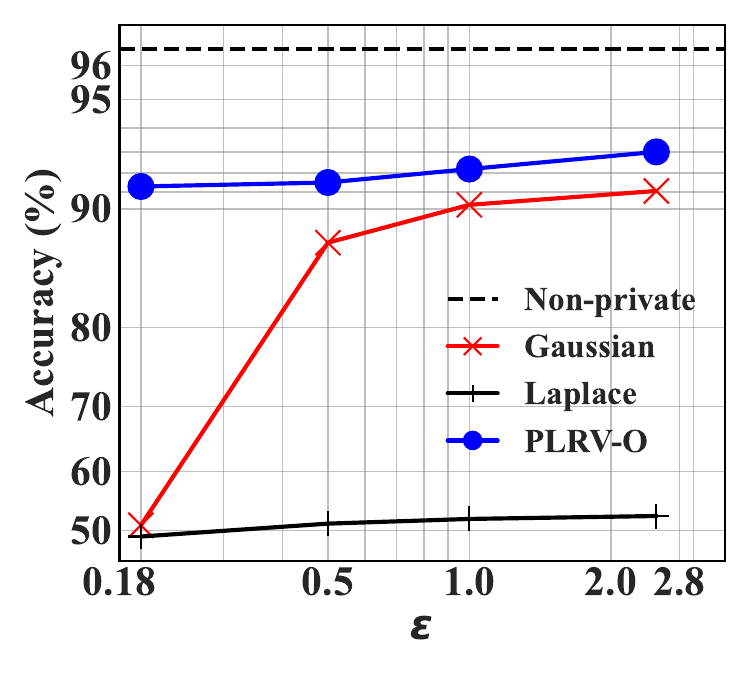}
        \caption{SST-2 with RoBERTa-large}
    \end{subfigure}\vspace{0.15in}
    \caption{Model evaluation results for SST-2 tasks using BERT and RoBERTa (Base and Large).}\vspace{0.05in}
    \label{fig:utility_sst2}
\end{figure*}

In Figures~\ref{fig:utility_qnli} and \ref{fig:utility_sst2}, the $x$-axis shows total privacy loss $\epsilon$, and the $y$-axis shows the accuracy under specific $\epsilon$ after $T$-step composition during fine-tuning. 
From (a) to (d) of Figures ~\ref{fig:utility_qnli} and \ref{fig:utility_sst2}, the non-private model accuracy are $90.5$, $92.7$, $92.8$, $94.7$, $96.4$, $94.9$ $94.8$, and $96.4$ respectively. 
When $\epsilon\leq 3$, our mechanism largely outperforms the baselines, exhibiting close performance with non-private training. 
Specifically, we can observe that the fine-tuning with the Laplace mechanism and $\ell_1$ norm, the accuracy is around $50\%$, which means the model is not trainable since the language model is large and the $\ell_1$ norm is much larger than the $\ell_2$ norm of Gaussian noise. We can also observe that our performance (blue line) is better than the performance of the Gaussian mechanism with around $3\%$ - $7\%$. 

\subsubsection{Performance on Text Generation Task}
\label{5.3}

We experiment with the official
pipeline on the E2E dataset~\cite{novikova2017e2e} and DART dataset~\cite{nan-etal-2021-dart}. We fine-tune the DistilGPT2, GPT-2, GPT2-medium, and GPT2-large models on the E2E dataset with $\delta=10^{-5}$, evaluating five metrics: BLEU, NIST, METEOR, ROUGE-L, and CIDEr.
We evaluate \textsf{PLRV-O} and DP-SGD~\cite{li2021large} under  $\delta=10^{-5}$. We also apply the same settings of privacy budget at each iteration, weights, and clipping threshold as in the sentence classification. We only employ a batch size of $1024$, which causes the total privacy budget $\epsilon<0.2$. 

\vspace{-0.05in}

\begin{table}[!h]
\centering
\vspace{0.1in}
\caption{DistilGPT2 model on the E2E dataset.}
\resizebox{0.45\textwidth}{!}{%
\begin{tabular}{|>{\centering\arraybackslash}m{1.4cm}|>{\centering\arraybackslash}m{0.7cm}|%
>{\centering\arraybackslash}m{0.9cm}|>{\centering\arraybackslash}m{1.0cm}|>{\centering\arraybackslash}m{1.2cm}|%
>{\centering\arraybackslash}m{1.5cm}|>{\centering\arraybackslash}m{1.1cm}|}
\hline
\textbf{Noise} & \textbf{$\epsilon$} & \textbf{BLEU} & \textbf{NIST} & \textbf{METEOR} & \textbf{ROUGE-L} & \textbf{CIDEr} \\
\hline
\multirow{4}{*}{Gaussian} 
 & 0.2 & 25.29 & 0.3424 & 0.2304 & 0.4925 & 0.5883 \\
\cline{2-7}
 & 0.5 & 36.73 & 1.0098 & 0.2829 & 0.5733 & 0.7638 \\
\cline{2-7}
 & 1   & 50.55 & 3.5693 & 0.3225 & 0.6159 & 1.0901 \\
\cline{2-7}
 & 2.5 & 59.82 & 5.6407 & 0.3646 & 0.6589 & 1.5164 \\
\hline
\multirow{4}{*}{\textsf{PLRV-O}}
 & 0.2 & 58.50 & 5.3100 & 0.3578 & 0.6534 & 1.3862 \\
\cline{2-7}
 & 0.5 & 64.16 & 7.4953 & 0.3944 & 0.6681 & 1.8563 \\
\cline{2-7}
 & 1   & 63.63 & 7.2425 & 0.3894 & 0.6675 & 1.7844 \\
\cline{2-7}
 & 2.5 & 64.94 & 8.1061 & 0.4140 & 0.6790 & 2.0404 \\
\hline
\end{tabular}
}
\label{tab:distilgpt2_halfwidth}
\end{table}

\vspace{-0.05in}

\begin{table}[!h]
\centering
\small
\caption{GPT-2 model on the E2E dataset.}
\resizebox{0.45\textwidth}{!}{%
\begin{tabular}{|>{\centering\arraybackslash}m{1.4cm}|>{\centering\arraybackslash}m{0.7cm}|%
>{\centering\arraybackslash}m{0.9cm}|>{\centering\arraybackslash}m{1.0cm}|>{\centering\arraybackslash}m{1.2cm}|%
>{\centering\arraybackslash}m{1.5cm}|>{\centering\arraybackslash}m{1.1cm}|}
\hline
\textbf{Noise} & \textbf{$\epsilon$} & \textbf{BLEU} & \textbf{NIST} & \textbf{METEOR} & \textbf{ROUGE-L} & \textbf{CIDEr} \\
\hline
\multirow{4}{*}{Gaussian} 
 & 0.2 & 32.82 & 0.6487 & 0.2733 & 0.5642 & 0.7618 \\
\cline{2-7}
 & 0.5 & 42.72 & 1.9037 & 0.3037 & 0.5887 & 0.9699 \\
\cline{2-7}
 & 1   & 52.11 & 3.7071 & 0.3313 & 0.6281 & 1.2369 \\
\cline{2-7}
 & 2.5 & 59.02 & 5.6853 & 0.3602 & 0.6520 & 1.5457 \\
\hline
\multirow{4}{*}{\textsf{PLRV-O}}
 & 0.2 & 60.21 & 5.7860 & 0.3688 & 0.6628 & 1.5285 \\
\cline{2-7}
 & 0.5 & 62.98 & 7.5666 & 0.4025 & 0.6767 & 1.9438 \\
\cline{2-7}
 & 1   & 62.73 & 7.0415 & 0.3928 & 0.6750 & 1.8164 \\
\cline{2-7}
 & 2.5 & 65.27 & 8.3764 & 0.4360 & 0.6858 & 2.2612 \\
\hline
\end{tabular}
}\vspace{-0.15in}
\label{tab:gpt2_halfwidth}
\end{table}

\begin{table}[!h]
\centering
\vspace{0.15in}
\caption{E2E evaluation results of GPT2-medium model under different noise types and privacy budgets ($\epsilon$).}
\resizebox{0.45\textwidth}{!}{%
\begin{tabular}{|>{\centering\arraybackslash}m{1.4cm}|>{\centering\arraybackslash}m{0.7cm}|%
>{\centering\arraybackslash}m{0.9cm}|>{\centering\arraybackslash}m{1.0cm}|>{\centering\arraybackslash}m{1.2cm}|%
>{\centering\arraybackslash}m{1.5cm}|>{\centering\arraybackslash}m{1.1cm}|}
\hline
\textbf{Noise} & \textbf{$\epsilon$} & \textbf{BLEU} & \textbf{NIST} & \textbf{METEOR} & \textbf{ROUGE-L} & \textbf{CIDEr} \\
\hline
\multirow{4}{*}{Gaussian} 
 & 0.2 & 37.38 & 1.4594 & 0.2870 & 0.5662 & 0.8094 \\
\cline{2-7}
 & 0.5 & 52.96 & 3.8638 & 0.3447 & 0.6206 & 1.2907 \\
\cline{2-7}
 & 1   & 59.00 & 6.0868 & 0.3709 & 0.6452 & 1.6023 \\
\cline{2-7}
 & 2.5 & 62.22 & 8.0526 & 0.4063 & 0.6684 & 1.9906 \\
\hline
\multirow{4}{*}{\textsf{PLRV-O}}
 & 0.2 & 58.79 & 6.1828 & 0.3792 & 0.6580 & 1.6706 \\
\cline{2-7}
 & 0.5 & 64.47 & 8.3884 & 0.4295 & 0.6789 & 2.2338 \\
\cline{2-7}
 & 1   & 63.67 & 8.3098 & 0.4220 & 0.6785 & 2.1590 \\
\cline{2-7}
 & 2.5 & 66.04 & 8.4342 & 0.4423 & 0.6922 & 2.3728 \\
\hline
\end{tabular}
}\vspace{0.1in}
\label{tab:gpt2-medium_halfwidth}
\end{table}

\begin{table}[!h]
\centering
\caption{E2E evaluation results of GPT2-large model under different noise types and privacy budgets ($\epsilon$).}
\resizebox{0.45\textwidth}{!}{%
\begin{tabular}{|>{\centering\arraybackslash}m{1.4cm}|>{\centering\arraybackslash}m{0.7cm}|%
>{\centering\arraybackslash}m{0.9cm}|>{\centering\arraybackslash}m{1.0cm}|>{\centering\arraybackslash}m{1.2cm}|%
>{\centering\arraybackslash}m{1.5cm}|>{\centering\arraybackslash}m{1.1cm}|}
\hline
\textbf{Noise} & \textbf{$\epsilon$} & \textbf{BLEU} & \textbf{NIST} & \textbf{METEOR} & \textbf{ROUGE-L} & \textbf{CIDEr} \\
\hline
\multirow{4}{*}{Gaussian} 
 & 0.2 & 39.25 & 1.9838 & 0.2940 & 0.5892 & 0.9242 \\
\cline{2-7}
 & 0.5 & 52.51 & 3.9813 & 0.3525 & 0.6421 & 1.1923 \\
\cline{2-7}
 & 1   & 53.59 & 4.6533 & 0.3317 & 0.6191 & 1.2980 \\
\cline{2-7}
 & 2.5 & 62.24 & 8.0916 & 0.4118 & 0.6661 & 2.0894 \\
\hline
\multirow{4}{*}{\textsf{PLRV-O}}
 & 0.2 & 60.71 & 6.4304 & 0.3743 & 0.6501 & 1.6485 \\
\cline{2-7}
 & 0.5 & 65.02 & 8.4223 & 0.4226 & 0.6740 & 2.0988 \\
\cline{2-7}
 & 1   & 64.21 & 8.3605 & 0.4179 & 0.6676 & 2.0811 \\
\cline{2-7}
 & 2.5 & 67.37 & 8.6457 & 0.4496 & 0.6952 & 2.3657 \\
\hline
\end{tabular}
}
\label{tab:gpt2-large_halfwidth}
\end{table}
\normalsize

Tables~\ref{tab:distilgpt2_halfwidth}-\ref{tab:gpt2-large_halfwidth} present the results with the five different metrics, following~\cite{yu2021differentially}. 
We observe that \textsf{PLRV-O} yields results that are more closely aligned with the non-private results than Gaussian (larger values of all these metrics exhibit more accurate generated texts). Note the improvement can be up to $50\%$ on some metrics (e.g., CIDEr). 
Tables~\ref{tab:distilgpt2_dart} and \ref{tab:gpt2_dart} show the generation performance for the DART dataset. We can observe a similar trend to the E2E dataset.

\renewcommand{\arraystretch}{1.0}
\scriptsize

\begin{table}[!h]
\centering
\vspace{0.1in}
\caption{DART evaluation results of DistilGPT2 model under different noise types and privacy budgets ($\epsilon$).}
\resizebox{0.45\textwidth}{!}{%
\begin{tabular}{|>{\centering\arraybackslash}m{1.4cm}|>{\centering\arraybackslash}m{0.7cm}|%
>{\centering\arraybackslash}m{0.9cm}|>{\centering\arraybackslash}m{1.0cm}|>{\centering\arraybackslash}m{1.2cm}|%
>{\centering\arraybackslash}m{1.5cm}|>{\centering\arraybackslash}m{1.1cm}|}
\hline
\textbf{Noise} & \textbf{$\epsilon$} & \textbf{BLEU} & \textbf{NIST} & \textbf{METEOR} & \textbf{ROUGE-L} & \textbf{CIDEr} \\
\hline
\multirow{4}{*}{Gaussian} 
 & 0.2 & 3.55 & 0.4290 & 0.0467 & 0.1023 & 0.2909 \\
\cline{2-7}
 & 0.5 & 12.47 & 2.1889 & 0.1300 & 0.2416 & 0.5654 \\
\cline{2-7}
 & 1   & 17.48 & 2.4004 & 0.1838 & 0.3291 & 0.8345 \\
\cline{2-7}
 & 2.5 & 22.26 & 3.9404 & 0.2156 & 0.3816 & 1.0045 \\
\hline
\multirow{4}{*}{\textsf{PLRV-O}}
 & 0.2 & 23.29 & 3.8109 & 0.2292 & 0.4040 & 1.1161 \\
\cline{2-7}
 & 0.5 & 28.42 & 4.0702 & 0.2701 & 0.4779 & 1.5073 \\
\cline{2-7}
 & 1   & 27.89 & 4.1538 & 0.2658 & 0.4688 & 1.4639 \\
\cline{2-7}
 & 2.5 & 30.83 & 4.5753 & 0.2841 & 0.5036 & 1.6856 \\
\hline
\end{tabular}
}%\vspace{-0.1in}
\label{tab:distilgpt2_dart}
\end{table}

\vspace{-0.1in}

\renewcommand{\arraystretch}{1.0}
\scriptsize

\begin{table}[ht]
\centering
\vspace{0.1in}
\caption{DART evaluation results of GPT-2 model under different noise types and privacy budgets ($\epsilon$).}
\resizebox{0.45\textwidth}{!}{%
\begin{tabular}{|>{\centering\arraybackslash}m{1.4cm}|>{\centering\arraybackslash}m{0.7cm}|%
>{\centering\arraybackslash}m{0.9cm}|>{\centering\arraybackslash}m{1.0cm}|>{\centering\arraybackslash}m{1.2cm}|%
>{\centering\arraybackslash}m{1.5cm}|>{\centering\arraybackslash}m{1.1cm}|}
\hline
\textbf{Noise} & \textbf{$\epsilon$} & \textbf{BLEU} & \textbf{NIST} & \textbf{METEOR} & \textbf{ROUGE-L} & \textbf{CIDEr} \\
\hline
\multirow{4}{*}{Gaussian} 
 & 0.2 & 6.95 & 0.0761 & 0.1293 & 0.2625 & 0.6065 \\
\cline{2-7}
 & 0.5 & 20.31 & 2.0213 & 0.2234 & 0.4222 & 1.0794 \\
\cline{2-7}
 & 1   & 25.31 & 3.3642 & 0.2549 & 0.4631 & 1.3252 \\
\cline{2-7}
 & 2.5 & 29.32 & 4.7652 & 0.2772 & 0.4900 & 1.5449 \\
\hline
\multirow{4}{*}{\textsf{PLRV-O}}
 & 0.2 & 25.95 & 2.5579 & 0.2699 & 0.4911 & 1.4419 \\
\cline{2-7}
 & 0.5 & 30.76 & 2.9106 & 0.2891 & 0.5123 & 1.6783 \\
\cline{2-7}
 & 1   & 31.38 & 3.2593 & 0.3029 & 0.5212 & 1.7453 \\
\cline{2-7}
 & 2.5 & 33.66 & 4.7260 & 0.3142 & 0.5407 & 1.9371 \\
\hline
\end{tabular}
}\vspace{-0.1in}
\label{tab:gpt2_dart} 
\end{table}
\normalsize

\subsection{Moments Accountant}

To validate the privacy accountant of our \textsf{PLRV-O}, we evaluate our privacy loss through $T$. 
In this experiment, we choose two different settings of $C$, $\theta$, $k$, and $\zeta$ of our experiments with small distortion to show the privacy loss with $T$. 
For the Gaussian mechanism, we apply the same $C$ and $\zeta$ to get the privacy loss with $T$ with $\sigma=0.1$. 
See Figure~\ref{fig:privacy_loss} for the results.  
For a large clip such as $C=1.0$, the Gaussian mechanism and \textsf{PLRV-O} have similar privacy loss. 
However, the distortion of \textsf{PLRV-O} noise is lower, thus \textsf{PLRV-O} is the more desirable noise under the same privacy loss. 

\begin{figure}[!h]
    \centering
    \begin{subfigure}[b]{0.235\textwidth}       \includegraphics[width=\linewidth]{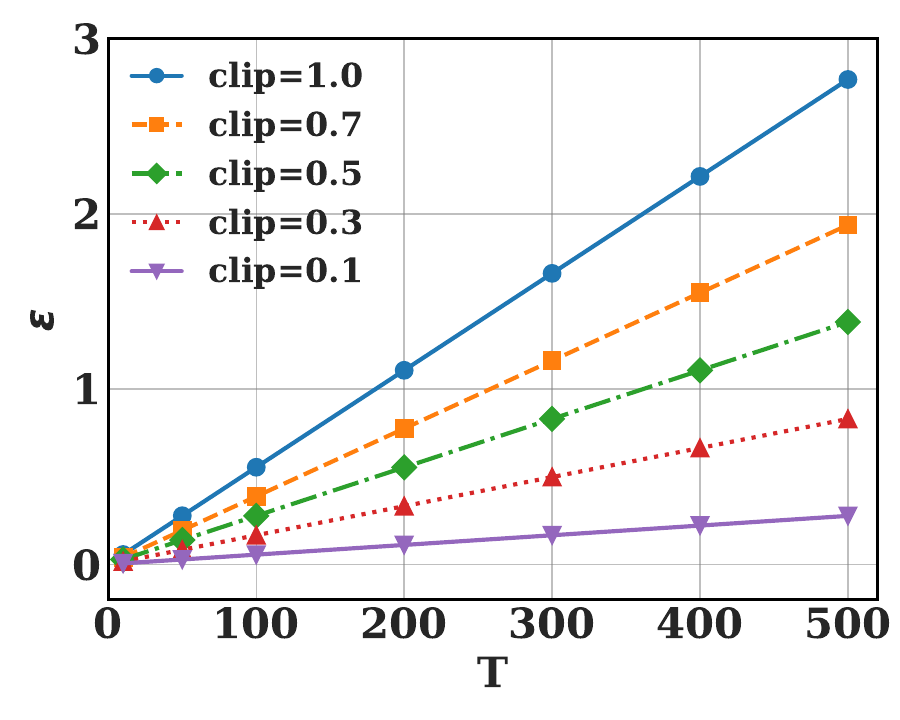}
        \caption{\textsf{PLRV-O} privacy loss vs. T}
    \end{subfigure}
        \begin{subfigure}[b]{0.233\textwidth}       \includegraphics[width=\linewidth]{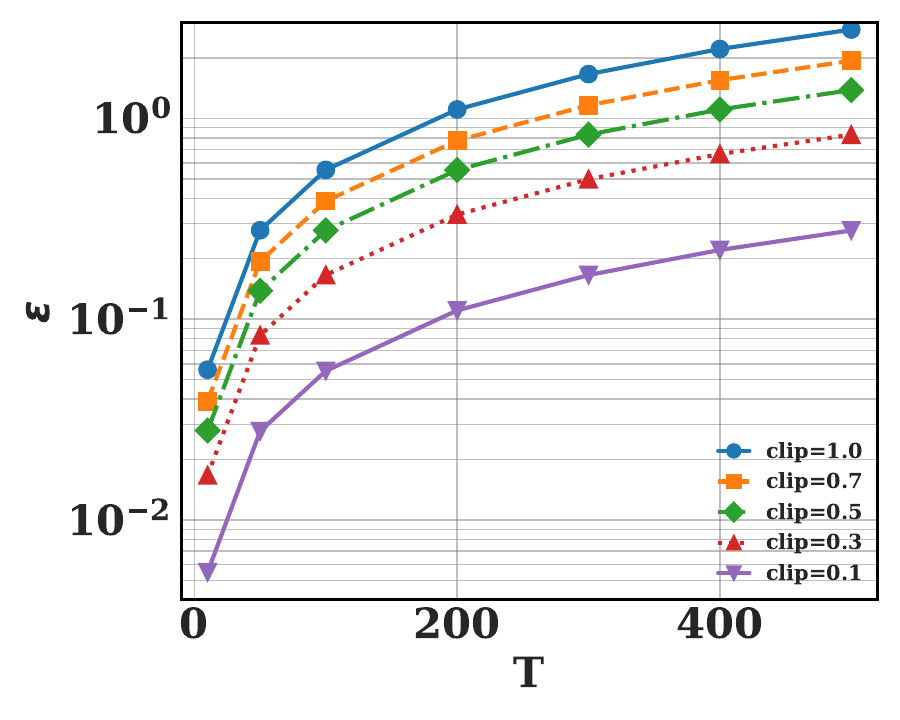}
        \caption{Gaussian privacy loss vs. T}
    \end{subfigure}
    \vspace{0.15in}
    \caption{Privacy loss vs. T, clip. (a) $k=414.2857$, $\theta=2.4196*10^{-4}$, $q=0.00977631$, $\delta=10^{-5}$ and $N=109482240$. (b) $\sigma=0.1$.}
    \label{fig:privacy_loss}
\end{figure}

\begin{figure*}[ht]
    \centering
    \begin{subfigure}[b]{0.24\textwidth}       \includegraphics[width=\linewidth]{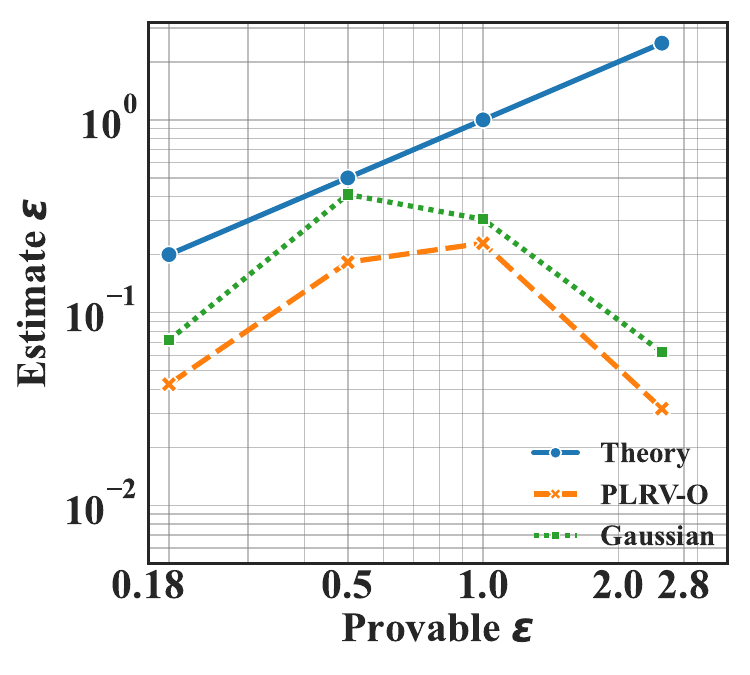}
        \caption{SST-2 with BERT-base}
    \end{subfigure} 
    \begin{subfigure}[b]{0.24\textwidth}
\includegraphics[width=\linewidth]{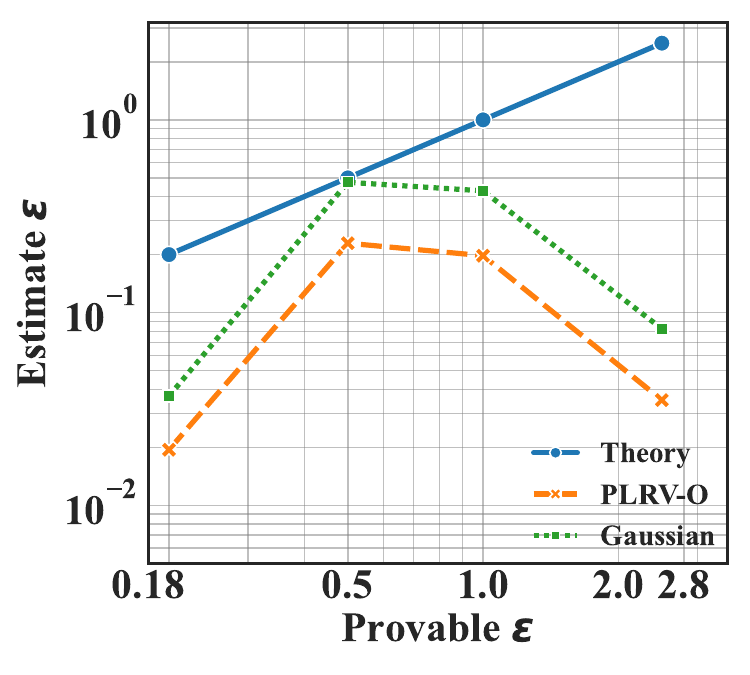}
        \caption{SST-2 with RoBERTa-base}
    \end{subfigure}    
    \begin{subfigure}[b]{0.24\textwidth}        \includegraphics[width=\linewidth]{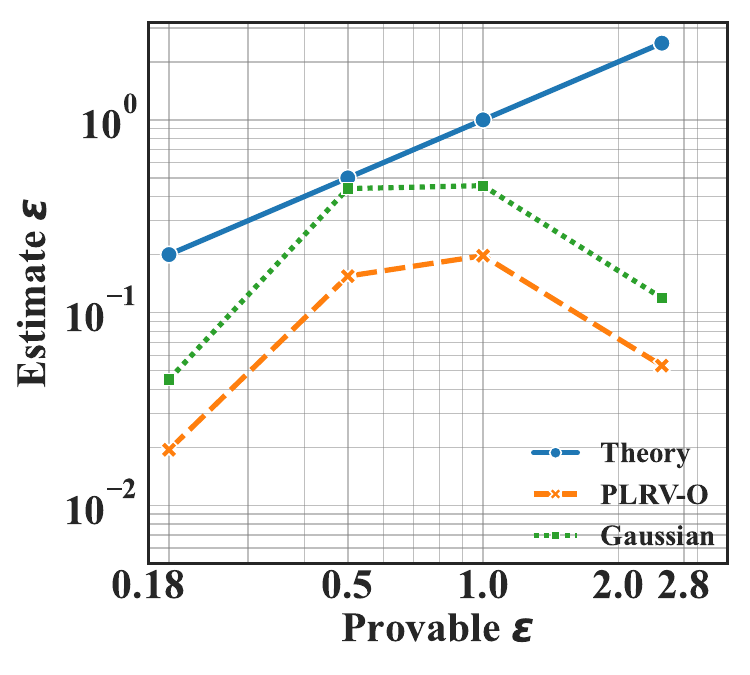}
        \caption{QNLI with BERT-base}
    \end{subfigure}   
    \begin{subfigure}[b]{0.24\textwidth}      \includegraphics[width=\linewidth]{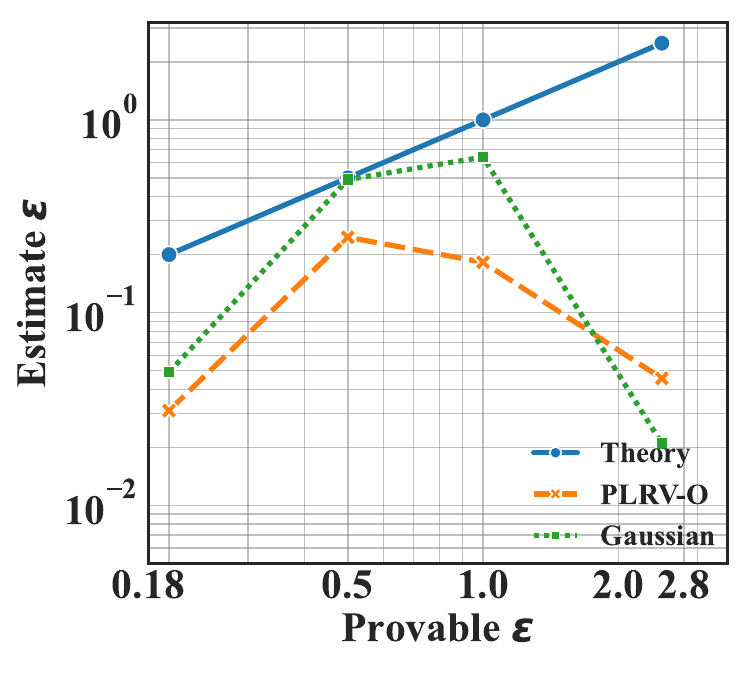}
    
\caption{QNLI with RoBERTa-base}
\end{subfigure}
    \vspace{0.15in}
\caption{NLP performance of privacy attacks ClipBKD.}\vspace{0.05in}
    \label{fig:auditing_nlp}
\end{figure*}

\begin{figure*}[ht]
    \centering
    \begin{subfigure}[b]{0.24\textwidth}       \includegraphics[width=\linewidth]{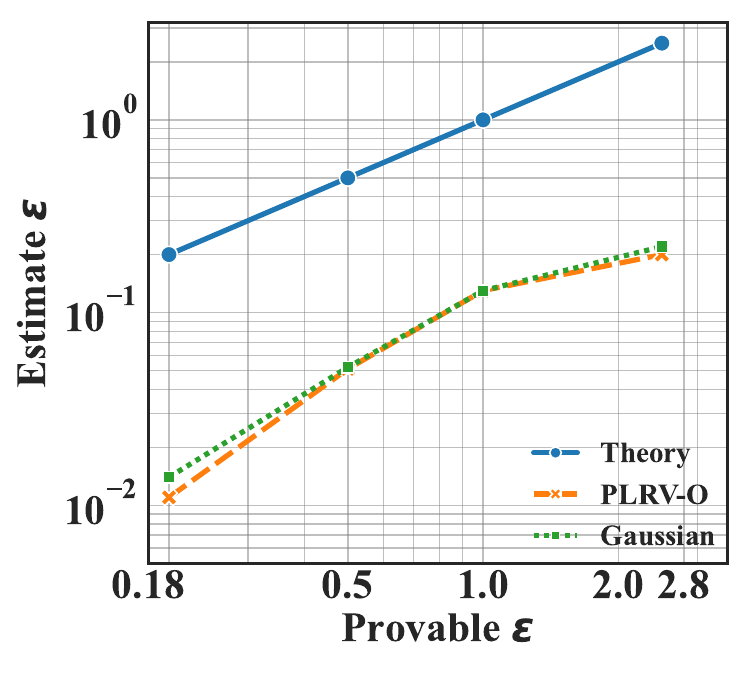}
        \caption{FMNIST with FNN}
    \end{subfigure} 
    \begin{subfigure}[b]{0.24\textwidth}
\includegraphics[width=\linewidth]{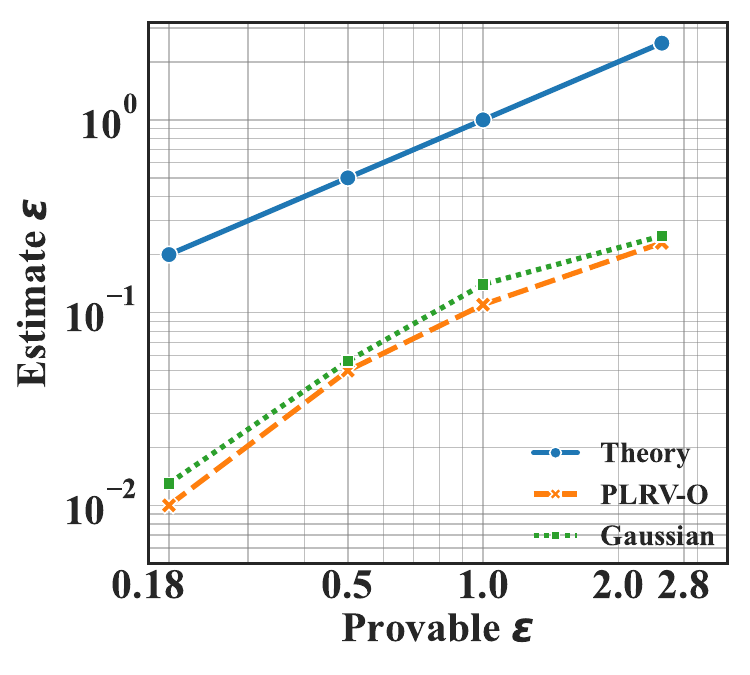}
        \caption{FMNIST with LR}
    \end{subfigure}    
    \begin{subfigure}[b]{0.24\textwidth}        \includegraphics[width=\linewidth]{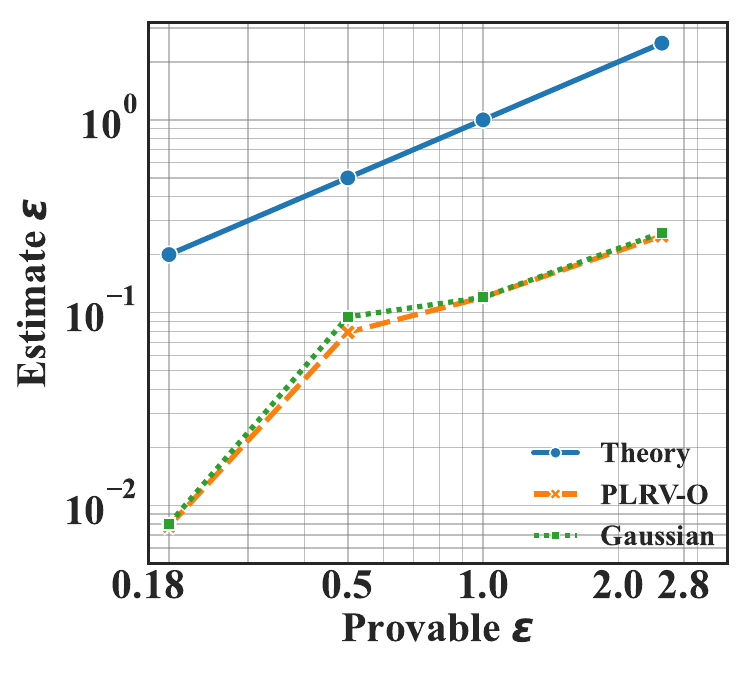}
        \caption{P100 with FNN}
    \end{subfigure}   
    \begin{subfigure}[b]{0.24\textwidth}      \includegraphics[width=\linewidth]{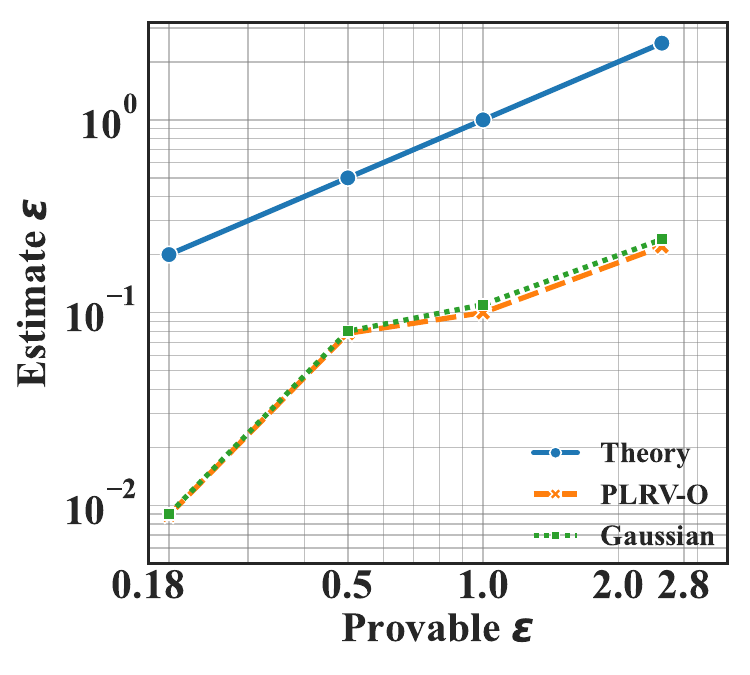}
    
\caption{P100 with LR}
\end{subfigure}
    \vspace{0.15in}
\caption{CV performance of privacy attacks ClipBKD.} \vspace{0.05in}
    \label{fig:auditing_cv}
\end{figure*}

\subsection{Privacy Audit}
\label{sec:auditexp}
While we present theoretical privacy loss via $T$, we also report empirical $\epsilon$ under attacks. Using the auditing model in Appendix~\ref{appendix:audit}, we conduct a privacy audit to validate our guarantees. Specifically, we implement a data poisoning attack with dataset and algorithm $\mathcal{A}$, generating a small poisoning set $S$ of $k$ points and a binary classifier $T$ that distinguishes $\mathcal{A}(D)$ from $\mathcal{A}(D \cup S)$ with significant advantage over random guessing.

We follow the auditing steps with ClipBKD, a clipping-aware backdoor attack robust to clipping~\cite{jagielski2020auditing}. For CV tasks, we use Fashion-MNIST~\cite{xiao2017fashion} and Purchase-100~\cite{shokri2017membership}; for NLP, SST-2 and QNLI. The ClipBKD test statistic checks if backdoored points are distinguishable by loss falling below a threshold. To set this threshold, we train models on unpoisoned and poisoned datasets differing by $k$ samples. Poisoning points are generated with TextAttack and used to create neighborhood datasets $D_0$ and $D_1$. Due to resource limits, we train 10 models ($T=20$ trials) for threshold learning, then 20 models per dataset–model pair for testing on $S$ to detect whether $S$ was included. We set confidence at 0.01 (i.e., our Monte Carlo estimates hold with 99\% confidence) to compute empirical $\epsilon_{\text{LB}}$, reporting the best result for $k=1$ poisoning point.

Figure~\ref{fig:auditing_nlp} shows the provable $\epsilon$ (x-axis) and empirical $\epsilon$ (y-axis) for DP-SGD and \textsf{PLRV-O} for nlp task. 
It shows that the empirical privacy bound of DP-SGD and \textsf{PLRV-O} is less than the theoretical one. 
And our empirical privacy bound is less than DP-SGD, which means that our work can provide a stricter privacy guarantee than DP-SGD empirically, which also guarantees that the comparison of DP-SGD and PLVR-O is fair. Similar to the fine-tuning of NLP, we discuss the performance of the proposed work on the CV task in Figure~\ref{fig:auditing_cv}. We evaluate the ClipBKD auditing experiments on the Logistic Regression (LR) model and the two-layer Feedforward Neural Network (FNN) with Fashion-MNIST and Purchase-100 datasets following the pipeline in~\cite{jagielski2020auditing}. The pattern we choose for backdoor attacks is a 5x5 white square in the top-left corner of an image, following~\cite{jagielski2020auditing}.
To produce the threshold of backdoor attack, we train 500 models on the unpoisoned dataset and 500 models on the poisoned dataset, and produce the best $\epsilon_{LB}$ with other $1000$ trained models using empirically measuring privacy in~\cite{jagielski2020auditing}. The empirical privacy of Gaussian and \textsf{PLRV-O} are similar, but they are still lower than the theoretical one. Thus, our work has strong potential for extensions to domains with training from scratch, such as CV and NLP tasks.

\subsection{Runtime and Convergence Analysis}
\label{appendix:runtime-conv}

\begin{figure}[!h]
  \centering
  \includegraphics[width=\columnwidth]{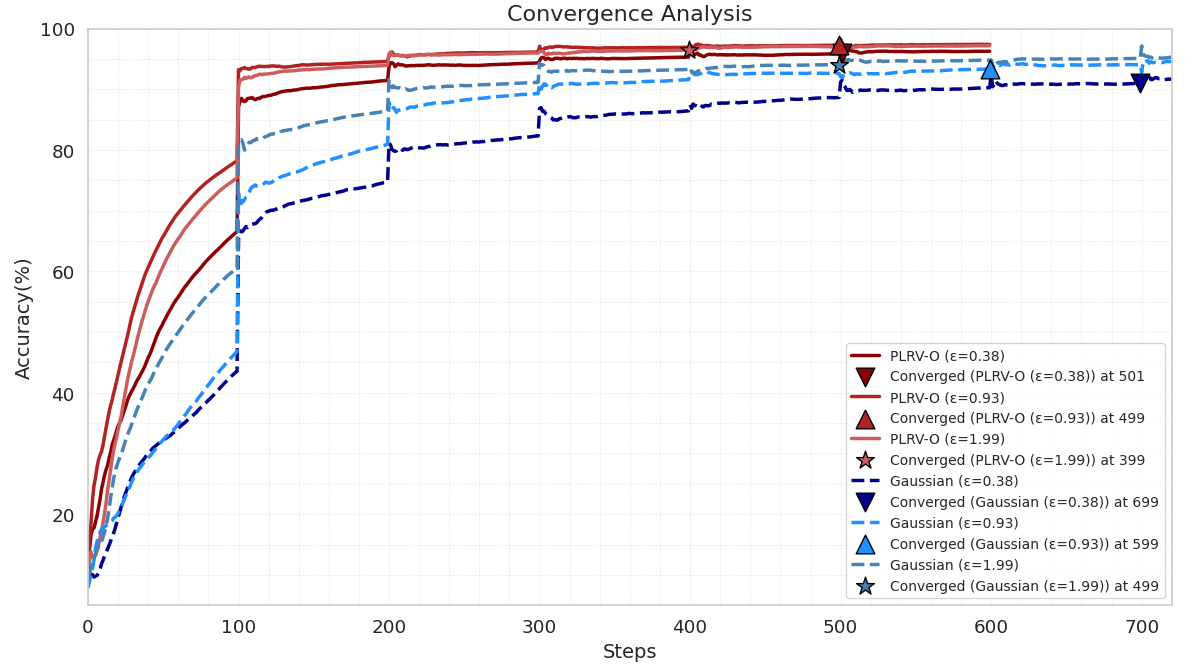}
  \caption{Training accuracy over training steps for \textsf{PLRV-O} and Gaussian mechanisms across different privacy budgets ($\epsilon$). \textsf{PLRV-O} shows faster and more stable convergence compared to the Gaussian at equivalent privacy levels.}
\label{fig:convergence-plot}
\end{figure}

\textsf{PLRV-O} introduces negligible extra overhead. The primary cost is offline, involving a one-time search for the optimal $(k, \theta)$ pair and the precomputation of constants for noise sampling. \textsf{PLRV-O} often enables faster online runtime by supporting much larger clipping thresholds than Gaussian at the same $\epsilon$, as shown in Table~\ref{tab:convergence}.

\noindent\textbf{Convergence Analysis.} We evaluate the effect of noise mechanisms on model convergence by comparing \textsf{PLRV-O} and Gaussian noise across different privacy budgets ($\epsilon$). The convergence step is defined as the first point where the accuracy improvement within the previous 200 epochs falls below 1\%. $\Delta$Acc measures the change in accuracy from 200 steps before convergence to the convergence point. Table~\ref{tab:convergence} summarizes the convergence step, accuracy at convergence, and 
$\Delta$Acc 
(change in accuracy over the final 200 steps) during CNN training on the MNIST dataset.

\begin{table}[!h]
\centering
\vspace{0.15in}
\caption{Convergence comparison between Gaussian and \textsf{PLRV-O} mechanisms at different privacy levels $(\epsilon)$ during CNN training on the MNIST dataset.}
\label{tab:convergence}
\resizebox{\columnwidth}{!}{
\begin{tabular}{|c|c|c|c|c|c|}
\hline
\textbf{Mechanism} & \textbf{$\epsilon$} & \textbf{Con. Step} & \textbf{Acc. at Con.} & \textbf{$\Delta$Acc} & \textbf{Time to Con. (s)} \\
\hline
\multirow{3}{*}{\textsf{PLRV-O}}   & 0.38 & 501 & 95.55\% & 0.69\%  & 57.60\\
& 0.93 & 499 & 96.23\% & 0.14\%  & 55.58\\
& 1.99 & 399 & 96.44\% & 0.61\% & 44.73 \\ \hline
\multirow{3}{*}{Gaussian}  & 0.38 & 699 & 91.00\% & -0.19\%  & 90.44 \\
& 0.93 & 599 & 93.31\% & 0.80\%  & 61.16\\
& 1.99 & 499 & 94.04\% & -0.27\% & 51.35 \\ \hline
\end{tabular}
}\vspace{-0.05in}
\end{table}

As shown in Figure~\ref{fig:convergence-plot}, \textsf{PLRV-O} achieves faster convergence and consistently higher accuracy compared to the Gaussian mechanism at equivalent $\epsilon$ values. This advantage stems from \textsf{PLRV-O}'s ability to tolerate larger clipping bounds under the same privacy budget, accelerating training. Our empirical results show consistent convergence improvements. Furthermore, our \textsf{PLRV-O} mechanism can also be applied to large datasets to achieve a better privacy utility trade-off, such as ImageNet or TinyImageNet.

\vspace{0.05in}

\begin{table}[!h]
\centering
\caption{Boosting DP-FTRL accuracy (\%) on MNIST/CIFAR-10 ($\delta = 1/n$). The non-private accuracy is 98.9\% for MNIST and 78.5\% for CIFAR-10 (the latter from Figure~1b in~\cite{kairouz2021practical}).
}
\label{tab:dp-ftrl-accuracy}
\begin{tabular}{|c|c|c|c|}
\hline
\textbf{Dataset} & $\epsilon$ & \textbf{Gaussian Acc.} & \textbf{\textsf{PLRV-O} Acc.} \\
\hline
\multirow{3}{*}{MNIST}     & 0.05 & 30.34\% & 92.07\% \\
& 0.23 & 95.15\% & 96.86\% \\
& 0.90 & 97.50\% & 98.28\% \\
\hline
\multirow{3}{*}{CIFAR-10}  & 0.05 & 58.18\% & 68.55\% \\
& 0.23 & 76.09\% & 76.12\% \\
& 0.90 & 76.09\% & 76.12\% \\
\hline
\end{tabular}%\vspace{-0.15in}
\end{table}

\subsection{Boosting Orthogonal Methods}
\label{appendix:Dart}

\textsf{PLRV-O} also provides an orthogonal improvement that can enhance the performance of existing methods. To demonstrate its applicability to advanced settings, we apply it to DP-FTRL~\cite{kairouz2021practical} as a representative example. \textsf{PLRV-O} noise consistently outperforms Gaussian noise across different privacy budgets, demonstrating that \textsf{PLRV-O} can boost DP-FTRL in its applications (without subsampling) in Table~\ref{tab:dp-ftrl-accuracy}, e.g., matrix factorization, federated learning~\cite{FengMHYKW025}.

\section{Related Work}
\label{sec:related}

The majority of research on DP-SGD has focused on the Gaussian mechanism due to its smooth noise distribution, which aligns well with gradient updates and facilitates privacy accounting using the moments accountant framework~\cite{abadi2016deep}. 

\vspace{0.05in}

\noindent\textbf{Recent Development in DP-SGD.} Several studies have been proposed for improving DP-SGD along different dimensions. They can be broadly categorized into: (i) memory-efficient methods: GHOST~\cite{li2021large}, PEFT~\cite{yu2021differentially} , and ViP~\cite{yu2023vip} reduce memory overhead by modifying gradient computation or clipping schemes; (ii) time-efficient methods: DP-SGD-JL~\cite{bu2021fast}, Mixed Ghost~\cite{bu2022scalable}, and DP-BiTFiT~\cite{bu2022differentially} focus on reducing runtime while maintaining DP; and (iii) accuracy-enhancing methods: DPSUR~\cite{fu2023dpsur}, DP-Forward~\cite{du2023dp} improve utility by refining gradient updates or leveraging structured noise.

While most of them focus on the Gaussian mechanism, no existing DP-SGD variants have applied Laplace noise for large-scale NLP or vision tasks. Beyond efficiency improvements, prior work has also refined privacy analysis in DP-SGD. Wang et al.~\cite{wang2019subsampled} examined subsampling effects on privacy guarantees, while Gopi et al.~\cite{gopi2021numerical} developed a numerical approach to compute privacy loss precisely. These accounting refinements could be leveraged alongside \textsc{PLRV-O} to achieve tighter privacy budgets and faster convergence.

\vspace{0.05in}

\noindent \textbf{DP-SGD with Laplace Mechanism.} Prior to the widespread adoption of DP-SGD, the Laplace mechanism was considered optimal in many DP applications due to its well-defined theoretical properties and robustness in certain regimes. It remains widely used in pure $\epsilon$-DP settings, offering minimal $\ell_1$ and $\ell_2$ errors under strong privacy constraints~\cite{Dwork06, 6875258}. While Gaussian noise has become the standard in DP-SGD due to its compatibility with the moments accountant, Laplace noise can outperform Gaussian in certain privacy regimes, particularly as $\epsilon \to 0^{+}$, where it achieves lower error~\cite{6875258}. More recent works~\cite{sommer2018privacy} have introduced privacy loss distribution (PLD)-based accounting, enabling tighter $(\epsilon, \delta)$-DP for Laplace, Gaussian, and Discrete Laplace mechanisms, especially under Poisson subsampling.

Despite these advantages, DP-SGD with Laplace noise has seen limited adoption due to the prohibitive nature of $\ell_1$-norm gradient clipping, which often destabilizes training and degrades privacy-utility trade-offs. Unlike Gaussian noise, which integrates well with $\ell_2$-norm clipping, Laplace noise interacts poorly with $\ell_1$-norm constraints, making its deployment in large-scale models challenging. Sommer et al.~\cite{sommer2018privacy} analyzed Laplace DP-SGD through privacy loss random variables (PLRVs), comparing it with Gaussian mechanisms via a numerical approach leveraging the central limit theorem. In contrast, our work provides a structured privacy accounting framework that avoids such approximations.

A separate line of research, including Jang et al.~\cite{jang2024rethinking} and Bernstein et al.~\cite{bernstein2018signsgd}, explores \texttt{DP-signSGD}, where gradients are compressed for efficient communication, with updates determined via majority voting rather than traditional DP-SGD averaging. Since \texttt{DP-signSGD} workers transmit only gradient signs, alternative privacy mechanisms such as the exponential and Laplace mechanisms have been studied in this context. However, as \texttt{DP-signSGD} follows a fundamentally different training paradigm, we consider it independent of our setting. Another application of Laplace noise is in Bayesian deep learning, as explored by Daxberger et al.~\cite{daxberger2021laplace}, where Laplace approximations facilitate Bayesian inference, offering theoretical advantages and practical benefits in uncertainty estimation. This approach, however, is distinct from our work, as it focuses on Bayesian learning rather than private SGD training.

\section{Conclusion}
\label{sec:conclusion}

We introduced \textsf{PLRV-O}, a flexible framework for optimizing differentially private deep learning by leveraging the structure of the privacy loss random variable (PLRV). Unlike conventional mechanisms such as Gaussian or Laplace noise—which are constrained by a single tunable parameter—\textsf{PLRV-O} introduces a multi-parameter design space that better captures the complexity of real-world privacy-utility trade-offs. By explicitly incorporating factors such as model size, training duration, sampling strategy, and clipping thresholds, \textsf{PLRV-O} enables fine-grained control over noise shaping for both training and fine-tuning. Empirical results across CV and NLP tasks demonstrate substantial performance gains under strong privacy constraints, where the proposed method enhances model utility without compromising privacy guarantees. 
Beyond training and fine-tuning, we plan to explore optimizable randomization mechanisms and noise to support broader domains, including machine unlearning~\cite{naderloui2025rectifying}, certified robustness~\cite{HongWH22,ZhangHHHWBR24}, and attacks~\cite{Hong0WBH24} via randomized smoothing.

\section*{Acknowledgments}
 We sincerely thank the anonymous reviewers for their constructive comments. This work is partially supported by the National Science Foundation (NSF) under Grants No. CNS-2302689, CNS-2308730, CNS-2319277, CNS-2432533, ITE-2452747, and ITE-2452749, as well as by a Cisco Research Award. Any opinions, findings, conclusions, or recommendations expressed in this paper are those of the authors and do not necessarily reflect the views of the funding agencies.

\bibliographystyle{ACM-Reference-Format}
\bibliography{bib}

\appendix
\section{Probability Theory Backgrounds}
\label{apndx:Background:prob}
\subsection{Moment Generating Function (MGF)}
\begin{definition} 
\label{defn:MGF}
Let \( f(x) \) be the probability density function (PDF) of a random variable \( X \). The moment generating function of \( X \), if it exists, is defined as:  
\[
\mathcal M_X(t) = \mathbb{E}[e^{tX}] = \int e^{tx} f(x) \, dx.
\]
\end{definition}

\begin{theorem}[Composability of MGF]  
\label{composeMGF}
If \( x_1, \dots, x_n \) are \( n \) independent random variables with existing MGFs \( \mathcal M_{x_i}(t) = \mathbb{E}[e^{t x_i}] \) for \( i = 1, \dots, n \), then the MGF of the linear combination \( Y = \sum_{i=1}^{n} a_i x_i \) is given by:  
\[
\mathcal M_Y(t) = \prod_{i=1}^{n} \mathcal  M_{x_i}(a_i t).
\]
\end{theorem}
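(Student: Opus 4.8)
The plan is to prove this directly from the definition of the moment generating function together with the multiplicativity of expectation over independent random variables; this is a classical fact, so the argument is short. First I would expand, using only the definition of $Y$ and the functional equation $\exp(u+v)=\exp(u)\exp(v)$,
\[
\mathcal M_Y(t)=\mathbb{E}\bigl[e^{tY}\bigr]=\mathbb{E}\Bigl[\exp\Bigl(t\textstyle\sum_{i=1}^n a_i x_i\Bigr)\Bigr]=\mathbb{E}\Bigl[\textstyle\prod_{i=1}^n \exp(t a_i x_i)\Bigr].
\]
Here the $a_i$ are deterministic scalars, so each factor $\exp(t a_i x_i)$ is a measurable function of the single variable $x_i$.

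The key step is to interchange expectation and product. Since $x_1,\dots,x_n$ are independent, the transformed variables $\exp(t a_1 x_1),\dots,\exp(t a_n x_n)$ are independent as well, and hence $\mathbb{E}\bigl[\prod_{i=1}^n \exp(t a_i x_i)\bigr]=\prod_{i=1}^n \mathbb{E}\bigl[\exp(t a_i x_i)\bigr]=\prod_{i=1}^n \mathcal M_{x_i}(a_i t)$, which is exactly the claimed identity. In the density-based formulation used in the PLRV accounting proofs, the same computation reads as an iterated integral $\int\!\cdots\!\int \prod_{i} e^{t a_i x_i}\,\prod_{i} f_{x_i}(x_i)\,dx_1\cdots dx_n$ against the product density (valid precisely because of independence), which factors into $\prod_i \int e^{t a_i x_i} f_{x_i}(x_i)\,dx_i$ by Fubini--Tonelli.

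The only point requiring care — and the main (minor) obstacle — is the domain of validity and the justification of the interchange. Because each factor $\exp(t a_i x_i)$ is strictly positive, the product-of-expectations identity for independent random variables holds unconditionally in $[0,\infty]$ (Tonelli on the product measure), so no integrability need be assumed a priori; the finiteness of $\mathcal M_Y(t)$ is then equivalent to the simultaneous finiteness of all $\mathcal M_{x_i}(a_i t)$, which is exactly the hypothesis that the relevant MGFs "exist." I would therefore phrase the conclusion for those $t$ lying in the intersection of the sets on which each $\mathcal M_{x_i}(a_i t)<\infty$, on which the displayed equalities are all finite and the stated formula holds.
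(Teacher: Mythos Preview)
Your argument is correct and entirely standard. The paper itself does not supply a proof of this theorem: it is stated in the appendix as background material from probability theory without justification, so there is nothing to compare against beyond noting that your direct computation from the definition and independence is exactly the classical proof one would expect.
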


\section{Laplace Mechanism Proofs}
\subsection{Proof of Theorem~\ref{thm:lap2privacyloss}}
\label{prooflaploss}

In the following, we prove the bound on the privacy loss of the Laplace mechanism stated in Theorem~\ref{thm:lap2privacyloss}.

\vspace{0.05in}

\begin{proof}
Let \( \bar{\mathbf{g}}(\text{aux},\mathbf{x},b) \) denote the output of a Laplace mechanism with scale \( b \). The privacy loss at an outcome \( o \) is defined as:
\begin{align}
\label{eq:marginal}
    c(o; \text{aux}, d, d', b) 
    &= \log \frac{\Pr[\bar{\mathbf{g}}(\text{aux},\mathbf{x},b) = o]}{\Pr[\bar{\mathbf{g}}(\text{aux},-,b) = o]},
\end{align}
where \( \bar{\mathbf{g}}(\text{aux}, -, b) \) denotes the outcome of the mechanism without access to the input \( \mathbf{x} \) (e.g., $\mathbf{x}$ is only in one data set). 
Thus, without loss of generality, the denominator follows a zero-mean Laplace PDF, as in the worst-case setting it lacks any gradient component in the direction of $\bar{\mathbf{g}}$, while the numerator is centered around \( \bar{\mathbf{g}} \).
\begin{align}
& c(o; \text{aux}, d, d',b)= \log \frac{\text{Lap}(\bar{\mathbf{g}}(\mathbf{x}), b I_n )}{\text{Lap}(0,b I_n)}\nonumber 
  \\
    &= \log \frac{ \left(\frac{1}{2b}\right)^n \exp\left(-\frac{\|o - \bar{\mathbf{g}}(\mathbf{x}) \|_1}{b}\right)  }{ \left(\frac{1}{2b}\right)^n \exp\left(-\frac{\|o\|_1}{b}\right)   }= \frac{\|o \|_1 - \|o - \bar{\mathbf{g}}(\mathbf{x})\|_1}{b} \nonumber.
\end{align}
For all real valued vectors $A$ and $B$, using $\|A\|_1 = \|(A-B)+B\|_1$, by the triangle inequality  we have:  $\|A\|_1 - \|B\|_1 \leq \|A-B\|_1$. Thus, \begin{equation}
\label{Laplacebound}
   \frac{\|o \|_1 - \|o - \bar{\mathbf{g}}(\mathbf{x}) \|_1}{b} \leq \frac{\|\bar{\mathbf{g}}(\mathbf{x})\|_1}{b} 
\end{equation}
Evaluating the  $\ell_1$ norm in terms of the elements of  
$\bar{\mathbf{g}}(\mathbf{x})=[\mathbf{g}_1, \mathbf{g}_2,\cdots, \mathbf{g}_n]$, we have: 
\begin{equation}
\label{Laplacebound}
   c(o; \text{aux}, d, d', b) \leq \frac{\sum_{j=0}^n |\mathbf{g}_j|}{b}. 
\end{equation}
\end{proof}

\subsection{Proof of Theorem~\ref{MAflaplace}}
\label{subsampledLap2}
In the following, we prove a tight bound on the moments accountant function of univariate Laplace mechanisms, as stated in Theorem~\ref{MAflaplace}.

\vspace{0.05in}

\begin{proof}
Consider two adjacent data sets $d$ and $d'$.  
Without loss of generality, suppose $d'$ has an extra training sample.
Let $\zeta$ denote a fixed sampling rate.  
Consider any sampling realization over $d\cup d' = d'$ using iid sampling with per-element selection of $\zeta$.

With probability $1-\zeta$, the extra sample in $d'$ will not be included and thus query values over the sub-sampled datasets will be identical.  Let $\mu_0$ denote the resulting density of the Laplace mechanism in this case.  Let $q(d,\zeta)$ denote the mean of $\mu_0$.  By construction $|q(d,\zeta)|\leq C$.

With probability $\zeta$, the extra sample in $d'$ will be kept resulting in different query values between the two data sets.  Let $\mu_1$ denote the resulting density of the Laplace mechanism of the query over the sub-sampled $d'$.  Let $q(d',\zeta)$ denote the mean of $\mu_1$.  By construction $|q(d',\zeta)|\leq C$.

Thus, we can identify the mechanism over $d'$ as having a mixture distribution,
\[
M_q(d,\zeta) \sim \mu_0, \quad M_q(d',\zeta) \sim \mu \triangleq (1 - \zeta)\mu_0 + \zeta \mu_1.
\]

For any $\lambda$, we aim to show:
\[
A=\mathbb{E}_{z \sim \mu}
\left[\left(\frac{\mu(z)}{\mu_0(z)}\right)^\lambda\right] \leq \alpha,  
\
\text{and} 
\
B= \mathbb{E}_{z \sim \mu_0}
\left[\left(\frac{\mu_0(z)}{\mu(z)}\right)^\lambda\right] \leq \alpha.
\]
for some explicit \(\alpha\) to be determined later.

Multiplying by $\mu_0(z)/\mu_0(z)$ and rearranging, we can express: 
\[
A
= \mathbb{E}_{z \sim \mu_0}\left[\left(\frac{\mu(z)}{\mu_0(z)}\right)^{\lambda+1}\right]
=\mathbb{E}_{z \sim \mu_0}\left[\left(1 - \zeta+\zeta\frac{ \mu_1(z)}{\mu_0(z)}\right)^{\lambda+1}\right],\]
and
\begin{align}
B&=\mathbb{E}_{z \sim \mu_0}\left[\left(
\frac{\mu_0(z)}{(1 - \zeta)\mu_0(z) + \zeta \mu_1(z)}
\right)^\lambda\right] \nonumber\\
&=\mathbb{E}_{z \sim \mu_0}\left[\left(
\frac{1}{1 - \zeta +  \zeta\frac{\mu_1(z)}{\mu_0(z)}}
\right)^\lambda\right].
\end{align}

Mironov et al.~\cite{mironov2019r} (Section 3.1) demonstrate that $A \geq B$ holds in general for centrally symmetric distributions.

We thus focus on analyzing $A$.  We start by  applying the binomial theorem and the linearity of expectation,
\begin{align*}
    A &=\mathbb{E}_{z \sim \mu_0}\left[\left(1 - \zeta+\zeta\frac{ \mu_1(z)}{\mu_0(z)}\right)^{\lambda+1}\right] \\
    &= \mathbb{E}_{z \sim \mu_0}\left[ \sum_{\eta =0}^{\lambda+1} \binom{\lambda+1}{\eta} (1-\zeta)^{\lambda+1-\eta} \zeta^\eta \left(\frac{ \mu_1(z)}{\mu_0(z)} \right)^\eta \right] \\
    &=  \sum_{\eta =0}^{\lambda+1} \binom{\lambda+1}{\eta} (1-\zeta)^{\lambda+1-\eta} \zeta^\eta \mathbb{E}_{z \sim \mu_0}
    \left[\left(\frac{ \mu_1(z)}{\mu_0(z)} \right)^\eta \right] .
\end{align*}

We can simplify the ratio of the densities as
\begin{align*}
    \frac{\mu_1(z)}{\mu_0(z)} 
    &= \frac{\frac{1}{2b} \exp( - \frac{|z - q(d',\zeta) |}{b}  )}%
    {\frac{1}{2b} \exp( - \frac{|z - q(d,\zeta) |}{b}  )} \\
     &= \exp \left(  \frac{1}{b} \left( |z - q(d,\zeta) | - |z - q(d',\zeta) |\right) \right).
\end{align*}

Without loss of generality, consider that $q(d,\zeta) < q(d', \zeta).$    We split up the real line into three intervals: $\mathbb{R} = (-\infty, q(d,\zeta)] \cup (q(d,\zeta), q(d',\zeta)) \cup [q(d',\zeta), \infty)$. 

We evaluate the expectation $\mathbb{E}_{z \sim \mu_0} \left[\left(\frac{ \mu_1(z)}{\mu_0(z)} \right)^\eta \right] $ over these three intervals separately (conditionally) and then will combine them afterwards (with probabilities of respective events occuring).  We will analyze the middle interval last.

\textbf{Case A1: $z<q(d,\zeta)$:}

Since $z<q(d,\zeta)$, $|z - q(d,\zeta) | = q(d,\zeta) - z$.  
Likewise, $z<q(d',\zeta)$, so  $|z - q(d',\zeta) | = q(d',\zeta) - z$.
The likelihood ratio simplifies
\begin{align*}
    \frac{\mu_1(z)}{\mu_0(z)} 
    &=\exp \left(  \frac{1}{b} \left( |z - q(d,\zeta) | - |z - q(d',\zeta) |\right) \right) \\
    &=\exp \left(  \frac{1}{b} \left( q(d,\zeta) - z - (q(d',\zeta) - z)\right) \right) \\
    &=\exp \left(  \frac{1}{b} \left( q(d,\zeta) - q(d',\zeta) )\right) \right). \\
    % &\leq \exp \left(  \frac{2C}{b}  \right),
\end{align*}
We observe that there is no dependence on $z$, so this ratio becomes a constant in the expectation.  
% $|z - q(d,\zeta) | - |z - q(d',\zeta) |$
Under $\mu_0$, $z$ is equally distributed about $q(d,\zeta)$.  So the probability of this event is $1/2$.  Thus, the contribution to the total expectation is 
\[
\frac{1}{2} \exp \left(  \frac{\eta}{b} \left( q(d,\zeta) - q(d',\zeta) )\right) \right).
\]

\textbf{Case A2: $z>q(d',\zeta)$:}
Since $z>q(d',\zeta)$, $|z - q(d',\zeta) | = z - q(d',\zeta)$.  
Likewise, $z>q(d,\zeta)$, so  $|z - q(d,\zeta) | = z - q(d,\zeta)$.
The likelihood ratio simplifies
\begin{align*}
    \frac{\mu_1(z)}{\mu_0(z)} 
    &=\exp \left(  \frac{1}{b} \left( |z - q(d,\zeta) | - |z - q(d',\zeta) |\right) \right) \\
    &=\exp \left(  \frac{1}{b} \left( z - q(d,\zeta) - (z - q(d',\zeta))\right) \right) \\
    &=\exp \left(  \frac{1}{b} \left( q(d',\zeta) - q(d,\zeta) )\right) \right). \\
    % &\leq \exp \left(  \frac{2C}{b}  \right),
\end{align*}
Again, the ratio is a constant with respect to $z$.  The probability of the event is more complicated to analyze than the previous event.
\begin{align*}
    &\hspace{-1cm}\int_{q(d',\zeta)}^\infty \frac{1}{2b} \exp( - \frac{|z-q(d,\zeta)|}{b} ) db\\
    &=\int_{q(d',\zeta)}^\infty \frac{1}{2b} \exp( - \frac{z-q(d,\zeta)}{b} ) db\\
    &=\frac{1}{2b} \exp(  \frac{q(d,\zeta)}{b}) \int_{q(d',\zeta)}^\infty  \exp( - \frac{z}{b} ) db\\
    &=\frac{1}{2b} \exp(  \frac{q(d,\zeta)}{b})  \frac{1}{-\frac{1}{b}} \exp( - \frac{z}{b} ) \big|_{q(d',\zeta)}^{\infty} \\
    &=\frac{1}{2} \exp(  \frac{q(d,\zeta)}{b})   \exp( - \frac{q(d',\zeta)}{b} )  \\    
    &=\frac{1}{2} \exp(  \frac{q(d,\zeta) -  q(d',\zeta)}{b})     
\end{align*}

Thus, the contribution to the total expectation is
\begin{align*}
&\hspace{-.2cm} \frac{1}{2} \exp(  \frac{q(d,\zeta) -  q(d',\zeta)}{b})\exp \left(  \frac{\eta}{b} \left( q(d',\zeta) - q(d,\zeta) )\right) \right) \\
&= \frac{1}{2} \exp \left(  \frac{\eta - 1}{b} \left( q(d',\zeta) - q(d,\zeta) )\right) \right).
\end{align*}

\textbf{Case A3: $q(d,\zeta)\leq z\leq q(d',\zeta)$:}

Since $z>q(d,\zeta)$, $|z - q(d,\zeta) | = z - q(d,\zeta)$.  
Since $z<q(d',\zeta)$, so  $|z - q(d',\zeta) | = q(d',\zeta) - z$.
The likelihood ratio simplifies
\begin{align*}
    \frac{\mu_1(z)}{\mu_0(z)} 
    &=\exp \left(  \frac{1}{b} \left( |z - q(d,\zeta) | - |z - q(d',\zeta) |\right) \right) \\
    &=\exp \left(  \frac{1}{b} \left( z - q(d,\zeta) - (q(d',\zeta) - z)\right) \right) \\
    &=\exp \left(  \frac{1}{b} \left( - q(d,\zeta) - q(d',\zeta) +2z )\right) \right).
\end{align*}

Plugging this back into the expectation,
\begin{align*}
\mathbb{E}_{z \sim \mu_0}     \left[\left(\frac{ \mu_1(z)}{\mu_0(z)} \right)^\eta \right] 
&= \mathbb{E}_{z \sim \mu_0}     \left[\exp \left(  \frac{\eta}{b} \left( - q(d,\zeta) - q(d',\zeta) +2z )\right) \right) \right]\\
&=\exp \left(  \frac{\eta}{b} \left( -q(d,\zeta) - q(d',\zeta) \right) \right)  \\
&\qquad \times \mathbb{E}_{z \sim \mu_0}\left[\exp \left(  \frac{2\eta z}{b}  \right) \right]. %\\
\end{align*}

Evaluating the inner expectation (only over the interval here),
\begin{align*}
    &\hspace{-.2cm}
    \mathbb{E}_{z \sim \mu_0}\left[\exp \left(  \frac{2\eta z}{b}  \right) \right]  \\
    &=\int^{q(d',\zeta)}_{q(d,\zeta)} \frac{1}{2b} \exp( - \frac{|z-q(d,\zeta)|}{b} + \frac{2\eta z}{b}) db\\
    % &=\int^{q(d',\zeta)}_{q(d,\zeta)} \frac{1}{2b} \exp( - \frac{z-q(d,\zeta)}{b} - \frac{2\eta z}{b}) db\\
    &=\frac{1}{2b} \exp(  \frac{q(d,\zeta)}{b})
    \int^{q(d',\zeta)}_{q(d,\zeta)} \exp( z(-\frac{1-2\eta}{b} ) ) db\\
    &=\frac{1}{2b} \exp(  \frac{q(d,\zeta)}{b})
    \frac{b}{2\eta-1} \Big[ \exp( q(d',\zeta)(-\frac{1-2\eta}{b} ) \\
    &\qquad - \exp( q(d,\zeta)(-\frac{1-2\eta}{b} ) \Big]     \\
    &= \frac{1}{2(2\eta-1)} \Big[ \exp\Big( q(d',\zeta)\frac{2\eta - 1}{b}  + q(d,\zeta)\frac{1 }{b} \Big) \\
    &\qquad - \exp\Big( q(d,\zeta)\frac{2\eta }{b}  \Big)\Big] .   
\end{align*}

Thus, the contribution to the expectation from this case is 
\begin{align*}
\mathbb{E}_{z \sim \mu_0}     \left[\left(\frac{ \mu_1(z)}{\mu_0(z)} \right)^\eta \right] 
&=\exp \left(  \frac{\eta}{b} \left( -q(d,\zeta) - q(d',\zeta) \right) \right)  \\
&\quad \times \frac{1}{2(2\eta-1)} \Big[ \exp\Big( q(d',\zeta)\frac{2\eta - 1}{b}  + q(d,\zeta)\frac{1 }{b} \Big) \\
&\qquad - \exp\Big( q(d,\zeta)\frac{2\eta }{b}  \Big)\Big] . %\\
\end{align*}

\textbf{Combining Cases A1-A3:}
Combining the results, we have that
\begin{align*} 
A
&= \mathbb{E}_{z \sim \mu_0}\left[\left(\frac{\mu(z)}{\mu_0(z)}\right)^{\lambda+1}\right]\\
&=  \frac{1}{2} \exp \left(  \frac{\eta}{b} \left( q(d,\zeta) - q(d',\zeta) )\right) \right)\\
&\quad +\frac{1}{2} \exp \left(  \frac{\eta - 1}{b} \left( q(d',\zeta) - q(d,\zeta) )\right) \right)\\
&\quad + \exp \left(  \frac{\eta}{b} \left( -q(d,\zeta) - q(d',\zeta) \right) \right)  \\
&\quad \times \frac{1}{2(2\eta-1)} \Big[ \exp\Big( q(d',\zeta)\frac{2\eta - 1}{b}  + q(d,\zeta)\frac{1 }{b} \Big) \\
&\qquad - \exp\Big( q(d,\zeta)\frac{2\eta }{b}  \Big)\Big]
\end{align*}

Recall that the query values over the sub-sampled data sets $q(d,\zeta)$ and $q(d',\zeta)$ are averaged over the queries (gradients) of the included samples, so the effect of a single sample is smaller the more samples are included. 
For simplicity, by inspection of the formula, we consider a worst case bound using  $q(d)=0$ and $q(d',\zeta) = C$.

\begin{align*} 
A
&=  \frac{1}{2} \exp \left(  \frac{- \eta C}{b}  \right)\\
&\quad +\frac{1}{2} \exp \left(  \frac{(\eta - 1)C}{b}  \right)\\
&\quad + \exp \left(  \frac{-\eta C}{b}  \right) \frac{1}{2(2\eta-1)} \Big[ \exp\Big( \frac{(\eta - 1)C}{b} \Big)
- \exp( 0  )\Big] \\
&=  \exp \left(  \frac{- \eta C}{b}  \right) \left[ \frac{1}{2} - \frac{1}{2(2\eta-1)}\right]
 \\  %\exp \left(  \frac{-2 \eta C}{b}  \right) \left[ \frac{1}{2}  - \frac{1}{2(2\eta-1)}  \right] \\
&\quad +\exp\Big( \frac{(\eta - 1)C}{b} \Big) \left[ \frac{1}{2} + \frac{1}{2(2\eta-1)}\right] \\
\end{align*}
\begin{align}
\label{equationA}
&=  \exp \left(  \frac{- \eta C}{b}  \right) \left[  \frac{\eta-1}{2\eta-1}  \right]+\exp\Big( \frac{(\eta - 1)C}{b} \Big) \left[ \frac{\eta}{2\eta-1}\right]. 
\end{align}

\textbf{Analysis for B:} Following the argument in Proof~\ref{subsampledLap2} (Theorem 3.2), and using binomial expansion with term-wise comparison, we find that \( B \geq A \), consistent with the result of Mironov et al.

\end{proof}

\subsection{Proof of Theorem~\ref{schurmaf}}
\label{mafschur}
In the following, we prove Theorem~\ref{schurmaf}, that the moments accounting function of the univariate Laplace mechanism is Schur-convex. 
%, stated in Theorem \ref{MAflaplace}.  %
We first prove the following technical lemma, involving second derivatives of the MAF, before continuing on to the main proof.

\begin{lemma}
\label{cor:maf-convex}
The second derivative of the moments accountant function $\alpha(\lambda)$ {in Theorem~\ref{MAflaplace}} %% qualify if is restricted to just univariate laplace, else explicitly say holds for any MAF
with respect to the marginal clipped gradients $|\mathbf{g}_i|$ is non-negative.
\end{lemma}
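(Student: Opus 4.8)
The plan is to recognize that the claim $\partial^2\alpha/\partial|\mathbf{g}_i|^2\ge 0$ is exactly the statement that the inner sum appearing in Theorem~\ref{MAflaplace} is \emph{log-convex} in the coordinate, and then to establish this log-convexity term by term. Concretely, fix the mini-batch parameters, write $g:=|\mathbf{g}_i|\ge 0$ for the coordinate of interest, and replace the symbol $C$ in the formula of Theorem~\ref{MAflaplace} by $g$. Setting
\[
S(g) \;=\; \sum_{\eta=0}^{\lambda+1} w_\eta\, F(g,\eta), \qquad w_\eta := \binom{\lambda+1}{\eta}(1-\zeta)^{\lambda+1-\eta}\zeta^\eta \;\ge\; 0,
\]
we have $\alpha(\lambda)=\log S(g)$, and since $S(g)>0$ on $[0,\infty)$ (each $F(g,\eta)>0$), the second derivative is $\alpha''=\bigl(S''S-(S')^2\bigr)/S^2$, which is non-negative precisely when $S$ is log-convex in $g$.

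The main step is to show that each summand $w_\eta F(g,\eta)$ is log-convex. For $\eta\in\{0,1\}$ one checks directly that $F(g,\eta)\equiv 1$, so these summands are constant and trivially log-convex. For $\eta\ge 2$ the coefficients $\tfrac{\eta}{2\eta-1}$ and $\tfrac{\eta-1}{2\eta-1}$ are strictly positive, and $F(g,\eta)$ is a positive linear combination of the two exponentials $e^{(\eta-1)g/b}$ and $e^{-\eta g/b}$; each exponential has a linear (hence convex) logarithm and is therefore log-convex, and a non-negative linear combination of log-convex functions is again log-convex (an application of Hölder's inequality). Alternatively I can make this fully explicit: writing $\phi=e^{(\eta-1)g/b}$, $\psi=e^{-\eta g/b}$, $c_1=\tfrac{\eta}{2\eta-1}$, $c_2=\tfrac{\eta-1}{2\eta-1}$, a short computation gives $F''F-(F')^2=\tfrac{c_1 c_2}{b^2}(2\eta-1)^2\,\phi\psi\ge 0$, which is the coordinate-wise log-convexity.

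Finally, since a non-negative combination of log-convex functions is log-convex, $S(g)=\sum_\eta w_\eta F(g,\eta)$ is log-convex, hence $\alpha(\lambda)=\log S(g)$ is convex in $g=|\mathbf{g}_i|$, i.e.\ $\partial^2\alpha/\partial|\mathbf{g}_i|^2\ge 0$. I expect the only delicate points to be (i) isolating the degenerate indices $\eta=0,1$, where the factor $2\eta-1$ and the coefficient $\eta-1$ behave specially, and (ii) stating the ``sum of log-convex is log-convex'' fact precisely, since it rests on Hölder's inequality and does not follow from convexity of each logarithm alone; neither is a serious obstacle, but both should be spelled out so the reduction to the explicit two-exponential computation is airtight. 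This lemma then feeds directly into the Schur--Ostrowski verification for Theorem~\ref{schurmaf}, because the total MAF is separable across coordinates and convexity of each summand in its argument is exactly what that criterion requires.
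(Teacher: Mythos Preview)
Your proposal is correct and follows essentially the same route as the paper: both establish the pointwise inequality $F\,F'' \ge (F')^2$ for each $\eta$ (your explicit computation $F''F-(F')^2=\tfrac{c_1c_2}{b^2}(2\eta-1)^2\phi\psi$ matches the paper's term-by-term expansion), and both then lift this to the weighted sum. The only difference is packaging: you name the key structure as \emph{log-convexity} and invoke the standard ``non-negative sums of log-convex functions are log-convex'' closure (via H\"older), whereas the paper carries out the equivalent Cauchy--Schwarz step by hand on the sequences $\{\sqrt{a_\eta F}\}$ and $\{\sqrt{a_\eta F''}\}$; your framing is slightly cleaner and makes the role of the special indices $\eta\in\{0,1\}$ more transparent.
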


\begin{proof}
Let $a_\eta = \binom{\lambda+1}{\eta} (1-\zeta)^{\lambda+1-\eta} \zeta^\eta$ be the positive weight associated with each $\eta$. The second derivative of $\alpha(\lambda)$ can be written as
\[
\frac{d^2 \alpha(\lambda)}{d |\mathbf{g}_i|^2} = \frac{
\left( \sum_{\eta} a_\eta F(|\mathbf{g}_i|,\eta) \right)
\left( \sum_{\eta} a_\eta \frac{d^2F}{d|\mathbf{g}_i|^2} \right)
-
\left( \sum_{\eta} a_\eta \frac{dF}{d|\mathbf{g}_i|} \right)^2
}{
\left( \sum_{\eta} a_\eta F(|\mathbf{g}_i|,\eta) \right)^2
}.
\]
Recall:
\begin{align*}
F(|\mathbf{g}_i|, \eta) &= \frac{\eta}{2\eta-1} e^{\frac{(\eta-1)|\mathbf{g}_i|}{b}} + \frac{\eta-1}{2\eta-1} e^{-\frac{\eta|\mathbf{g}_i|}{b}}, \\
\frac{dF}{d|\mathbf{g}_i|} &= \frac{\eta(\eta-1)}{b(2\eta-1)} \left( e^{\frac{(\eta-1)|\mathbf{g}_i|}{b}} - e^{-\frac{\eta|\mathbf{g}_i|}{b}} \right), \\
\frac{d^2F}{d|\mathbf{g}_i|^2} &= \frac{\eta(\eta-1)}{b^2(2\eta-1)} \left( (\eta-1) e^{\frac{(\eta-1)|\mathbf{g}_i|}{b}} + \eta e^{-\frac{\eta|\mathbf{g}_i|}{b}} \right).
\end{align*}
Expand \( F \times F'' \):
\begin{align*}
F \times F'' &= \left( \frac{\eta}{2\eta-1} e^{\frac{(\eta-1)|\mathbf{g}_i|}{b}} + \frac{\eta-1}{2\eta-1} e^{-\frac{\eta|\mathbf{g}_i|}{b}} \right)
\\
&\quad \times \frac{\eta(\eta-1)}{b^2(2\eta-1)} \left( (\eta-1) e^{\frac{(\eta-1)|\mathbf{g}_i|}{b}} + \eta e^{-\frac{\eta|\mathbf{g}_i|}{b}} \right)
\\
&= \frac{\eta(\eta-1)}{b^2(2\eta-1)^2} \Bigg[
\eta(\eta-1) e^{2\frac{(\eta-1)|\mathbf{g}_i|}{b}}
+ \eta^2 e^{\frac{(\eta-1)|\mathbf{g}_i|/b} - \eta|\mathbf{g}_i|/b}
\\
&\quad + (\eta-1)^2 e^{-\frac{\eta|\mathbf{g}_i|}{b} + \frac{(\eta-1)|\mathbf{g}_i|}{b}}
+ \eta(\eta-1) e^{-2\frac{\eta|\mathbf{g}_i|}{b}}
\Bigg].
\end{align*}

Notice:
\[
e^{\frac{(\eta-1)|\mathbf{g}_i|}{b}} e^{-\frac{\eta|\mathbf{g}_i|}{b}} = e^{-\frac{|\mathbf{g}_i|}{b}},
\quad
e^{-\frac{\eta|\mathbf{g}_i|}{b}} e^{\frac{(\eta-1)|\mathbf{g}_i|}{b}} = e^{-\frac{|\mathbf{g}_i|}{b}}.
\]

Thus:
\begin{eqnarray*}
F \times F'' &=& \frac{\eta(\eta-1)}{b^2(2\eta-1)^2} \times \\ &&\Bigg[
\eta(\eta-1) e^{2\frac{(\eta-1)|\mathbf{g}_i|}{b}}
+ (\eta^2 + (\eta-1)^2) e^{-\frac{|\mathbf{g}_i|}{b}}
+ \eta(\eta-1) e^{-2\frac{\eta|\mathbf{g}_i|}{b}}
\Bigg].
\end{eqnarray*}

Expand \( \left( \frac{dF}{d|\mathbf{g}_i|} \right)^2 \):
\begin{align*}
\left( \frac{dF}{d|\mathbf{g}_i|} \right)^2
&= \left( \frac{\eta(\eta-1)}{b(2\eta-1)} \right)^2 \left( e^{\frac{(\eta-1)|\mathbf{g}_i|}{b}} - e^{-\frac{\eta|\mathbf{g}_i|}{b}} \right)^2
\\
&= \left( \frac{\eta(\eta-1)}{b(2\eta-1)} \right)^2 \left(
e^{2\frac{(\eta-1)|\mathbf{g}_i|}{b}}
- 2 e^{-\frac{|\mathbf{g}_i|}{b}}
+ e^{-2\frac{\eta|\mathbf{g}_i|}{b}}
\right).
\end{align*}
Comparing individual terms pointwise shows that $F \times F'' \geq (F')^2$ holds for all $|\mathbf{g}_i| \geq 0$ and $\eta \geq 0$. Applying the Cauchy--Schwarz inequality to the positive sequence $\{ \sqrt{a_\eta} \sqrt{F(|\mathbf{g}_i|,\eta)} \}$ and $\{ \sqrt{a_\eta} \sqrt{d^2F/d|\mathbf{g}_i|^2} \}$ yields
\[
\left( \sum_{\eta} a_\eta \sqrt{F(|\mathbf{g}_i|,\eta)} \sqrt{\frac{d^2F}{d|\mathbf{g}_i|^2}} \right)^2
\leq
\left( \sum_{\eta} a_\eta F(|\mathbf{g}_i|,\eta) \right)
\left( \sum_{\eta} a_\eta \frac{d^2F}{d|\mathbf{g}_i|^2} \right).
\]
but since $F \times F'' \geq (F')^2$, pointwise, we have $\sqrt{F(|\mathbf{g}_i|,\eta)} \sqrt{\frac{d^2F}{d|\mathbf{g}_i|^2}} \geq \frac{dF}{d|\mathbf{g}_i|}$. Hence, 

\[
\left( \sum_{\eta} a_\eta \frac{dF}{d|\mathbf{g}_i|} \right)
\leq
\left( \sum_{\eta} a_\eta \sqrt{F(|\mathbf{g}_i|,\eta)} \sqrt{\frac{d^2F}{d|\mathbf{g}_i|^2}}\right).
\]
The last two inequalities together yield:

\[
\left( \sum_{\eta} a_\eta \frac{dF}{d|\mathbf{g}_i|} \right)^2
\leq
\left( \sum_{\eta} a_\eta F(|\mathbf{g}_i|,\eta) \right)
\left( \sum_{\eta} a_\eta \frac{d^2F}{d|\mathbf{g}_i|^2} \right),
\]
which shows the numerator is non-negative. Therefore,
\[
\frac{d^2 \alpha(\lambda)}{d |\mathbf{g}_i|^2} \geq 0,
\]
and $\alpha(\lambda)$ is Schur-convex.
\hfill \qedsymbol
\end{proof}

\noindent
We are now ready to prove Theorem~\ref{schurmaf}.

\vspace{0.05in}

\begin{proof}
We apply Schur's condition (also known as the Schur–Strowski criterion) to prove that $\alpha(\lambda)$ is Schur-convex. Recall that a symmetric function \( f(x_1, \dots, x_n) \) is Schur-convex if and only if for all \( i \neq j \),
\[
(x_i - x_j) \left( \frac{\partial f}{\partial x_i} - \frac{\partial f}{\partial x_j} \right) \geq 0.
\]

In our case, the function is $\alpha(\lambda) = \sum_{i=1}^n \alpha_{\mathbf{\bar{g}}_i}(\lambda)$, where $\mathbf{\bar{g}}_i$ is the noisy version of the $\ell_2$-clipped marginal gradients. Denote by $\mathbf{g}_i$, $i \in [n]$, marginal gradients after $\ell_2$ clipping and before the addition of DP noise. Then, with Theorem~\ref{MAflaplace}, the univariate moments accountant for the $i$-th coordinate satisfies
\begin{align}
\label{lap_2account11}
\alpha_{\mathbf{\bar{g}}_i}(\lambda) \leq \log \left[
\sum_{\eta = 0}^{\lambda + 1} \binom{\lambda + 1}{\eta} (1 - \zeta)^{\lambda + 1 - \eta} \zeta^\eta F(|\mathbf{g}_i|, \eta)
\right],
\end{align}
where the function \( F(|\mathbf{g}_i|, \eta) \) is defined as
\begin{align}
\label{eqn:G}
F(|\mathbf{g}_i|, \eta) = 
\frac{e^{\frac{(\eta - 1)|\mathbf{g}_i|}{b}}}{2} + 
\frac{e^{-\frac{\eta |\mathbf{g}_i|}{b}}}{2} +
\frac{e^{\frac{(\eta - 1) |\mathbf{g}_i|}{b}} - e^{\frac{-\eta |\mathbf{g}_i|}{b}}}{2(2\eta - 1)}.
\end{align}

Define the term inside the square brackets in ~\eqref{lap_2account11} as \( X \). Then, the derivative of \( \alpha_{\mathbf{\bar{g}}_i}(\lambda) \) with respect to \( |\mathbf{g}_i| \) satisfies
\begin{equation}
\label{alphaderiv}
\frac{d\alpha_{\mathbf{\bar{g}}_i}(\lambda)}{d|\mathbf{g}_i|} = \frac{\sum_{\eta = 0}^{\lambda + 1} \binom{\lambda + 1}{\eta} (1 - \zeta)^{\lambda + 1 - \eta} \zeta^\eta \frac{dF}{d|\mathbf{g}_i|}}{X}.
\end{equation}

Lets compute the derivative $\frac{dF}{d|\mathbf{g}_i|}$:
\begin{eqnarray}
\frac{dF}{d|\mathbf{g}_i|} &=& \frac{(\eta-1)}{2b} e^{\frac{(\eta-1)|\mathbf{g}_i|}{b}}
- \frac{\eta}{2b} e^{-\frac{\eta |\mathbf{g}_i|}{b}} \nonumber \\
&+& \frac{1}{2(2\eta-1)} \left( \frac{(\eta-1)}{b} e^{\frac{(\eta-1)|\mathbf{g}_i|}{b}} + \frac{\eta}{b} e^{-\frac{\eta |\mathbf{g}_i|}{b}} \right).
\end{eqnarray}

Special cases:

- For $\eta = 0$, the terms cancel symmetrically and $\frac{dF}{d|\mathbf{g}_i|} = 0$.

- For $\eta = 1$, the derivative simplifies to $0$ by symmetry.

- For $\eta > 1$, expanding and grouping terms, we have
\begin{align}
\label{eqn:finalschur}
\frac{dF}{d|\mathbf{g}_i|} = \frac{(\eta-1)\eta}{b(2\eta-1)} \left( e^{\frac{(\eta-1)|\mathbf{g}_i|}{b}} - e^{-\frac{\eta|\mathbf{g}_i|}{b}} \right).
\end{align}
Since $\eta > 1$, the prefactor $\frac{(\eta-1)\eta}{b(2\eta-1)}$ is positive. Moreover, for any $|\mathbf{g}_i| \geq 0$, $e^{\frac{(\eta-1)|\mathbf{g}_i|}{b}} \geq e^{-\frac{\eta|\mathbf{g}_i|}{b}}$. Thus, $\frac{dF}{d|\mathbf{g}_i|} \geq 0$ for all $\mathbf{g}_i$ and all $\eta$. 

As stated earlier,
\[
\frac{d\alpha_{\mathbf{\bar{g}}_i}(\lambda)}{d|\mathbf{g}_i|} = \frac{\sum_{\eta = 0}^{\lambda + 1} \binom{\lambda + 1}{\eta} (1 - \zeta)^{\lambda + 1 - \eta} \zeta^\eta \frac{dF}{d|\mathbf{g}_i|}}{X}.
\]
Since each $\frac{dF}{d|\mathbf{g}_i|} \geq 0$, $X>0$, and all the coefficients are positive, it follows that
\[
\frac{\partial \alpha(\lambda)}{\partial |\mathbf{g}_i|} \geq 0.
\]
Since $\alpha(\lambda) = \sum_{i=1}^n \alpha_{\mathbf{\bar{g}}i}(\lambda)$ with each $\alpha_{\mathbf{\bar{g}}_i}(\lambda)$ depending only on $|\mathbf{g}_i|$, we have $\frac{\partial \alpha(\lambda)}{\partial |\mathbf{g}i|} = \frac{d \alpha_{\mathbf{\bar{g}}_i}(\lambda)}{d |\mathbf{g}_i|}$. Thus, the overall moments accountant function (MAF) is non-decreasing. To satisfy the Schur–Ostrowski criterion, it suffices to show that the second derivative of $\alpha(\lambda)$ is non-negative (MAF is convex). A positive second derivative ensures that for any $i \neq j$, both $|\mathbf{g}_i| - |\mathbf{g}_j|$ and $\frac{\partial \alpha(\lambda)}{\partial |\mathbf{g}_i|} - \frac{\partial \alpha(\lambda)}{\partial |\mathbf{g}_j|}$ share the same sign, thereby satisfying the criterion. 
In Section~\ref{cor:maf-convex} we proved that the second derivatives are non-negative, concluding the proof for Theorem~\ref{schurmaf}.  \hfill $\square$
\end{proof}

An \textit{alternative} approach is to prove the Schur-convexity of the univariate MAF and apply the following results.

\begin{proposition}[Schur\cite{schur1923uber}; Hardy–Littlewood–Pólya~\cite{hardy1929some}]
\label{summajor}
Let \( I \subset \mathbb{R} \) be an interval and \( g : I \to \mathbb{R} \) be a convex function. Then the function
\[
\varphi(x) = \sum_{i=1}^n g(x_i)
\]
is Schur-convex on \( I^n \). Consequently, if \( x \prec y \) on \( I^n \), then
\[
\varphi(x) \leq \varphi(y).
\]
\end{proposition}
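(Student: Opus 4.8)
The plan is to reduce the statement to the classical fact that a separable sum of a convex function is Schur-convex, and to prove it through the Hardy--Littlewood--P\'olya characterization of majorization by elementary transfers. I prefer this route because it uses only convexity of $g$, with no differentiability assumed (unlike a direct use of the Schur--Ostrowski criterion appearing elsewhere in the paper). The first step is to invoke the standard lemma (e.g., Marshall--Olkin): for $x,y\in I^n$, the relation $x\prec y$ holds if and only if $x$ is obtained from $y$ by a permutation followed by a finite sequence of \emph{T-transforms}, where a T-transform fixes all but two coordinates $y_j\ge y_k$ and replaces them by $y_j'=\lambda y_j+(1-\lambda)y_k$ and $y_k'=(1-\lambda)y_j+\lambda y_k$ for some $\lambda\in[0,1]$. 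Since $I$ is an interval, hence convex, each intermediate vector again lies in $I^n$, so $\varphi$ is well defined along the whole chain.

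The second step is the one short computation showing that a single T-transform cannot increase $\varphi$: by convexity of $g$,
\[
g(y_j') + g(y_k') \;\le\; \bigl[\lambda g(y_j) + (1-\lambda)g(y_k)\bigr] + \bigl[(1-\lambda)g(y_j) + \lambda g(y_k)\bigr] \;=\; g(y_j) + g(y_k),
\]
and the remaining coordinates are unchanged, so $\varphi$ weakly decreases at each step of the chain. Since $\varphi$ is manifestly permutation-invariant, combining this with the first step yields $\varphi(x)\le\varphi(y)$ whenever $x\prec y$, which is exactly Schur-convexity of $\varphi$ on $I^n$; the ``consequently'' clause of the proposition is then immediate from the definition.

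As a cross-check, I would note two alternative standard arguments that reach the same conclusion. First, if $g$ is differentiable, then $\varphi$ is symmetric with $\partial\varphi/\partial x_i = g'(x_i)$, convexity makes $g'$ nondecreasing, so $(x_i-x_j)\bigl(g'(x_i)-g'(x_j)\bigr)\ge 0$ for all $i,j$, and the Schur--Ostrowski criterion applies directly (the nonsmooth case then follows by approximating $g$ uniformly on compacts by smooth convex functions). Second, $\varphi$ is a symmetric convex function on the convex set $I^n$, and since $x\prec y$ places $x$ in the convex hull of the permutations of $y$ (Rado's theorem), Jensen's inequality together with symmetry gives $\varphi(x)\le\varphi(y)$.

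None of these steps is computationally heavy. The only genuinely nontrivial ingredient is the T-transform decomposition lemma of the first step (equivalently, Rado's theorem identifying $\{x : x\prec y\}$ with the convex hull of the permutations of $y$). For this paper that is a classical, citable result, so it needs no new work here; everything else reduces to the single convexity inequality above together with a routine telescoping over the chain of transfers.
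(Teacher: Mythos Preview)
Your proof is correct. The T-transform argument is the classical route: the Hardy--Littlewood--P\'olya decomposition of majorization into a finite chain of elementary transfers, the one-line convexity inequality showing $\varphi$ is nonincreasing along each transfer, and permutation invariance to finish. Your two cross-checks (Schur--Ostrowski in the differentiable case, and the Rado/permutohedron argument via symmetry and Jensen) are also sound and standard.

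For comparison with the paper: the paper does not actually prove this proposition. It is stated as a classical result, attributed to Schur and to Hardy--Littlewood--P\'olya, and is mentioned only as an \emph{alternative} tool one could invoke in place of the paper's direct Schur--Ostrowski verification for the MAF (Theorem~\ref{schurmaf}). So there is no ``paper's own proof'' to compare against here; you have simply supplied the standard proof of a result the paper takes as given.
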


\subsection{Proof of Lemma~\ref{lem:majorset}}
\label{majorseet}

In the following, we prove Lemma~\ref{lem:majorset} 
on the majorization set for $\ell_2$ clipped gradients.

\vspace{0.05in}

\begin{proof}
Since the $\ell_2$ clipping ensures that $\|\mathbf{g}\|_2 \leq C$, we have $|\mathbf{g}_1| \leq C = x_1$. Applying the AM–QM inequality to the first $i$ marginal clipped gradients, we have
\[
\frac{1}{i} \sum_{j=1}^i |\mathbf{g}_j| \leq \sqrt{\frac{1}{i} \sum_{j=1}^i (|\mathbf{g}_j|)^2} \leq \frac{C}{\sqrt{i}},
\]
which implies
\[
\sum_{j=1}^i |\mathbf{g}_j| \leq C \sqrt{i}.
\]
On the other hand, the majorization set $x$ satisfies
\[
\sum_{j=1}^i x_j = C \sqrt{i},
\]
since
\[
\sum_{j=1}^i x_j = C (\sqrt{i} - \sqrt{0})
\]
by telescoping the differences.

Thus, for each $i = 1, \dots, n$, we have
\[
\sum_{j=1}^i |\mathbf{g}_j| \leq \sum_{j=1}^i x_j,
\]
establishing that $|\mathbf{G}| \prec_w x$.
\end{proof}

\section{PLRV Mechanism Proofs}
\subsection{Proof of Theorem~\ref{the:subsampledMAF2232}}
\label{proofplrvuni}

In the following, we prove a tight bound on the moments accountant function of univariate PLRV mechanisms, as stated in Theorem~\ref{the:subsampledMAF2232}.

\vspace{0.05in}

\begin{proof}Let $J\in[n]$ denote the indices of elements randomly selected through mini-batch sub-sampling, where elements are included i.i.d. with sub-sampling probability $\zeta=L/N$. We express the distribution of $M_{\bf{g}}(d',f)$ as a double-fold mixture. 

Let $\mu_0(b)$ denote a PLRV  distribution centered at $\bf{g}$, with scale distributed with $f(b)$. Thus $M_{\bf{g}}(d,f) \sim \mu_0$.

Let $\mu_1(b)$ denote a PLRV distribution centered at the value of $\bar{g}$ conditioned on the event $n\in J$.  Then the distribution of $M_{\bf{g}}(d',f)$ is  the mixture $\mu_\Delta = (1 - \zeta)\mu_0(b) + \zeta \mu_1(b)$. 

The MAF of the PLRV mechanism with a pre-specified (but arbitrary) pair $\{d,d'\}$ and using worst case sensitivity (to bound over worst case $\mathrm{aux}$), can be expressed as
\begin{align}
    \alpha_{M_{\bar{g}}}(\lambda)
  &=
  \max \Bigl( \alpha_{M_{\bar{g}}} (\lambda; d,f), \alpha_{M_{\bar{g}}}(\lambda; d',f) \Bigr) \\
  &= \log \max \left\{
  \mathbb{E}_{z \sim \mu_0}\left[\left(\frac{\mu_0(z)}{\mu_\Delta(z)}\right)^\lambda\right]
  ,\,
  \mathbb{E}_{z \sim \mu_\Delta}\left[\left(\frac{\mu_\Delta(z)}{\mu_0(z)}\right)^\lambda\right]   \right\},
\end{align}
where we use that by monotonicity of logarithms,  
\[\max  \{\log a_1, \log a_2\} = \log \max \{a_1, a_2\}.\]

Similar to the argument in Proof~\ref{subsampledLap2} (of Theorem 3.2), we multiply by \( \mu_0(z)/\mu_0(z) \) and rearrange to obtain:

\[
A
= \mathbb{E}_{z \sim \mu_0(b)}\left[\left(\frac{\mu(z;b)}{\mu_0(z;b)}\right)^{\lambda+1}\right]
=\mathbb{E}_{z \sim \mu_0(b)}\left[\left(1 - \zeta+\zeta\frac{ \mu_1(z;b)}{\mu_0(z;b)}\right)^{\lambda+1}\right],\]
and
\[
B=\mathbb{E}_{z \sim \mu_0(b)}\left[\left(
\frac{\mu_0(z;b))}{(1 - \zeta)\mu_0(z;b)) + \zeta \mu_1(z;b))}
\right)^\lambda\right]\]
\[=\mathbb{E}_{z \sim \mu_0(b)}\left[\left(
\frac{1}{1 - \zeta +  \zeta\frac{\mu_1(z;b))}{\mu_0(z;b))}}
\right)^\lambda\right].
\]
\begin{remark}
The expressions for \( A \) and \( B \) are conditional (on \( b \)) variants of those in Proof~\ref{subsampledLap2}. By the law of total expectation, the bound on \( A \) becomes \( \mathbb{E}_b[A(z \mid b)] \), which simplifies to:
\begin{align}
\label{equationA}
A = \mathbb{E}_b\left[
\exp\left( \frac{-\eta C}{b} \right) \cdot \frac{\eta - 1}{2\eta - 1}
+ \exp\left( \frac{(\eta - 1)C}{b} \right) \cdot \frac{\eta}{2\eta - 1}
\right].
\end{align}

This yields the bound:
\[
A =
\frac{\eta}{2\eta - 1} \mathcal{M}_u\big((\eta - 1)C\big) + \frac{\eta - 1}{2\eta - 1} \mathcal{M}_u\big(-\eta C\big),
\]
where \( u = 1/b \) is the reciprocal of the PLRV scale.

Thus, the moments accounting function satisfies:
\[
\alpha_{M_{\mathbf{g}}}(\lambda) = \log \left\{
\sum_{\eta = 0}^{\lambda + 1} \binom{\lambda + 1}{\eta} (1 - \zeta)^{\lambda + 1 - \eta} \zeta^\eta \mathcal{G}(C, \eta)
\right\}.
\]
\end{remark}

We proceed to formally prove the above result. Mironov et al.~\cite{mironov2019r} (Section 3.1) demonstrate that $A \geq B$ holds in general for centrally symmetric distributions.

We begin by applying the binomial theorem,
\begin{align*}
    B&= \mathbb{E}_{z \sim \mu_0}\left[\left(1-\zeta+\zeta\frac{ \mu_1(z)}{\mu_0(z)}\right)^{\lambda+1}\right] \\
    &= \mathbb{E}_{z \sim \mu_0}\left[ \sum_{\eta =0}^{\lambda+1} \binom{\lambda+1}{\eta} (1-\zeta)^{\lambda+1-\eta} (\zeta \frac{ \mu_1(z)}{\mu_0(z)})^\eta   \right] \\
    &=  \sum_{\eta =0}^{\lambda+1} \binom{\lambda+1}{\eta} (1-\zeta)^{\lambda+1-\eta} \zeta^\eta \mathbb{E}_{z \sim \mu_0}\left[
 \Bigl(\frac{ \mu_1(z)}{\mu_0(z)}\Bigr)^\eta   \right] \\
\end{align*}

\[
A= \mathbb{E}_{z \sim \mu_\Delta}\left[\left(1 - \zeta+\frac{\zeta \mu_1(z)}{\mu_0(z)}\right)^\lambda\right]=\mathbb{E}_{z \sim \mu_0}\left[\left(1-\zeta+\frac{\zeta \mu_1(z)}{\mu_0(z)}\right)^{\lambda+1}\right],\]
% \textcolor{red}{clarify the second equality by opening the first!}
\[B=\mathbb{E}_{z \sim \mu_0}\left[\left(
\frac{\mu_0(z)}{(1-\zeta)\mu_0(z) + \zeta \mu_1(z)}
\right)^\lambda\right]=\mathbb{E}_{z \sim \mu_0}\left[\left(
\frac{1}{1-\zeta +  \frac{\zeta\mu_1(z)}{\mu_0(z)}}
\right)^\lambda\right]
\]

Let us begin with "$A$." While $A$ involves two cases—$\mu_0(z) \propto e^{-\frac{|z|_1}{b}}$ and $\mu_0(z) \propto e^{-\frac{|z - C|_1}{b}}$—both yield the same bound, likely due to an underlying symmetry. Thus, we focus on the former.

\textbf{Analysis for  A:}
\[
A=\mathbb{E}_{z \sim \mu_0(b)}\left[\left(1-\zeta+\zeta \cdot e^{\frac{-\|z-C\|_1+\|z\|_1}{b}}\right)^{\lambda+1}\right],\]

To tackle each we must cover three cases, $z \geq c$, $z<0$ and $z \in[0,C)$. We will first analyze easier $z \geq c$ and $z<0$ cases and lastly derive the harder case $z \in[0,C)$. 

\textbf{Case A-I ($z \geq C$):}
\[
A=\mathbb{E}_{\{[z \geq C] \sim  \mu_0(b)\}}\left[\left(1-\zeta+\zeta \cdot e^{C/b}\right)^{\lambda+1}\right],\]
\[A=\mathbb{E}_{\{[z \geq C] \sim  \mu_0(b)\}} \left\{
\sum_{\eta =0}^{\lambda+1} \binom{\lambda+1}{\eta} (1-\zeta)^{\lambda+1-\eta} \zeta^\eta \cdot e^{\eta C/b}
\right\}\]

\[A=
\sum_{\eta =0}^{\lambda+1}\binom{\lambda+1}{\eta} (1-\zeta)^{\lambda+1-\eta} \zeta^\eta \cdot \mathbb{E}_{\{[z \geq C] \sim  \mu_0(b)\}} \{e^{\eta C/b}
\}\]
Note that $e^{\eta C/b}$ has no variable $z$ and purely written in terms of b. Thus we can plug the Laplace CDF followd by expectation over b. Precisely,

\[A=
\sum_{\eta =0}^{\lambda+1} \binom{\lambda+1}{\eta} (1-\zeta)^{\lambda+1-\eta} \zeta^\eta \cdot  \mathbb{E}_{b \sim  f(b)\}}  \{\Proba_{z\sim Lap(0, b)}(z \geq C) \cdot e^{\eta C/b} 
\}\]
but we know that $\Proba_{z\sim Lap(0, b)}(z \geq C)=0.5 e^{-C/b}$. Thus
\[A=
\sum_{\eta =0}^{\lambda+1} \binom{\lambda+1}{\eta} (1-\zeta)^{\lambda+1-\eta} \zeta^\eta \cdot \mathbb{E}_{b \sim  f(b)\}} \{0.5 e^{-C/b} \cdot e^{\eta C/b} 
\}\]

Given that the expectation is half of MGF at $(\eta-1) \cdot C$, we have 

\[A= 0.5 *
\sum_{\eta =0}^{\lambda+1} \binom{\lambda+1}{\eta} (1-\zeta)^{\lambda+1-\eta} \zeta^\eta \cdot \mathcal{M}_{u}((\eta-1) \cdot C)\] 

\textbf{Case A-II ($z<0$):}
\[
A=\mathbb{E}_{\{[z < 0] \sim  \mu_0(b)\}}\left[\left(1-\zeta+\zeta \cdot e^{-\frac{C}{b}}\right)^{\lambda+1}\right],\]

\[A=\mathbb{E}_{\{[z < 0] \sim  \mu_0(b)\}} \left\{
\sum_{\eta =0}^{\lambda+1} \binom{\lambda+1}{\eta} (1-\zeta)^{\lambda+1-\eta} \zeta^\eta \cdot e^{-\eta C/b}
\right\}\]

\[A=
\sum_{\eta =0}^{\lambda+1}\binom{\lambda+1}{\eta} (1-\zeta)^{\lambda+1-\eta} \zeta^\eta \cdot \mathbb{E}_{\{[z<0] \sim  \mu_0(b)\}} \{e^{-\eta C/b}
\}\]
Note that $e^{-\eta C/b}$ has no variable $z$ and purely written in terms of b. Thus we can plug the Laplace CDF followd by expectation over b. Precisely,

\[A=
\sum_{\eta =0}^{\lambda+1} \binom{\lambda+1}{\eta} (1-\zeta)^{\lambda+1-\eta} \zeta^\eta \cdot  \mathbb{E}_{b \sim  f(b)\}}  \{\Proba_{z\sim Lap(0, b)}(z < 0) \cdot e^{-\eta C/b} 
\}\]
but  we know that $\Proba_{z\sim Lap(0, b)}(z<0)=0.5$. Thus
\[A=
\sum_{\eta =0}^{\lambda+1} \binom{\lambda+1}{\eta} (1-\zeta)^{\lambda+1-\eta} \zeta^\eta \cdot \mathbb{E}_{b \sim  f(b)\}} \{0.5 \cdot e^{-\eta C/b} 
\}\]

Given that the expectation is half of MGF at $-\eta \cdot C$, we have 

\[A= 0.5 *
\sum_{\eta =0}^{\lambda+1} \binom{\lambda+1}{\eta} (1-\zeta)^{\lambda+1-\eta} \zeta^\eta \cdot \mathcal{M}_{u}(-\eta \cdot C)\] 

\textbf{Case A-III ($0 \leq z<C$):}
\[
A=\mathbb{E}_{\{[0 \leq z < C] \sim  \mu_0(b)\}}\left[\left(1-\zeta+\zeta \cdot e^{\frac{2z-C}{b}}\right)^{\lambda+1}\right],\]

\[A=
\sum_{\eta =0}^{\lambda+1}\binom{\lambda+1}{\eta} (1-\zeta)^{\lambda+1-\eta} \zeta^\eta \cdot \mathbb{E}_{\{[0 \leq z < C] \sim  \mu_0(b)\}} \{e^{\frac{\eta(2z-C)}{b}}
\}\]
So we need to calculate $\mathbb{E}_{\{[0 \leq z < C] \sim  \mu_0(b)\}} \{e^{\frac{\eta(2z-C)}{b}}
\}$ which can be written as a sequence of two expectations (law of total expectation):
\[\mathbb{E}_{b \sim  f(b)} \mathbb{E}_{z\sim Lap(0, b)} \left\{e^{\frac{\eta(2z-C)}{b}}
\right\}= \mathbb{E}_{b \sim  f(b)} \left\{\int_0^C \frac{e^{\frac{\eta(2z-C)-|z|}{b}}}{2b}
dz \right\} \]
We focus on the one-dimensional Laplace distribution in our analysis to simplify the presentation and avoid excessive jargon. The extension to the multi-dimensional case follows straightforwardly. Calculating the integral is easy and it is given as:
\[\int_0^C \frac{e^{\frac{z(2\eta-1)-(C\eta)}{b}}}{2b}
dz = \frac{e^{\frac{C(2\eta-1)-(C\eta)}{b}}-e^{\frac{-C\eta}{b}}}{2*(2\eta-1)}\]
Moving forward with the expectation:
\[\mathbb{E}_{b \sim  f(b)} \left\{\frac{e^{\frac{C(\eta-1)}{b}}-e^{\frac{-C\eta}{b}}}{2*(2\eta-1)}\right\}=\frac{\mathcal{M}_u(C(\eta-1))-\mathcal{M}_u(-C\eta)}{2*(2\eta-1)} \].

Combining the results in Cases A-I, A-II and A-III, we have that
$A = \frac{\eta}{2\eta - 1} \mathcal{M}_u\big((\eta - 1)C\big) + \frac{\eta - 1}{2\eta - 1} \mathcal{M}_u\big(-\eta C\big)$ concluding the proof.

\textbf{Analysis for B:} Following the argument in Proof~\ref{subsampledLap2} (Theorem 3.2), and using binomial expansion with term-wise comparison, we find that \( B \geq A \), consistent with the result of Mironov et al.

\end{proof}

\subsection{Justifying Differentiation (Leibniz's Rule)}
\label{sec:justify}

In the proof of Theorem~\ref{error_PLRV} (presented in Section~\ref{apdx:proof:error_PLRV}), we will use Leibniz's rule for differentiation under the integral sign.  
For completeness, we now discuss the conditions under which this holds.

We use Leibniz's rule for differentiation under the integral sign:
\[
\frac{d}{du} \int_{\mathbb{R}^k} f(u, O) \, dO = \int_{\mathbb{R}^k} \frac{\partial f(u, O)}{\partial u} \, dO,
\]
which is valid under the following conditions:
1. Continuity of the Integrand: The function \(f(u, O) = e^{-u \cdot \|O - \bar{\mathbf{g}}_t(d)\|_1}\) is continuous with respect to \(u > 0\) and \(O \in \mathbb{R}^k\).
2. Existence and Continuity of the Derivative: The partial derivative \(\frac{\partial f(u, O)}{\partial u} = -\|O - \bar{\mathbf{g}}_t(d)\|_1 e^{-u \cdot \|O - \bar{\mathbf{g}}_t(d)\|_1}\) exists and is continuous.
3. Convergence of the Integral: The integral
   \[
   \int_{\mathbb{R}^k} e^{-u \cdot \|O - \bar{\mathbf{g}}_t(d)\|_1} \, dO
   \]
   converges for all \(u > 0\), as the exponential decays rapidly for large \(\|O - \bar{\mathbf{g}}_t(d)\|_1\).
4. No Divergence from the Derivative: The derivative \(-\|O - \bar{\mathbf{g}}_t(d)\|_1 e^{-u \cdot \|O - \bar{\mathbf{g}}_t(d)\|_1}\) does not cause divergence because the exponential decay ensures the integral remains finite.

Since these conditions are met, applying Leibniz's rule to move \(\frac{d}{du}\) outside the inner integral is valid.

\subsection{Proof of Theorem~\ref{error_PLRV}}
\label{apdx:proof:error_PLRV}

\begin{proof}
The \(\ell_1\) error is defined as:
\[
\ell_1(z) = \mathbb{E}[\| z - \bar{\mathbf{g}}_t(d) \|_1].
\]
Using the PDF \( g(u) \) of the inverse scale parameter \( u \), it can be rewritten as:
\[
\ell_1(z) = \int_{0}^{\infty} g(u) \int_{\mathbb{R}^n} \| z - \bar{\mathbf{g}}_t(d) \|_1 e^{-u \| z - \bar{\mathbf{g}}_t(d) \|_1} \, dz \, du.
\]

a) applying Leibniz’s rule, which permits differentiation under the integral sign (see Appendix~\ref{sec:justify} for condition verification), and using the identity $\| z - \bar{\mathbf{g}}_t(d) \|_1 e^{-u \| z - \bar{\mathbf{g}}_t(d) \|_1} = -\frac{d}{du} e^{-u \| z - \bar{\mathbf{g}}_t(d) \|_1},$

b) moving \( \frac{d}{du} \) outside the inner integral: \[
\ell_1(z) = \int g(u) \left(\frac{u}{2}\right)^n \left( -\frac{d}{du} \int_{\mathbb{R}^n} e^{-u \| O - \bar{\mathbf{g}}_t(d) \|_1} \, dO \right) \, du, \] and c) integrating by parts yields: $
\ell_1(z) = \int_{0}^{\infty} \frac{g(u)}{u} \, du$. 

\end{proof}

\if 1
\color{brown}
\subsection{Laplace Mechanism in High Privacy Regimes}
\begin{theorem}[Privacy Loss Bound for Laplace Mechanism]
Let $M(x) = f(x) + Z$ and $M(x') = f(x') + Z$ be the outputs of the Laplace mechanism applied to neighboring databases $x$ and $x'$, where $Z \sim \text{Laplace}(0, b)$.

Define the privacy loss random variable
\[
L(O) = \log\left( \frac{p_{M(x)}(O)}{p_{M(x')}(O)} \right),
\quad \text{where } O = M(x).
\]
Then, for $\Delta = \|f(x) - f(x')\|_1$, the probability of positive privacy loss satisfies
\[
\delta(0) = \Pr(L(O) > 0) = 
\begin{cases}
\displaystyle \frac{e^{\frac{\Delta}{b}} (2 - \frac{\Delta}{b})}{4}, & \text{if } \Delta < 0, \\
\displaystyle 1 - \frac{e^{-\frac{\Delta}{b}} (2 + \frac{\Delta}{b})}{4}, & \text{if } \Delta > 0.
\end{cases}
\]
\end{theorem}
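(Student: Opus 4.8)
The plan is to evaluate $\Pr(L(O) > 0)$ directly, by substituting the closed forms of the two Laplace densities and then integrating the density of the injected noise over the region in which the privacy loss is positive. First I would write $p_{M(x)}(O) = (2b)^{-n}\exp(-\|O - f(x)\|_1/b)$ and likewise for $M(x')$, so that
\[
L(O) = \frac{1}{b}\bigl(\|O - f(x')\|_1 - \|O - f(x)\|_1\bigr),
\]
and then use $O = M(x) = f(x) + Z$ with $Z \sim \mathrm{Lap}(0,b)$ (coordinatewise independent in the multivariate case). Writing $v = f(x) - f(x')$ with $\|v\|_1 = |\Delta|$, we get $O - f(x) = Z$ and $O - f(x') = Z + v$, so the event of interest becomes the purely geometric condition
\[
\{\, L(O) > 0 \,\} = \bigl\{\, \|Z + v\|_1 > \|Z\|_1 \,\bigr\}.
\]

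Since both the $\ell_1$ norm and the Laplace density decompose coordinatewise and the target expression depends only on the scalar $\Delta$, I would reduce to the single coordinate in which $f(x)$ and $f(x')$ differ; let $\Delta$ denote the \emph{signed} gap there (this is what makes the two cases $\Delta>0$ and $\Delta<0$ meaningful, despite the statement's nominal $\Delta = \|f(x)-f(x')\|_1$). The event then collapses to $\{\,|Z+\Delta| > |Z|\,\}$, which is a half-line in $Z$. The remaining work is to integrate the $\mathrm{Lap}(0,b)$ density against the indicator of that half-line, splitting $\mathbb{R}$ at the relevant kink points ($0$ from the noise density, $-\Delta$ from the shifted $\ell_1$ term, and the sign-change point $-\Delta/2$). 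On the band between $-\Delta$ and $0$ the privacy loss equals the affine map $(2Z+\Delta)/b$, and it is precisely this ``transition band'' that produces the $\Delta/b$ factor in the claimed formula, whereas outside it the privacy loss is constant ($\pm\Delta/b$) and yields the pure exponential terms; assembling the piecewise contributions, using $F(t) = \tfrac12 e^{t/b}$ for $t\le 0$ and $F(t) = 1 - \tfrac12 e^{-t/b}$ for $t\ge 0$, and carrying the two sign regimes of $\Delta$ separately should produce the two branches of the statement.

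I expect the main obstacle to be twofold. (i) The piecewise bookkeeping must be carried out precisely enough that the transition band contributes exactly the coefficient $(2 \pm \Delta/b)/4$ rather than an off-by-a-constant variant; here the sign conventions and the endpoint behavior at $Z = -\Delta/2$ must be tracked carefully. (ii) The reduction to a single coordinate needs justification: one must argue that the stated value is the correct (or extremal) one over the admissible placements of the mass of $v$ subject to $\|v\|_1 = |\Delta|$. If instead the intended reading spreads $\Delta$ over several coordinates, the coordinatewise factorization forces a convolution of several copies of the transition-band law, which I would handle by induction on the number of active coordinates; I anticipate that this case analysis, rather than any individual integral, is where the real effort lies.
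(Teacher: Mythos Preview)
Your reduction to $\{|Z+\Delta|>|Z|\}$ with $Z\sim\mathrm{Lap}(0,b)$ matches the paper's first move exactly; the paper also works in one dimension, so your multivariate worries are moot here. After that point the paper does \emph{not} integrate: it asserts that $\Pr(|Z+\Delta|>|Z|)$ equals $\Pr(\Delta/b + Z_1 > Z_2)$ for two \emph{independent} standard Laplace variables $Z_1,Z_2$, and then quotes the closed form for the difference of two independent Laplace variables (via the $b\to 1$ limit of a two-scale formula). The substitution of $|Z|$ by an independent copy is not justified in the paper.

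Your direct integration will not recover the stated formula, and the reason is structural rather than bookkeeping. Squaring both sides shows $|Z+\Delta|>|Z|$ is simply $Z>-\Delta/2$ (for $\Delta>0$), so the probability is $1-\tfrac12 e^{-\Delta/(2b)}$ with no polynomial prefactor; the ``transition band'' you describe governs the \emph{value} of $L$, not its sign, and contributes nothing extra once you ask only for $\Pr(L>0)$. Numerically, at $\Delta=b$ your route gives $1-\tfrac12 e^{-1/2}\approx 0.697$ while the theorem claims $1-\tfrac{3}{4e}\approx 0.724$; moreover, by the symmetry $Z\mapsto -Z$ the true $\Pr(L>0)$ depends only on $|\Delta|$, whereas the theorem's two branches behave oppositely as $|\Delta|\to\infty$. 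So the gap is not in your plan but in the target: the stated expression is the CDF of a difference of two independent Laplace variables, and the paper reaches it through an unjustified independent-copy step that your honest computation cannot reproduce.
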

\begin{proof}
By definition of the Laplace mechanism, we have
\[
M(x) = f(x) + Z, \quad M(x') = f(x') + Z,
\]
where $Z \sim \text{Laplace}(0, b)$.

The privacy loss random variable at output $O$ is
\[
L(O) = \frac{1}{b} \left( |O - f(x')| - |O - f(x)| \right).
\]
Substituting $O = M(x) = f(x) + Z$ yields
\[
L(O) = \frac{1}{b} \left( |Z + (f(x) - f(x'))| - |Z| \right).
\]

Thus, the event $\{L(O) > 0\}$ is equivalent to
\[
\left\{ |Z + (f(x) - f(x'))| > |Z| \right\}.
\]
Let $\Delta = f(x) - f(x')$. We can then rewrite the event as
\[
\Pr(|Z + \Delta| > |Z|).
\]

Because $Z \sim \text{Laplace}(0, b)$, we can standardize by defining $Z = b Z_1$ where $Z_1 \sim \text{Laplace}(0,1)$.
Thus,
\[
\Pr\left( |\Delta + b Z_1| > |b Z_1| \right) = \Pr\left( \frac{\Delta}{b} + Z_1 > Z_2 \right),
\]
where $Z_1, Z_2 \sim \text{Laplace}(0,1)$ are independent standard Laplace random variables.
You have:
\[
P(X > Y) = P(\mu + b Z_1 > Z_2)
\quad \text{where} \quad Z_1, Z_2 \sim \text{Laplace}(0,1)
\]
with:
\[
b = \frac{b_X}{b_Y}, \quad \mu = \frac{\mu_Y - \mu_X}{b_Y}= \frac{f(x) - f'(x)}{b}.
\]

\medskip

\noindent

The probability
\[
P(\mu + b Z_1 > Z_2)
\]
is given by:
\[
P(\mu + b Z_1 > Z_2) =
\begin{cases}
\displaystyle \frac{b^2 e^{\mu/b} - e^{\mu}}{2(b^2 - 1)}, & \text{if } \mu < 0, \\[12pt]
\displaystyle 1 - \frac{b^2 e^{-\mu/b} - e^{-\mu}}{2(b^2 - 1)}, & \text{if } \mu > 0.
\end{cases}
\]

\end{proof}
\color{black}

\fi 

\section{Further Discussion}
\subsection{Interpreting \textsf{PLRV-O} Parameters}
\label{exp_appendix:parameter}

\noindent\textbf{Theta - Learnability.} 
Table~\ref{tab:theta_dependency} highlights the role of the $\theta$ parameter in model learnability under privacy constraints. By varying the $\theta$ while keeping all other hyperparameters fixed, we have observed a clear understanding of $\theta$ dependency on model performance. Smaller values of $\theta$ led to significantly lower performance, while larger values of $\theta$ substantially improved the model performance. Therefore, $\theta$ can be observed as a sensitivity control parameter, with larger values offering good model learnability in our setting.

\vspace{0.05in}
\noindent\textbf{K - Utility Parameter.} 
Table~\ref{tab:k_dependency} shows that the accuracy of \textsf{PLRV-O} boosts with higher values of $k$. This suggests that $k$ serves as a utility parameter, improving the model performance under DP.

\begin{table}[H]
    \centering
    \vspace{0.15in}
      \caption{
        CNN-MNIST: Memorization threshold dependency on $\theta$. Fixed parameters: $\delta$ = $10^{-5}$, Clip = 3, $q = 0.01$, Steps = 300. Larger $\theta$ values are less destructive in this setting. Small theta values can lead to either unstable models or highly protected yet accurate ones. PLRV-O Acc averaged over 5 trials. 
    }
    \begin{tabular}{|c|c|c|c|}
    \hline
    \textbf{$\epsilon$} & \textbf{k} & \textbf{$\theta$}    & \textbf{\textsf{PLRV-O} Acc} \\
     
    \hline
    0.0457 & 1418 & \cellcolor{yellow!50}1.01E-05   & \cellcolor{red!10}18.456\textpm1.87  \\
    
    0.7971 & 1418 & \cellcolor{yellow!50} 1.01E-04    & \cellcolor{red!10}64.174\textpm4.43  \\
    
    $238.10$ & 1418 & \cellcolor{yellow!50}1.01E-03    & \cellcolor{red!10}93.658\textpm0.34 \\
    \hline
    \end{tabular}
    \label{tab:theta_dependency}
\end{table}

\begin{table}[H]
    \centering
       \caption{CNN-MNIST: Model quality enhances with $k$. PLRV-O Acc taken over 5 trials and averaged. Fixed parameters: $\delta$ = $10^{-5}$, $\theta =1.01E-03 $, Clip = 1, $q$ = 0.01, Steps = 300. }
   % \vspace{-0.5cm}
    \begin{tabular}{|c|c|c|}
    \hline
    \textbf{$\epsilon$} & \textbf{k}   & \textbf{\textsf{PLRV-O} Acc}  \\
    \hline
    1.0691 & \cellcolor{yellow!50}500   & \cellcolor{red!10}87.614\textpm1.34 \\
    
    4.1055 & \cellcolor{yellow!50}1000   & \cellcolor{red!10}92.146\textpm0.91 \\
    
    8.3310 & \cellcolor{yellow!50}1390   & \cellcolor{red!10}93.792\textpm0.57 \\
    
    8.7180 & \cellcolor{yellow!50}1418.922 & \cellcolor{red!10}93.634\textpm0.79 \\
    
    8.8879 & \cellcolor{yellow!50}1430    & \cellcolor{red!10}93.702\textpm0.63 \\
    
    9.9292 & \cellcolor{yellow!50}1500    & \cellcolor{red!10}93.438\textpm0.42 \\
    
    20.4878 & \cellcolor{yellow!50}2000    & \cellcolor{red!10}94.13\textpm0.49 \\
    
    66.5293 & \cellcolor{yellow!50}3000    & \cellcolor{red!10}95.114\textpm0.22 \\
    
    363.376 & \cellcolor{yellow!50}5000  & \cellcolor{red!10}96.00\textpm0.45 \\
    
    2.23E+03 & \cellcolor{yellow!50}10000 & \cellcolor{red!10}96.212\textpm0.40 \\
    \hline
    \end{tabular}
    \vspace{0.5cm}
    \label{tab:k_dependency}
\end{table}

\subsection{Tighter Search Space for \textsf{PLRV-O}}
\label{parameter}
\textsf{PLRV-O} finetuning for larger models, e.g., >100M parameters, is a harder problem to solve. We propose an approach to constructing the \textsf{PLRV-O} search space by filtering candidate configurations against a Gaussian DP-SGD baseline. A configuration is retained only if it achieves both (i) lower moment values at every order and (ii) lower distortion than its Gaussian counterpart.

\begin{theorem}[The \textsf{PLRV-O} Search Space]
Given a training dataset \( \mathcal{D} \), task configuration \( \mathbf{x} = (E, B, C) \), and privacy parameters \((\epsilon, \delta)\), the \textsf{PLRV-O} search space \( SS(\mathbf{x}, \epsilon, \delta) \) is defined as:
\begin{equation*}
SS(\mathbf{x}, \epsilon, \delta) = \left\{ \boldsymbol{\Theta}(\mathbf{x}) \ \bigg| \ \mathbb{I}_{DP}(\boldsymbol{\Theta}(\mathbf{x}) , \eta, \epsilon, \delta) \land \mathbb{I}_{Dist}(\boldsymbol{\Theta}(\mathbf{x}), \epsilon, \delta) = 1 \right\}
\end{equation*}
where the indicators enforce per-moment privacy bounds and distortion constraints:
\begin{align*}
&\mathbb{I}_{DP}(\boldsymbol{\Theta}, \eta, \epsilon, \delta) = 
\begin{cases}
1, & \forall \eta \in \mathbb{N} : \mathcal{G}(u(\boldsymbol{\Theta}(\mathbf{x}) ), \eta) \leq \exp\left(\frac{\eta^2 - \eta}{2\sigma^2(\mathbf{x}, \epsilon, \delta)}\right) \\
0, & \text{otherwise}
\end{cases} \\
&\mathbb{I}_{Dist}(\boldsymbol{\Theta}, \mathbf{x}, \epsilon, \delta) = 
\begin{cases}
1, & 
    \int_{0}^{\infty} \mathcal{M}_{u(\boldsymbol{\Theta})}(-z) \, dz 
    \leq 
    \sqrt{\frac{2}{\pi}} C\sigma(\mathbf{x}, \epsilon, \delta) \\
0, & \text{otherwise}
\end{cases}.
\end{align*}
\end{theorem}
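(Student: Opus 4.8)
The statement is best read as asserting two properties of any admissible configuration $\boldsymbol{\Theta}(\mathbf{x})\in SS(\mathbf{x},\epsilon,\delta)$: (i) the corresponding $\Gamma$-PLRV mechanism, run for the task $\mathbf{x}=(E,B,C)$ over $T=\lceil EN/B\rceil$ subsampled rounds with sampling rate $\zeta$, is $(\epsilon,\delta)$-DP; and (ii) its expected $\ell_1$-distortion does not exceed that of the Gaussian DP-SGD baseline with noise multiplier $\sigma(\mathbf{x},\epsilon,\delta)$. The plan is to obtain each property from a single monotone comparison against the Gaussian reference — per-moment-order for privacy, pointwise-in-$z$ for distortion — so that the two indicators $\mathbb{I}_{DP}$ and $\mathbb{I}_{Dist}$ in the definition of $SS$ are exactly the hypotheses needed and nothing extra has to be verified.

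For the privacy property I would first fix $\delta$ and take $\sigma=\sigma(\mathbf{x},\epsilon,\delta)$ to be the Gaussian noise multiplier achieving budget $\epsilon$, i.e.\ the feasible value for which the tight conversion formula \eqref{eq:MAF:Gauss:Balle} applied to \eqref{A_alpha} returns at most $\epsilon$; this is well defined since $\alpha_{\text{Gauss}}(\lambda)$, and hence the accounted budget, is strictly monotone in $\sigma$. Next, invoke Theorem~\ref{the:plrvmulti} to bound the per-round MAF of the $\Gamma$-PLRV mechanism by $\sum_{i=1}^n\log\!\big[\sum_{\eta=0}^{\lambda+1} w_\eta(\zeta,\lambda)\,\mathcal{G}(x_i,\eta)\big]$, where $w_\eta(\zeta,\lambda)=\binom{\lambda+1}{\eta}(1-\zeta)^{\lambda+1-\eta}\zeta^{\eta}\ge 0$ are the same binomial weights appearing in the Gaussian bound \eqref{A_alpha}, and $x_i=C(\sqrt{i}-\sqrt{i-1})$ is the majorization set of Lemma~\ref{lem:majorset}. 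The indicator $\mathbb{I}_{DP}$ gives $\mathcal{G}(u(\boldsymbol{\Theta}),\eta)\le\exp\!\big(\tfrac{\eta^2-\eta}{2\sigma^2}\big)$ for every order $\eta$; since $x_i\le x_1=C$ and the majorization/Schur-convexity machinery of Theorem~\ref{schurmaf} controls the coordinatewise contributions, this upgrades to $\mathcal{G}(x_i,\eta)\le\exp\!\big(\tfrac{\eta^2-\eta}{2\sigma^2}\big)$, whence term-by-term domination of the bracketed sums (the weights being nonnegative and shared), and then, composing over the $T$ rounds via \eqref{composmaf}, $\alpha^{\text{PLRV}}(\lambda)\le\alpha_{\text{Gauss}}(\lambda)$ for all $\lambda$. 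Finally, because the right-hand side of \eqref{eq:MAF:Gauss:Balle} is nondecreasing in $\alpha(\lambda)$ at each fixed $\lambda$ and holds for an arbitrary mechanism, taking the minimum over $\lambda$ preserves the inequality, so the PLRV budget is at most $\epsilon(\delta)\le\epsilon$, i.e.\ the mechanism is $(\epsilon,\delta)$-DP.

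The distortion property is then immediate from Theorem~\ref{error_PLRV}: the expected $\ell_1$-error of $\Gamma$-PLRV noise with reciprocal scale $u$ is $\int_0^{\infty}\mathcal{M}_u(-z)\,dz$, while clipped Gaussian noise at multiplier $\sigma(\mathbf{x},\epsilon,\delta)$ has expected $\ell_1$-error $\sqrt{2/\pi}\,C\sigma(\mathbf{x},\epsilon,\delta)$ (the half-normal mean at standard deviation $C\sigma$). Thus $\mathbb{I}_{Dist}$ is precisely the inequality $\ell_1(\text{PLRV})\le\ell_1(\text{Gauss})$, so every element of $SS$ is, by construction, at least as accurate as the Gaussian baseline at the same privacy target.

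I expect the main obstacle to be making the privacy comparison genuinely correct rather than merely formal. The Gaussian reference collapses the $n$ coordinates into a single whole-vector moment term $\exp(\tfrac{\eta^2-\eta}{2\sigma^2})$ by rotation invariance, whereas the PLRV bound of Theorem~\ref{the:plrvmulti} is an honest sum over $n$ majorized coordinates; to avoid the spurious factor-of-$n$ inflation this mismatch would otherwise produce, the reference term in $\mathbb{I}_{DP}$ must be read at the correct per-coordinate scale — equivalently, $\mathcal{G}(x_i,\eta)$ should be compared against the Gaussian per-coordinate moment at sensitivity $x_i$ — and one must check that the recombined coordinatewise bounds sum to at most the whole-vector Gaussian value. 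Two minor technical points remain to be confirmed: (a) the boundary orders $\eta=0,1$, where $\mathcal{G}(C,\eta)=1=\exp(0)$, so the hypothesis holds with equality and those terms contribute nothing; and (b) that $\sigma(\mathbf{x},\epsilon,\delta)$ is uniquely pinned down, which follows from strict monotonicity of the moments accountant in $\sigma$ and makes the definition of $SS$ unambiguous.
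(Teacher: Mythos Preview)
The paper treats this ``theorem'' purely as a \emph{definition} of the search space: it introduces $SS(\mathbf{x},\epsilon,\delta)$ as the set of configurations passing two filter conditions, and offers no proof because none is intended. The surrounding text makes this explicit---``A configuration is retained only if it achieves both (i) lower moment values at every order and (ii) lower distortion than its Gaussian counterpart''---and the actual privacy guarantee for a selected $\boldsymbol{\Theta}$ is established separately via the accountant \textsc{MAF\_PLRV-O} (constraint $c_2$ in Algorithm~\ref{alg:optim}), not by membership in $SS$. So you are attempting to prove something the paper neither claims nor proves.

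That said, your attempted proof of the DP property contains a genuine gap that you yourself flag but do not close. The per-round Gaussian bound \eqref{A_alpha} is a \emph{single} log-sum term for the whole vector, whereas the PLRV bound of Theorem~\ref{the:plrvmulti} is a \emph{sum of $n$} coordinatewise log-sum terms over the majorization set $\{x_i\}$. Bounding each $\mathcal{G}(x_i,\eta)\le\exp\!\big(\tfrac{\eta^2-\eta}{2\sigma^2}\big)$ and summing yields $n$ copies of the Gaussian term, i.e.\ $\alpha^{\mathrm{PLRV}}(\lambda)\le n\cdot\alpha^{\mathrm{Gauss}}(\lambda)$, not $\alpha^{\mathrm{PLRV}}(\lambda)\le\alpha^{\mathrm{Gauss}}(\lambda)$. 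Your suggested fix---reading the reference at ``the correct per-coordinate scale'' so that the recombined pieces sum back to the whole-vector Gaussian value---would require showing $\sum_{i=1}^n\log\big[\sum_\eta w_\eta\exp\!\big(\tfrac{\eta^2-\eta}{2\sigma^2}\cdot\tfrac{x_i^2}{C^2}\big)\big]\le\log\big[\sum_\eta w_\eta\exp\!\big(\tfrac{\eta^2-\eta}{2\sigma^2}\big)\big]$, which is neither stated in $\mathbb{I}_{DP}$ nor obviously true. The indicator as written compares $\mathcal{G}$ at the full clip $C$ to the whole-vector Gaussian moment, and that comparison simply does not control the $n$-fold sum. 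In the paper's design this is not a defect: $SS$ is a heuristic pruning of candidates, and the genuine $(\epsilon,\delta)$ check is deferred to the numerical accountant.
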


In the case of \(\Gamma(k, \theta)\)-PLRV noise, the feasible region can be characterized analytically. The following corollary provides closed-form bounds for \(k\), conditioned on \(C\), \(\theta\), and the maximum moment order \(\lambda_{max}\).

\begin{corollary}[The \(\Gamma\)-PLRV Search Space]
\label{cor:ss}
Given task configuration \( \mathbf{x} = (E, B, C) \) and privacy parameters \((\epsilon, \delta)\), the valid \(\Gamma(k, \theta)\)-PLRV configurations satisfy:
\[
k_1(C, \theta, \eta) \leq k \leq k_2(C, \theta, \eta), \quad \forall \eta \leq \lambda_{max} + 1,
\]
subject to \( \theta < \frac{1}{C(\lambda_{max} + 1)} \) and \( k > 1 \), where:
\[
k_{1,2}(C, \theta, \eta) = \frac{4 \log(1 - C \theta (\eta - 1)) \mp \sqrt{\Delta}}{2 C^2 \theta^2 (\eta - \eta^2)},
\]
and
$
\Delta = 16 \log^2(C \theta - C \eta \theta + 1) + 11.09375 \cdot C^2 \theta^2 (\eta - \eta^2).
$
\end{corollary}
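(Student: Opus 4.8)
The plan is to unwind the two membership conditions of the search-space theorem and specialize them to a Gamma seed. A pair $\boldsymbol{\Theta}=(k,\theta)$ lies in $SS(\mathbf{x},\epsilon,\delta)$ precisely when (i) the per-moment privacy domination $\mathcal{G}(u(\boldsymbol{\Theta}),\eta)\le\exp\!\big(\tfrac{\eta^{2}-\eta}{2\sigma^{2}}\big)$ holds for every order $\eta\le\lambda_{max}+1$, and (ii) the distortion domination $\int_{0}^{\infty}\mathcal{M}_{u}(-z)\,dz\le\sqrt{2/\pi}\,C\sigma$ holds. First I would substitute the Gamma moment generating function $\mathcal{M}_{u}(t)=(1-t\theta)^{-k}$, which is finite only for $t<1/\theta$; demanding finiteness of $\mathcal{M}_{u}\!\big((\eta-1)C\big)$ for all admissible $\eta$ forces $(\eta-1)C\theta<1$, and taking the extreme value $\eta=\lambda_{max}+1$ gives the stated side condition $\theta<\tfrac{1}{C(\lambda_{max}+1)}$. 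By Theorem~\ref{error_PLRV}, $\mathcal{M}_{u}(-z)=(1+z\theta)^{-k}$ integrates to $\tfrac{1}{(k-1)\theta}$ exactly when $k>1$, so (ii) is equivalent to $k>1$ together with the explicit lower bound $k\ge 1+\sqrt{\pi/2}\,/(C\theta\sigma)$.

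The heart of the argument is reducing condition (i) to a quadratic inequality in $k$. Writing $a=(\eta-1)C\theta$ and $b=\eta C\theta$, and recalling that the weights $\tfrac{\eta}{2\eta-1},\tfrac{\eta-1}{2\eta-1}$ sum to $1$, it reads
\[
\tfrac{\eta}{2\eta-1}(1-a)^{-k}+\tfrac{\eta-1}{2\eta-1}(1+b)^{-k}\ \le\ \exp\!\Big(\tfrac{\eta^{2}-\eta}{2\sigma^{2}}\Big).
\]
I would bound the left-hand side above by a degree-two polynomial in $k$, expanding $(1-a)^{-k}=e^{-k\log(1-a)}$ and $(1+b)^{-k}=e^{-k\log(1+b)}$ via $e^{x}\le 1+x+x^{2}$ on the exponents, and using the identity $\eta\,a-(\eta-1)\,b=\eta(\eta-1)C\theta-(\eta-1)\eta C\theta=0$, which makes the weighted first-order term collapse so that the surviving coefficients are governed by $C^{2}\theta^{2}\eta(\eta-1)$ and by $\log\!\big(1-(\eta-1)C\theta\big)$. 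Combined with $e^{x}\ge 1+x$ on the right-hand side, this produces a sufficient condition of the shape $Ak^{2}+Bk+C_{0}\le 0$ whose coefficients are, up to the sign convention of the statement, the quadratic-formula denominator $2C^{2}\theta^{2}(\eta-\eta^{2})$ and the term $4\log\!\big(1-(\eta-1)C\theta\big)$, with $C_{0}$ absorbing the remaining numerical factors. Solving by the quadratic formula, feasibility at order $\eta$ holds exactly when $k$ lies between the two roots $k_{1,2}(C,\theta,\eta)$, the expression under the radical being $\Delta$; the factor $16$ and the constant $\approx 11.09$ are precisely the assembled coefficients (the $B^{2}$ and $-4AC_{0}$ pieces of the discriminant). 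Finally I would intersect this interval over all $\eta\le\lambda_{max}+1$ and with $\{k>1\}$ and the distortion bound of the previous step, noting that the last two are subsumed in the stated hypotheses, which leaves $k_{1}(C,\theta,\eta)\le k\le k_{2}(C,\theta,\eta)$.

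The main obstacle is making the quadratic reduction genuinely rigorous rather than a formal Taylor manipulation: one needs a clean \emph{upper} bound for $\mathcal{G}(C,\eta)=\tfrac{\eta}{2\eta-1}(1-a)^{-k}+\tfrac{\eta-1}{2\eta-1}(1+b)^{-k}$ so that the quadratic inequality is sufficient for the privacy domination, and a remainder estimate that is \emph{uniform} over $\eta\le\lambda_{max}+1$ and over the admissible band of $C\theta$ cut out by $\theta<1/(C(\lambda_{max}+1))$. The delicate point is that after the cancellation the leading surviving contribution is itself $O(C^{2}\theta^{2})$, so the cubic-and-higher tail of the exponential must be shown to be dominated by (a constant multiple of) that quadratic term on the whole feasible region — this is exactly where $k>1$ and the smallness of $C\theta$ get used — and one must also check that $\Delta\ge 0$ on the relevant range of $\eta$, so that $k_{1}$ and $k_{2}$ are real with $k_{1}\le k_{2}$. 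Everything else — plugging in the Gamma MGF, collecting coefficients, and applying the quadratic formula — is routine bookkeeping.
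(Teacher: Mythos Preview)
Your setup for the side constraints is right: the domain of $\mathcal{M}_u(t)=(1-t\theta)^{-k}$ forces $\theta<1/(C(\lambda_{\max}+1))$, and Theorem~\ref{error_PLRV} gives distortion $1/((k-1)\theta)$ with finiteness forcing $k>1$. But the central move of the paper's argument is one you do not make, and without it your quadratic cannot recover the stated coefficients.

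The paper does \emph{not} keep $\sigma$ as a free parameter in the privacy indicator. Instead it uses the distortion indicator $\mathbb{I}_{Dist}$ to obtain
\[
\frac{1}{\sigma}\ \le\ \sqrt{\tfrac{2}{\pi}}\,C(k-1)\theta,
\]
and then substitutes this directly into the right-hand side of $\mathbb{I}_{DP}$, replacing $\exp\!\big(\tfrac{\eta^{2}-\eta}{2\sigma^{2}}\big)$ by the $\sigma$-free bound $\exp\!\big(\tfrac{\eta^{2}-\eta}{\pi}\,C^{2}(k-1)^{2}\theta^{2}\big)$. Only after this elimination does one arrive at an inequality in $(k,\theta,C,\eta)$ alone, and only then can the roots $k_{1,2}(C,\theta,\eta)$ and the discriminant $\Delta$ be free of $\sigma$ as stated. (The paper is explicit that this substitution enlarges the region, since the inequality direction means one is relaxing the constraint; the corollary is a \emph{search space}, not the exact feasibility set.)

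In your approach, by contrast, the lower bound $e^{x}\ge 1+x$ on the right-hand side leaves $\tfrac{\eta^{2}-\eta}{2\sigma^{2}}$ sitting inside the constant term $C_{0}$ of your quadratic $Ak^{2}+Bk+C_{0}\le 0$. That term cannot disappear from the discriminant $B^{2}-4AC_{0}$, so your $\Delta$ would necessarily depend on $\sigma$, contradicting the stated formula. The weighted first-order cancellation $\eta a-(\eta-1)b=0$ you identified is correct and relevant, but it acts only on the left-hand side; it cannot remove $\sigma$ from the right. Once you perform the $\sigma$-elimination as above, the resulting inequality has $k$ on \emph{both} sides (through $(k-1)^{2}$ on the right), and reducing it to the displayed quadratic with the specific constants is then an algebraic exercise; your Taylor-expansion idea, and the uniformity worry you raise, are largely beside the point for the route the paper takes.
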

\begin{proof}
We have $\theta<\frac{1}{C (\lambda+1)}$ to ensure existence of MGF for the $\Gamma$-PDF.
For the \(\Gamma\)-PLRV mechanism, the indicator functions specialize to
\begin{align*}
&\mathbb{I}_{DP}(k, \theta, \eta, \epsilon, \delta) = 
\begin{cases}
1, & \forall \eta \in \mathbb{N} : \mathcal{G}(C, k, \theta, \eta) \leq \exp\left(\frac{\eta^2 - \eta}{2\sigma^2(\mathbf{x}, \epsilon, \delta)}\right) \\
0, & \text{otherwise}
\end{cases}
\\
&\mathbb{I}_{Dist}(k, \theta, \mathbf{x}, \epsilon, \delta) = 
\begin{cases}
1, &  
\frac{1}{(k - 1)\theta} 
\leq 
\sqrt{\frac{2}{\pi}} C \sigma(\mathbf{x}, \epsilon, \delta)
\\
0, & \text{otherwise}
\end{cases}
\end{align*}
where 
\begin{align*}
    \mathcal{G}(C, k, \theta, \eta) &= 
\frac{(1 - (\eta - 1) C \theta)^{-k} + (1 + \eta C \theta)^{-k}}{2} 
\\
&\qquad + \frac{(1 - (\eta - 1) C \theta)^{-k} - (1 + \eta C \theta)^{-k}}{2(2\eta - 1)}.
\end{align*}

Using the bound for $\mathbb{I}_{Dist}$, we have
\begin{align}
    \frac{1}{(k - 1)\theta} 
&\leq 
\sqrt{\frac{2}{\pi}} C \sigma(\mathbf{x}, \epsilon, \delta) \nonumber\\
\implies 
\frac{1}{\sigma} &\leq \sqrt{\frac{2}{\pi}} C (k - 1)\theta. \label{eq:SS:sigma-k-theta-bnd}
\end{align}
Applying \eqref{eq:SS:sigma-k-theta-bnd} in $\mathbb{I}_{DP}(k, \theta, \eta, \epsilon, \delta)$  we have that for every $\eta \in \mathbb{N}$, 
\begin{align*}
    \mathcal{G}(C, k, \theta, \eta) \leq \exp\left(\frac{\eta^2 - \eta}{\pi} C^2 (k - 1)^2\theta^2\right).
\end{align*}
\end{proof}
Strictly speaking, due to the inequality direction in~\eqref{eq:SS:sigma-k-theta-bnd}, this yields a larger region than that characterized for a particular $\sigma$.  
However, this nonetheless yields a well-defined search space purely in terms of $\Gamma$-PLRV parameters $k$ and $\theta$. 

\vspace{0.05in}

\noindent \textbf{Intersecting with $c_0$--$c_4$.}
To ensure practical feasibility, we further intersect this region with the heuristic constraints described earlier:
\begin{itemize}
    \item[\textbf{$c_0$}] Clip bounded: $C_{\min}\le C \le C_{\max}$.
    \item[\textbf{$c_1$}] Gamma tail: $\theta \ge \theta_{\min}(k)$ $\gets$ $\mathrm{GammaCDF}(0.1;k,\theta) <<1$.
    \item[\textbf{$c_2$}] (Deferred here) privacy accounting via $\tau(\epsilon)$.
    \item[\textbf{$c_3$}] Stability: $k>1$.
    \item[\textbf{$c_4$}] Distortion cap: $(k-1)\theta > 0.1$.
\end{itemize}

Together, the feasible region is characterized by
\[
\max\!\left\{ \theta_{\min}(k), \tfrac{0.1}{k-1} \right\}
\;\le\;
\theta
\;<\;
\frac{1}{\lambda_{\max}C}, 
\quad
k>1,\;\; C_{\min}\le C \le C_{\max},
\]
with the objective of maximizing $(k{-}1)\theta$ (equivalently minimizing distortion).  
In practice, accounting further constrains the product $(k{-}1)\theta$ by requiring
\[
(k{-}1)\theta \leq \tau(\epsilon).
\]
As illustrated in Figure~\ref{fig:task1_large_acc_qnli}, $\tau(\epsilon)$ grows approximately in proportion to $k\theta$, highlighting the condition under which feasible configurations concentrate. 

This result reduces the search region, allowing solvers to focus on feasible, high-quality configurations.
\begin{figure}[!h]
	\centering
    \vspace{-0.1in}\includegraphics[width=0.85\linewidth, trim=55 190 70 215, clip]{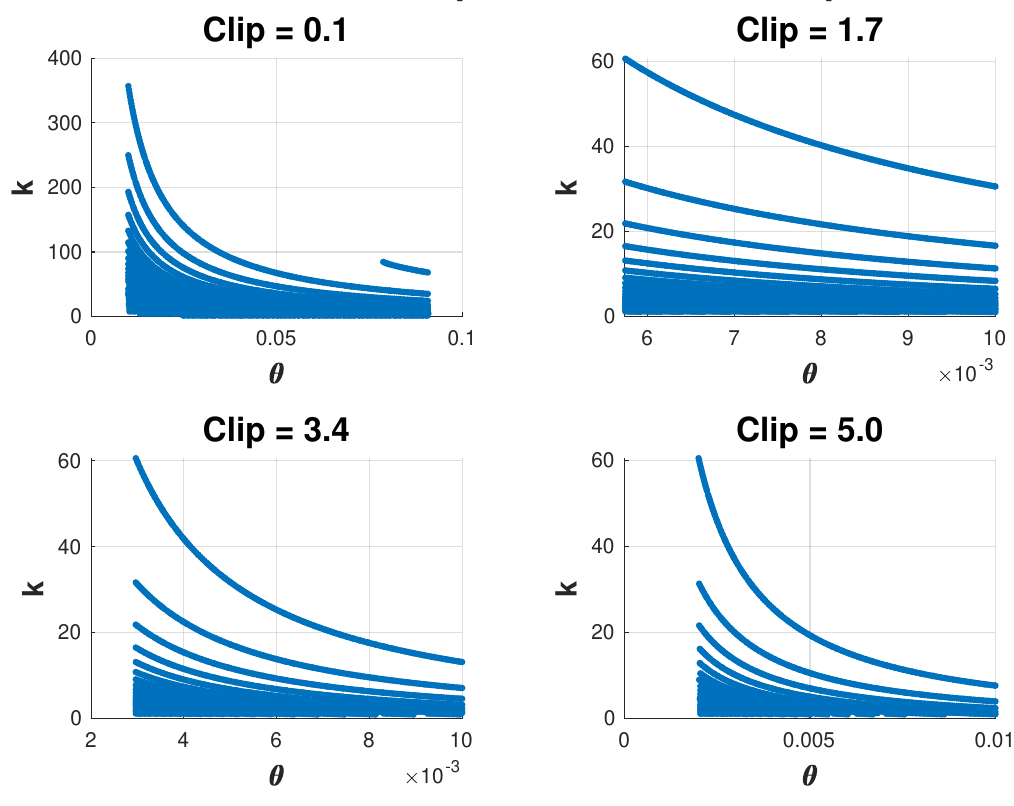}
\caption{$\Gamma$-\textsf{PLRV-O} SS}
\label{fig_gammPLRVss}
\end{figure}
\begin{figure}[ht]%[!h]
    \centering
    \includegraphics[width=0.9\linewidth,clip]{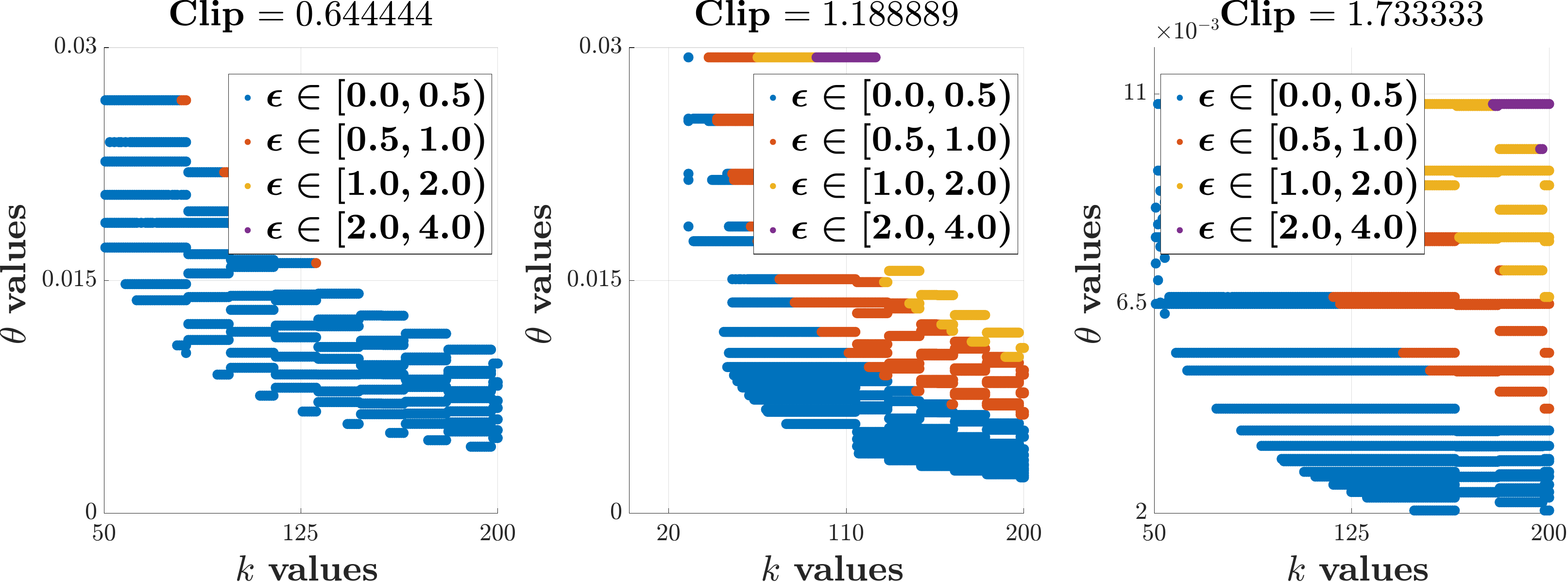}
    \caption{Search space for configurations in an MNLI task.}\vspace{0.1in}
\label{fig:task1_large_acc_qnli}
\end{figure}

Figure~\ref{fig_gammPLRVss}  visualizes this search space for $T,\zeta, n$ required for NLP task MNLI, showing how optimal configurations evolve with varying task size and budget.

\vspace{0.05in}

\noindent\textbf{Nonlinear Solver over the Feasible Search Space.}  
Given the pruned search space defined by constraints $c_0$–$c_4$, we optimize the SNR objective $J(k,\theta,C)=C(k{-}1)\theta$ using \texttt{fmincon}, MATLAB’s solver for constrained nonlinear minimization~\cite{mathworks_fmincon}. The feasible set is specified via bound constraints on $(k,\theta,C)$, together with nonlinear constraints encoding the Gamma CDF cutoff, distortion threshold, and MGF validity. \texttt{fmincon} flexibly handles such problems and provides multiple backends, including the interior-point method~\cite{byrd1999interior}, sequential quadratic programming (SQP)~\cite{boggs1995sequential}, and trust-region reflective methods~\cite{byrd2000trust}. This allows us to directly search within $SS(\mathbf{x}, \epsilon, \delta)$ and obtain $(k^*,\theta^*,C^*)$.

\vspace{0.05in}

\noindent\textbf{Gradient Boosting for Parameter Prediction.}  
To generalize \textsf{PLRV-O} configuration beyond explicitly optimized settings, one idea to explore is a regression-based prediction layer. For instance, training a Gradient Boosting Regression model, an ensemble method that incrementally improves predictions by fitting decision tree learners to residual errors. At iteration $m$, the predictor is updated as\(
F_m(x) = F_{m-1}(x) + \eta h_m(x),
\)
where $h_m(x)$ is the weak learner and $\eta$ is the learning rate. The model takes $(k,\epsilon,C,T,\zeta)$ as inputs and predicts the corresponding noise parameter $\theta$. This predictive layer captures nonlinear dependencies between training/privacy parameters and the optimal $\theta$, and provides feature-importance scores that highlight which inputs most strongly influence the choice of noise parameters.

Together, the nonlinear solver and boosting predictor form a hybrid system: \texttt{fmincon} yields high-fidelity solutions within a constrained region, while gradient boosting extends coverage to unseen settings by learning patterns in previously optimized configurations.

\vspace{-0.07in}
\begin{algorithm}[ht]
\caption{\textsf{PLRV-O} DP-SGD $\varphi_2$}
\label{alg:dp-adam}
\begin{algorithmic}[1]
\Require
  Data set $d$ with size $|d|$, Initial model parameters $\Phi_0$, Loss $\mathcal{L}(\Phi)$, Learning rate $lr$, Clipping threshold $C$, Number of epochs $E$, Batch size $B$, \textsf{PLRV-O} Parameterization $\varphi_{1}$, Privacy budget $(\epsilon^*,\delta^*)$, Population $N$, \textsf{PLRV-O} noise generator $\mathcal{P}$
\Ensure Model parameters $\Phi_{T}$
\Statex \vspace{0.5ex}\textbf{// Initialization}
\State Load model parameters $\Phi_{0}$
\State Sampling rate $\zeta \gets B/|d|$
\State Number of iterations $T \gets \lceil E/q \rceil$
\State $(k^*,\theta^*,\widehat{\epsilon}\,) \gets \varphi_{1}(\epsilon^*, \tau, T, \zeta, C, N, \delta^*)$
\Statex \vspace{0.5ex}\textbf{// Main loop}
\For{$t = 1$ \textbf{to} $T$}
  \State \parbox[t]{0.85\linewidth}{%
      Draw a batch $\mathcal{B}_{t}$ uniformly at random,
      where each element is included independently with probability $\zeta$.
    }
  \For{$\mathbf{x}_{i} \in \mathcal{B}_{t}$}
    \State $\mathbf{g}_{t}(\mathbf{x}_{i}) \gets \nabla_{\Phi}\,\mathcal{L}(\mathbf{x}_{i})$
    \State $\bar{\mathbf{g}}_{t}(\mathbf{x}_{i}) \gets 
      \mathbf{g}_{t}(\mathbf{x}_{i}) \cdot 
      \min\!\left(1, \dfrac{C}{\lVert \mathbf{g}_{t}(\mathbf{x}_{i}) \rVert_{2}}\right)$
  \EndFor
  \State $z \sim \mathcal{P}(k^*,\theta^*)$ \Comment{draw \textsf{PLRV-O} noise}
  \State $\bar{\mathbf{g}}_{t} \gets \dfrac{1}{B}\!\sum_{\mathbf{x}_{i}\in\mathcal{B}_{t}}\! \bar{\mathbf{g}}_{t}(\mathbf{x}_{i}) \;+\; z$
  \State $\bar{\mathbf{g}}_{t}+1 \gets \text{Optimizer}(\bar{\mathbf{g}}_{t})$ \Comment{Where Optimizer is a DP-ready optimizer, such as AdamUpdate from~\cite{li2021large}}
\EndFor
\State \Return $\Phi_{T}$
\end{algorithmic}

\end{algorithm}
\vspace{5pt}

\noindent\textbf{\textsf{PLRV-O} and Neural Collapse.} 
Recent work~\cite{pmlr-v235-wang24cu} highlights that DP-SGD training can induce \textit{neural collapse}—a phenomenon where class representations become degenerate—particularly under strong privacy budgets. This occurs when DP noise overwhelms inter-class variation, flattening feature geometry. Since \textsf{PLRV-O} supports adaptive noise shaping via parameters \(k\), \(\theta\), and \(C\), it enables more controlled distortion and finer-grained modulation of the injected noise. By minimizing expected \(\ell_1\) distortion, \textsf{PLRV-O} better preserves representational diversity during training, reducing the risk of collapse in the latent space and sustaining learning signal even in high-dimensional settings.

\subsection{Auditing DP-SGD}
\label{appendix:audit}

We consider two adversary models: a worst-case \emph{theoretical DP adversary} and a practical \emph{empirical auditing adversary}.

The Theoretical DP  adversary has unbounded computation, full access to mechanism outputs, and arbitrary auxiliary knowledge. Privacy is formalized by $(\epsilon,\delta)$-DP, which bounds the effect of any single record on the output distribution.

Real deployments face constrained attackers. We used DP auditing tools that execute practical tests (e.g., canary-based inference) under limited queries or partial distributional knowledge, estimating empirical privacy via hypothesis testing (e.g., Type-II error–based criteria) while tracking the utility–noise trade-off.

\vspace{0.5em}
\noindent\textbf{ClipBKD.} For ClipBKD, we run $T{=}500$ trials on CV tasks and $T{=}10$ on NLP tasks with confidence level $0.01$ (99\% confidence), and report the best result using $k{=}1$ poisoning point. For context, we also report (i) the best theoretical upper bound on $\epsilon$ and (ii) $\epsilon_{\text{OPT}}(500,0.01)$, the best achievable lower bound $\epsilon_{\text{LB}}$ under $T$ trials at confidence $0.01$.

\section{Additional Information}
\label{exp_appendix:CV}

\subsection{Experimental Setting}
Our experiments were conducted on the following servers:

\begin{itemize}
  \item AMD Ryzen Threadripper PRO 5975WX 32-Core CPU, 500~GB RAM, and 3$\times$NVIDIA Quadro RTX A6000 48GB GPUs.
  \item AMD EPYC 9354 32-Core Processor (128 cores), 1.5~TB RAM, and 2$\times$NVIDIA H100 PCIe 80GB GPUs.
\end{itemize}

\subsection{Additional Results}
Tables~\ref{tab:cnn_mnist_results}-\ref{tab:cnn_fmnist_results} present additional evaluations of the performance of the \textsc{PLRV-O} framework on CNN-MNIST.

\begin{table}[H]
\centering
\vspace{0.15in}
\caption{
\textbf{PLRV-O} performance on CNN-MNIST in high privacy regimes $\epsilon \leq 1.6$. Fixed parameters: $\delta = 10^{-5}$, $q = 0.01$, steps = 300. Gaussian baseline included and non-private (NP) accuracy range 96-98\%. \textbf{PLRV-O} accuracy was taken over 5 trials and averaged.
}
\begin{tabular}{|c|c|c|c|c|c|}
\hline
\textbf{$\epsilon$} & \textbf{Clip} & \textbf{$\theta$} & \textbf{$k$}  & \textbf{Gauss Acc} & \textbf{\textsf{PLRV-O} Acc} \\
\hline
0.065 & 0.1 & 1.00E-05 & 60000 & 66.56\textpm6.37 & \textbf{88.78\textpm1.06}  \\
0.171 & 0.1 & 5.00E-05 & 30000 & 82.20\textpm2.40 & \textbf{93.38\textpm0.90} \\
0.284 & 0.3 & 8.00E-05 & 10000  & 88.53\textpm2.11 & \textbf{89.96\textpm0.95}\\
0.596 & 0.3 & 8.00E-04 & 20000  & 91.44\textpm1.65 & \textbf{93.85\textpm0.74}\\
0.757 & 0.3 & 5.00E-04 & 40000  & 92.26\textpm0.26 & \textbf{94.30\textpm0.51}\\
0.921 & 0.3 & 6.00E-04 & 40000  & 92.65\textpm0.46 & \textbf{94.81\textpm0.13}\\
0.962 & 0.5 & 3.00E-04 & 50000  & 92.87\textpm0.35 & \textbf{93.69\textpm0.56} \\
1.172 & 0.5 & 3.00E-05 & 60000  & 92.78\textpm0.18 & \textbf{94.37\textpm0.65}\\
1.606 & 0.5 & 6.00E-05 & 40000 & 93.73\textpm0.43 & \textbf{94.91\textpm0.43}\\
\hline
\end{tabular}\vspace{0.05in}
\label{tab:cnn_mnist_results}
\end{table}

\begin{table}[H]
\caption{
\textbf{PLRV-O} performance on CNN-Fashion-MNIST in high privacy regimes $\epsilon \leq 1.6$. Fixed parameters: $\delta = 10^{-5}$, $q = 0.01$, steps = 300. Gaussian baseline included and non-private (NP) accuracy range 82-83\%. \textbf{PLRV-O} accuracy was taken over 5 trials and averaged.
}
\begin{tabular}{|c|c|c|c|c|c|}
\hline
\textbf{$\epsilon$} & \textbf{Clip} & \textbf{$\theta$} & \textbf{$k$} & \textbf{Gauss Acc} & \textbf{\textsf{PLRV-O} Acc}\\
\hline
0.065 & 0.1 & 1.00E-05 & 60000 & 59.28\textpm2.79 & \textbf{70.72\textpm0.61}  \\
0.171 & 0.1 & 5.00E-05 & 30000 & 68.19\textpm1.91 & \textbf{73.42\textpm1.49} \\
0.284 & 0.3 & 8.00E-05 & 10000  & 69.77\textpm1.67 & \textbf{72.34\textpm0.82}\\
0.596 & 0.3 & 8.00E-04 & 20000  & 72.34\textpm0.67 & \textbf{73.67\textpm1.29}\\
0.757 & 0.3 & 5.00E-04 & 40000  & 72.64\textpm0.79 & \textbf{74.57\textpm0.56}\\
0.921 & 0.3 & 6.00E-04 & 40000  & 73.57\textpm0.64 & \textbf{75.49\textpm0.82}\\
0.962 & 0.5 & 3.00E-04 & 50000  & 73.24\textpm0.61 & \textbf{74.48\textpm0.57} \\
1.172 & 0.5 & 3.00E-05 & 60000  & 73.04\textpm0.84 & \textbf{74.72\textpm0.90}\\
1.606 & 0.5 & 6.00E-05 & 40000 & 74.25\textpm0.53 & \textbf{76.25\textpm0.73}\\
\hline
\end{tabular}\vspace{-0.15in}
\label{tab:cnn_fmnist_results}
\end{table} 

\end{document}